\let\ifLONG\iffalse
\let\ifSHORT\iftrue
\def\epsilon{\varepsilon}
 \newenvironment{Theorem}{\begin{theorem}\upshape}{\end{theorem}}
 \newenvironment{Lemma}{\begin{lemma}\upshape}{\end{lemma}}
\renewenvironment{proof}{
        \noindent {\bf Proof: }}{ \qed}
\DeclareMathOperator*{\opttwo}{\mathsf{opt}} 
\DeclareMathOperator*{\argopttwo}{\arg\mathsf{opt}} 
\DeclareMathOperator{\unif}{unif} 
\newcommand{\R}{\mathbb{R}} 
\newcommand{\Q}{\mathbb{Q}} 
\newcommand{\error}{{\mathcal E}}
\newcommand{\locations}{{L}}
\newcommand{\ratematrix}{\mathbf{R}}
\newcommand{\probabilitymatrix}{\mathbf{P}}
\newcommand{\opt}{\ensuremath{{\mathsf{opt}}}}
\newcommand{\dist}{\mathit{Dist}}
\newcommand{\paths}{\mathit{Paths}}
\newcommand{\M}{\ensuremath{\mathcal{M}}}
\renewcommand{\S}{\ensuremath{\mathcal{S}}}
\begin{document}
 \sloppy

\title{Efficient Approximation of Optimal Control for Continuous-Time Markov
Games}

\author{John Fearnley\inst{1}, Markus Rabe\inst{2}, Sven Schewe\inst{1}, and Lijun Zhang\inst{3}}
 \institute{%
  ${}^1$Department of Computer Science, University of Liverpool, Liverpool, United Kingdom\\
  ${}^2$Department of Computer Science, Universit\"at des Saarlandes, Saarbr\"ucken, Germany\\
  ${}^3$DTU Informatics, Technical University of Denmark, Lyngby, Denmark} 

\maketitle

\begin{abstract}
We study the time-bounded reachability problem for continuous-time Markov
decision processes (CTMDPs) and games (CTMGs). Existing techniques for this
problem use discretisation techniques to break time into discrete intervals, and
optimal control is approximated for each interval separately. Current techniques
provide an accuracy of $O(\epsilon^2)$ on each interval, which leads to an
infeasibly large number of intervals. We propose a sequence of approximations
that achieve accuracies of $O(\epsilon^3)$, $O(\epsilon^4)$, and
$O(\epsilon^5)$, that allow us to drastically reduce the number of intervals
that are considered. For CTMDPs, the performance of the resulting algorithms is
comparable to the heuristic approach given by Buckholz and
Schulz~\cite{Buchholz11}, while also being theoretically justified. All of our
results generalise to CTMGs, where our results yield the first practically
implementable algorithms for this problem. We also provide positional strategies
for both players that achieve similar error bounds.
\end{abstract}

\section{Introduction}
Probabilistic models are being used extensively in the formal analysis of
complex systems, including networked, distributed, and most recently, biological
systems. Over the past 15 years, probabilistic model checking for discrete-time
Markov decision processes (MDPs) and continuous-time Markov chains (CTMCs) has
been successfully applied to these rich academic and industrial applications
\cite{cadp,CosteHLS09,HenzingerMW09,esa}. However, the theory for
continuous-time Markov decision processes (CTMDPs), which mix the
non-determinism of MDPs with the continuous-time setting of CTMCs, is less well
developed.

This paper studies the \emph{time-bounded reachability} problem for CTMDPs and
their extension to continuous-time Markov games, which is a model with both
helpful and hostile non-determinism. This problem is of paramount importance for
model checking applications~\cite{Buchholz+Hahn+Hermanns+Zhang/11/MC_CTMDPs}.
The non-determinism in the system is resolved by providing a scheduler. The
time-bounded reachability problem is to determine or to approximate, for a given
set of goal locations~$G$ and time bound~$T$, the maximal (or minimal)
probability of reaching~$G$ before the deadline~$T$ that can be achieved by a
scheduler.

Early work on this problem focused on restricted classes of schedulers, such
schedulers without any access to time in systems with uniform transition
rates~\cite{Baier+all/05/efficientCTMDP}. Recently however, results have been
proved for the more general class of \emph{late
schedulers}~\cite{Zhang+Neuh/10/reachabilityCTMDPs}, which will be studied in
this paper. The different classes of schedulers are contrasted by
Neuh\"{a}u{\ss}er et.\ al.\ \cite{NeuhausserDelayedNondeterminism09}, and they
show that late schedulers are the most powerful class.  Several algorithms have
been given to approximate the time-bounded reachability probabilities for CTMDPs
using this scheduler
class~\cite{Buchholz+Hahn+Hermanns+Zhang/11/MC_CTMDPs,ChenHKM10,Zhang+Neuh/10/reachabilityCTMDPs,Zhang+Neuh/10/ModelCheckingIMCs}.

The current state-of-the-art techniques for solving this problem are based on
different forms of \emph{discretisation}. This technique splits the time bound
$T$ into small intervals of length $\epsilon$. Optimal control is 
approximated for each interval separately, and these approximations are combined
to produce the final result. Current techniques can approximate optimal control
on an interval of length $\epsilon$ with an accuracy of $O(\epsilon^2)$.
However, to achieve a precision of $\pi$ with these techniques, one must choose
$\epsilon \approx \pi/T$, which leads to $O(T^2/\pi)$ many intervals. Since the
desired precision is often high (it is common to require that $\pi\leq10^{-6}$),
this leads to an infeasibly  large number of intervals that must be considered
by the algorithms.

A recent paper of Buckholz and Schulz~\cite{Buchholz11} has addressed this
problem for practical applications, by allowing the interval sizes to vary. In
addition to computing an approximation of the maximal time-bounded reachability
probability, which provides a lower bound on the optimum, they also compute an
upper bound. As long as the upper and lower bounds do not diverge too far, the
interval can be extended indefinitely. In practical applications, where the
optimal choice of action changes infrequently, this idea allows their algorithm
to consider far fewer intervals while still maintaining high precision. However,
from a theoretical perspective, their algorithm  is not particularly satisfying.
Their method for extending interval lengths depends on a heuristic, and in the
worst case their algorithm may consider  $O(T^2/\pi)$ intervals, which is
not better than other discretisation based techniques.

\paragraph{\bf Our contribution.} In this paper we present a method of obtaining larger interval sizes that
satisfies both theoretical and practical concerns. Our approach is to provide
more precise approximations for each $\epsilon$ length interval. While current
techniques provide an accuracy of $O(\epsilon^2)$, we propose a sequence of
approximations, called double $\epsilon$-nets, triple $\epsilon$-nets, and
quadruple $\epsilon$-nets, with accuracies $O(\epsilon^3)$, $O(\epsilon^4)$, and
$O(\epsilon^5)$, respectively. Since these approximations are much more precise
on each interval, they allow us to consider far fewer intervals while still
maintaining high precision. For example, Table~\ref{tbl:intervals} gives the
number of intervals considered by our algorithms, in the worst case, for a
normed CTMDP with time bound $T = 10$.

\begin{table}
\label{tbl:intervals}
\begin{center}
\begin{tabular}{@{$\;$}c@{$\;$}|@{$\;\;$}c@{$\;\;$}||@{$\;\;$}c@{$\;\;$}|@{$\;\;$}c@{$\;\;$}|@{$\;\;$}c@{$\;\;$}}
Technique & Error & $\pi = 10^{-7}$ & $\pi = 10^{-9}$ & $\pi = 10^{-11}$ \\
\hline
\hline

Current techniques & $O(\epsilon^2)$ &
$1,000,000,000$ & $100,000,000,000$ & $10,000,000,000,000$ \\
\hline

Double $\epsilon$-nets & $O(\epsilon^3)$ &
$81,650$ & $816,497$ & $8,164,966$ \\
\hline

Triple $\epsilon$-nets & $O(\epsilon^4)$ &
$3,219$ & $14,939$ & $69,337$ \\
\hline

Quadruple $\epsilon$-nets & $O(\epsilon^5)$ &
$605$ & $1,911$ & $6,043$ \\

\end{tabular}
\end{center}
\caption{The number of intervals needed by our algorithms for precisions
$10^{-7}, 10^{-9}$, and~$10^{-11}$.}
\end{table}

Of course, in order to become more precise,  we must spend additional
computational effort. However, the cost of using double $\epsilon$-nets
instead of using current techniques requires only an extra factor of $\log
|\Sigma|$, where $\Sigma$ is the set of actions. Thus, in almost all cases, the
large reduction in the number of intervals far outweighs the extra cost of using
double $\epsilon$-nets. Our worst case running times for triple and quadruple
$\epsilon$-nets are not so attractive: triple $\epsilon$-nets require an extra
$|L| \cdot |\Sigma^2|$ factor over double $\epsilon$-nets, where $L$ is the
set of locations, and quadruple $\epsilon$-nets require yet another $|L| \cdot
|\Sigma^2|$ factor over triple $\epsilon$-nets. However, these worst case
running times only occur when the choice of optimal action changes frequently,
and we speculate that the cost of using these algorithms in practice is much
lower than our theoretical worst case bounds. Our experimental results with
triple $\epsilon$-nets support this claim.

An added advantage of our techniques is that they can be applied to
continuous-time Markov games as well as to CTMDPs. Buckholz and Schulz restrict
their analysis to CTMDPs. Therefore, to the best of our knowledge, we present
the first practically implementable approximation algorithms for the
time-bounded reachability problem in CTMGs. Each approximation also provides
positional strategies for both players that achieve similar error bounds.

\section{Preliminaries}\label{sect:prelim}
\begin{definition}
  A continuous-time Markov game (or simply Markov game) is 
  a tuple
$(\locations,\locations_r,\locations_s,\Sigma,\ratematrix,\probabilitymatrix,\nu)$,
consisting of 
a finite set $\locations$ of locations, which is
partitioned into 
locations $\locations_r$ (controlled by a
\emph{reachability} player) and $\locations_s$ (controlled by a
\emph{safety} player),
a finite set $\Sigma$ of actions,
a rate matrix
$\ratematrix: (\locations\times\Sigma\times\locations) \to
\Q_{\geqslant0}$,
a discrete transition matrix $\probabilitymatrix:
(\locations\times\Sigma\times\locations) \to \Q\cap
[0,1]$, and
an initial distribution $\nu\in\dist(\locations)$.
\end{definition}
We require that the following side-conditions hold:
For all locations $l\in\locations$, there must be an action $a\in\Sigma$ such that $\ratematrix(l,a,\locations):=\sum_{l'\in \locations} \ratematrix(l,a,l')>0$, which we call \emph{enabled}. 
We denote the set of enabled actions in $l$ by $\Sigma(l)$. 
For  a location $l$ and actions $a\in \Sigma(l)$, we require for all locations $l'$ that  $\probabilitymatrix(l,a,l') = \frac{\ratematrix(l,a,l')}{\ratematrix(l,a,\locations)}$, and we require $\probabilitymatrix(l,a,l')=0$ for non-enabled actions. 
We define the \emph{size} $|\M|$ of a Markov game as the number of non-zero rates in the rate matrix $\ratematrix$.

A Markov game is called \emph{uniform} with uniform transition rate $\lambda$,
if $\ratematrix(l,a,\locations)=\lambda$ holds for all locations $l$ and enabled
actions~$a\in \Sigma(l)$.  We further call a Markov game \emph{normed}, if its
uniformisation rate is $1$. Note that for normed Markov games we have
$\ratematrix=\probabilitymatrix$. We will present our results for normed Markov
games only. The following lemma states that our algorithms for normed Markov
games can be applied to solve general Markov games.

\begin{lemma}
\label{lem:uniform}
We can adapt an $O(f(\M))$ time algorithm for normed Markov games to solve
general Markov games in time $O(f(\M) + |L|)$.
\end{lemma}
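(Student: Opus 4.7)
The plan is to apply the classical uniformisation construction followed by a time rescaling. First, I would make one pass through the non-zero entries of $\ratematrix$ to compute both the per-pair totals $r_{l,a} = \ratematrix(l, a, \locations)$ and the maximum $\lambda = \max_{l,a \in \Sigma(l)} r_{l,a}$, in $O(|\M| + |L|)$ time, where the $|L|$ term covers per-location initialisation of the accumulators. For every enabled pair $(l, a)$ with $r_{l,a} < \lambda$, I would then augment $\ratematrix(l, a, l)$ by $\lambda - r_{l,a}$, inserting a fictitious self-loop that brings its total outgoing rate up to $\lambda$. Since only enabled pairs are touched and each enabled pair creates at most one new non-zero entry, the resulting uniform game has size at most $2|\M|$. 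Finally, dividing every rate by $\lambda$ and replacing the time bound $T$ by $\lambda T$ produces a normed game on which the assumed $O(f(\M))$ algorithm can be invoked, giving the overall bound $O(f(\M) + |L|)$.

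The only real obligation is to verify that these two transformations preserve the value of the time-bounded reachability game under late schedulers. Time rescaling is clearly harmless, being a mere change of units. For uniformisation I would invoke the standard race-of-exponentials identity: if the exponential clock of rate $\lambda$ is decorated with a Bernoulli choice between the original discrete distribution $\probabilitymatrix(l, a, \cdot)$ (with probability $r_{l,a}/\lambda$) and a self-loop at $l$ (with probability $1 - r_{l,a}/\lambda$), then the time until the first non-self-loop transition is exponentially distributed with rate $r_{l,a}$ and the next location is drawn exactly from $\probabilitymatrix(l, a, \cdot)$. Because self-loops leave the location unchanged and reveal no information beyond what a late scheduler could already infer from the sequence of non-trivial state changes, schedulers on the uniformised game and on the original one simulate each other with identical induced distributions over reachability events. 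This scheduler-equivalence argument for late strategies on games (as opposed to purely for CTMCs or MDPs, where it is textbook) is the main, though routine, point to check carefully; the arithmetic and the complexity bookkeeping are then immediate.
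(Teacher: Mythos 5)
Your proposal is correct and follows essentially the same route as the paper: uniformise the game by padding self-loops up to the maximal exit rate, then normalise by dividing rates by $\lambda$ and stretching the time bound to $\lambda T$, with the same $O(|\M|+|L|)$ bookkeeping. The one point you flag as "routine but to be checked" --- that uniformisation preserves optimal values and schedulers for late schedulers in games --- is exactly the fact the paper does not reprove but imports from prior work (Rabe and Schewe), observing that the Bellman equation~\eqref{eq:diffeqAlt} never references the self-loop rate; your race-of-exponentials coupling is an acceptable alternative justification of the same fact.
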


We are particularly interested in Markov games with a single player, which are continuous-time Markov decision processes (CTMDPs).
In CTMDPs all positions belong to the reachability player
($\locations=\locations_r$), or to the safety player
($\locations=\locations_s$), depending on whether we analyse the
\emph{maximum} or \emph{minimum} reachability probability problem.

\begin{figure}[t]
\begin{center}
\begin{picture}(55,25)(-1,-13)
 \gasset{ELdist=0.4}
 \node[Nmarks=i,iangle=180](s0)(0,0){$l_S$}
  \node(bot)(0,20){$\bot$}
  \node[Nh=6,Nmr=1](s1)(25,20){$l_R$}
  \node[Nmarks=r](s2)(50,0){$G$} 
  \node(s3)(50,20){$l$}
 \drawedge(s0,s2){$b,\frac{1}{8}$}
 \drawedge(s0,bot){$b,\frac{7}{8}$}
  \drawedge(s0,s1){$a,1$}
 \drawedge(s3,s2){$\frac{1}{10}$}
  \drawedge[ELside=l](s1,s2){$a,\frac{1}{20}$} 
  \drawedge[ELside=r](s1,bot){$a,\frac{3}{20}$} 
  \drawedge(s1,s3){$b,\frac{1}{5}$}
\end{picture} 
\hspace{.8cm}
\scalebox{0.2}{\includegraphics{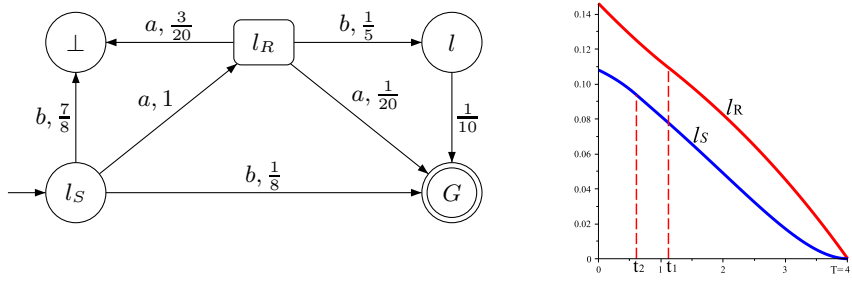}}
\end{center}
\vspace{-.8cm}
\caption{Left: a normed Markov game. Right: the function $f$ within $[0,4]$ for $l_R$ and $l_S$.}
\label{fig:game_example}\vspace{-3.75mm}
\end{figure}

As a running example, we will use the normed Markov game shown in the left half
of Figure~\ref{fig:game_example}. Locations belonging to the safety player are
drawn as circles, and locations belonging to the reachability player are drawn
as rectangles. The self-loops of the normed Markov game are omitted. The
locations $G$ and $\bot$ are absorbing, and there is only a single enabled
action for $l$.  It therefore does not matter which player owns $l$, $G$, and
$\bot$. 

\vspace{-1mm}\subsubsection{Schedulers and Strategies}
We consider Markov games in a time interval $[0,T]$ with $T\in\R_{\geq0}$.
The non-determinism in the system needs to be resolved by a pair of
strategies for the two players which together form a \emph{scheduler}
for the whole system.  
Formally, a strategy is a function in $\paths_{r/s}\times[0,T]\rightarrow
\Sigma$, where $\paths_{r}$ and $\paths_s$ are the sets of finite paths
$l_0\xrightarrow{a_0,t_0}l_1\dots\xrightarrow{a_{n-1},t_{n-1}}l_n$ with $l_n\in
L_r$ and $l_n\in L_s$, respectively, and we use $\S_r$ and $\S_s$ to denote the
strategies of reachability player and the strategies of safety player,
respectively. 
(For technical reasons one has to restrict the schedulers to those which are
measurable. This restriction, however, is of no practical relevance. In
particular, simple piecewise constant timed-positional strategies
$\locations\times[0,T]\to\Sigma$ suffice for optimal
scheduling~\cite{Rabe+Schewe/11/optimalSchedulersExist,Zhang+Neuh/10/reachabilityCTMDPs,Bellman/57/DP},
and all schedulers that occur in this paper are from the particularly tame class
of cylindrical schedulers~\cite{Rabe+Schewe/11/optimalSchedulersExist}.) 

If we fix a pair $(\S_r,\S_s)$ of strategies,  we obtain a deterministic
stochastic process, which is in fact a time inhomogeneous Markov chain, and we
denote it by $\M_{\S_{r,s}}$. For $t\le T$, we use $Pr_{\S_{r+s}}(t)$ to denote
the transient distribution at time $t$ over $S$ under the scheduler $(\S_r,
\S_s)$.

Given a Markov game $\M$, a goal region $G\subseteq \locations$, and
 a time bound $T\in\R_{\geq0}$, we are interested in the
 \emph{optimal} probability of being in a goal state at time $T$ (and the
corresponding pair of optimal strategies). This is given by:\vspace{-.5mm}
 $$\sup_{\S_r\in \text{TP}} \inf_{\S_s\in \text{TP}}
 \sum_{l\in G}Pr_{\S_{r+s}}(l,T),\vspace{-1mm}$$  where
 $Pr_{\S_{r+s}}(l,T):=Pr_{\S_{r+s}}(T)(l)$.  It is commonly referred
 to as the \emph{maximum} time-bounded reachability probability
 problem in the case of CTMDPs with a reachability player only.  For
 $t\le T$, we define $f:\locations\times\R_{\geq0}\to[0,1]$, to be the
 optimal probability to be in the goal region at the time bound $T$,
 assuming that we start in location $l$ and that $t$ time units have
 passed already.  By definition, it holds then that $f(l,T)=1$ if
 $l\in G$ and $f(l,T)=0$ if $l\not\in G$.  Optimising the vector of
 values $f(\cdot,0)$ then yields the optimal value and its \emph{optimal piecewise deterministic strategy}.

Let us return to the example shown in Figure~\ref{fig:game_example}. The right
half of the Figure shows the optimal reachability probabilities, as given by
$f$, for the locations~$l_R$ and~$l_S$ when the time bound $T = 4$. The
points~$t_1 \approx 1.123$ and~$t_2 \approx 0.609$ represent the times at which
the optimal strategies change their decisions. Before~$t_1$ it is optimal for
the reachability player to use action~$b$ at~$l_R$, but afterwards the optimal
choice is action~$a$. Similarly, the safety player uses action~$b$ before~$t_2$,
and switches to~$a$ afterwards.

\vspace{-1mm}\subsubsection{Characterisation of $f$}
We define a matrix $\mathbf{Q}$ such that
$\mathbf{Q}(l,a,l') = \mathbf{R}(l,a,l')$ if $l'\neq l$ and
$\mathbf{Q}(l,a,l) = - \sum_{l'\neq l}\mathbf{R}(l,a,l')$.
The optimal function~$f$ can be characterised as the following set of
differential  equations  \cite{Bellman/57/DP}, see also
\cite{Miller/68/finiteStrategies,Martin67}. For each $l\in L$ we define
$f(l,T) = 1$ if $l\in G$, and $0$ if $l\not\in G$. Otherwise, for $t < T$, we
define:\vspace{-1mm}
\begin{equation}
  \label{eq:diffeq}
-\dot{f}(l,t) = 
\opttwo_{a\in\Sigma(l)}
\sum_{l'\in\locations} \mathbf{Q}(l,a,l') \cdot f(l',t),\vspace{-1mm}
\end{equation}
where $\opt \in \{\max,\min\}$ is $\max$ for reachability player
locations and $\min$ for safety player locations.  We will use the
$\opt$-notation throughout this paper.

Using the matrix $\ratematrix$, Equation \eqref{eq:diffeq} can be rewritten to:
\begin{equation}
  \label{eq:diffeqAlt}
-\dot{f}(l,t) = 
\opttwo_{a\in\Sigma(l)}
\sum_{l'\in\locations} \ratematrix(l,a,l') \cdot \left(f(l',t) -f(l,t)\right)
\end{equation}

For uniform Markov games, we simply have $\mathbf{Q}(l,a,l) =
\mathbf{R}(l,a,l) - \lambda$, with $\lambda = 1$ for normed Markov
games.  This also provides an intuition for the fact that
uniformisation does not alter the reachability probability: the rate
$\ratematrix(l,a,l)$ does not appear in~\eqref{eq:diffeq}.

\section{Approximating Optimal Control for Normed Markov Games}\label{sect:fishing}

In this section we describe $\epsilon$-nets, which are a technique for
approximating optimal values and strategies in a normed continuous-time Markov
game. Thus, throughout the whole section, we fix a normed Markov game
$\M=(\locations, \locations_r, \locations_s, \Sigma, \ratematrix,
\probabilitymatrix, \nu)$. 

Our approach to approximating optimal control within the Markov game is to break
time into intervals of length~$\epsilon$, and to approximate optimal control
separately in each of the $\lceil\frac{T}{\varepsilon}\rceil$ distinct
intervals. Optimal time-bounded reachability probabilities are then computed
iteratively for each interval, starting with the final interval and working
forwards in time. The error made by the approximation in each interval is called
the \emph{step error}. In Section~\ref{sec:steperror} we show that if the step
error in each interval is bounded, then the \emph{global error} made by our
approximations is also bounded.

Our results begin with a simple approximation that finds the optimal action at
the start of each interval, and assumes that this action is optimal for the
duration of the interval. We refer to this as the \emph{single} $\epsilon$-net
technique, and we will discuss this approximation in
Section~\ref{sec:simple_nets}. While it only gives a simple linear function as
an approximation, this technique gives error bounds of $O(\epsilon^2)$, which is
comparable to existing techniques.  

However, single $\epsilon$-nets are only a starting point for our results. Our
main observation is that, if we have a piecewise polynomial approximation of
degree $c$ that achieves an error bound of $O(\epsilon^k)$, then we can compute
a piecewise polynomial approximation of degree $c+1$ that achieves an error
bound of $O(\epsilon^{k+1})$. Thus, starting with single $\epsilon$-nets, we can
construct double $\epsilon$-nets, triple $\epsilon$-nets, and quadruple
$\epsilon$-nets, with each of these approximations becoming increasingly more
precise. The construction of these approximations will be discussed in
Sections~\ref{sec:double_nets} and~\ref{ssec:beyond}.

In addition to providing an approximation of the time-bounded reachability
probabilities, our techniques also provide positional strategies for both
players. For each level of $\epsilon$-net, we will define two approximations:
the function $p_1$ is the approximation for the time-bounded reachability
probability given by single $\epsilon$-nets, and the function $g_1$ gives the
reachability probability obtained by following the positional strategy that is
derived from $p_1$. This notation generalises to deeper levels of
$\epsilon$-nets: the functions $p_2$ and $g_2$ are produced by double
$\epsilon$-nets, and so on.

We will use $\error(k,\epsilon)$ to denote the difference between~$p_k$ and~$f$.
In other words, $\error(k,\epsilon)$ gives the difference between the
approximation~$p_k$ and the true optimal reachability probabilities. We will use
$\error_s(k,\epsilon)$ to denote the difference between~$g_k$ and~$f$. We defer
formal definition of these measures to subsequent sections. Our objective in the
following subsections is to show that the step errors $\error(k,\varepsilon)$
and $\error_s(k,\varepsilon)$ are in $O(\varepsilon^{k+1})$, with small
constants.

\subsection{Step Error and Global Error}
\label{sec:steperror}

In subsequent sections we will prove bounds on the \emph{$\epsilon$-step} error
that is made by our approximations. This is the error that is made by our
approximations in a single interval of length~$\epsilon$. However, in order for
our approximations to be valid, they must provide a bound on the \emph{global}
error, which is the error made by our approximations over every $\epsilon$
interval. In this section, we prove that, if the $\epsilon$-step error of an
approximation is bounded, then the global error of the approximation is bounded
by the sum of these errors.

We define $f:[0,T]\rightarrow [0,1]^{|L|}$ as the vector valued
function $f(t) \mapsto \bigotimes_{l\in\locations}f(l,t)$ that maps each point
of time to a vector of reachability probabilities, with one entry for each
location. Given two such vectors $f(t)$ and $p(t)$, we define the maximum norm $\|f(t)
- p(t)\|= \max\{|f(l,t)-p(l,t)| \mid l \in \locations\}$, which gives the
largest difference between~$f(l, t)$ and~$p(l, t)$.

We also introduce notation that will allow us to define the values at the start
of an $\epsilon$ interval. For each interval $[t-\varepsilon,t]$, we define
$f_{x}^t:[t-\varepsilon,t]\rightarrow [0,1]^{|L|}$ to be the function obtained
from the differential equations~(\ref{eq:diffeq}) when the values at the
time~$t$ are given by the vector $x\in [0,1]^{|L|}$. More formally, if $\tau =
t$ then we define $f_{x}^t(\tau) = x$, and if $t-\varepsilon \le \tau < t$ and
$l\in L$ then we define:
\begin{equation}
  \label{eq:2}
-\dot{f}_{x}^t(l,\tau)=\opttwo_{a\in\Sigma(l)}\sum_{l'\in L} \mathbf{Q}(l,a,l')
f_x^t(l',\tau).
\end{equation}

The following lemma states that if the $\epsilon$-step error is
bounded for every interval, then the global error is simply the sum of these
errors.  

\begin{lemma}
\label{lem:individual}
Let $p$ be an approximation of~$f$ that satisfies $\|f(t) - p(t)\| \leq \mu$ for some
time point $t \in [0, T]$. If $\|f_{p(t)}^t(t-\varepsilon) -
p(t-\varepsilon)\|\leq \nu$ then we have $\|f(t-\varepsilon) -
p(t-\varepsilon)\|\leq \mu+\nu$.
\end{lemma}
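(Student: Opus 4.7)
The plan is to apply the triangle inequality and then exploit a one-step non-expansion property of the Bellman flow defined by Equation~(\ref{eq:2}).

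First I would write
\[
\|f(t-\varepsilon) - p(t-\varepsilon)\|
\;\le\;
\|f(t-\varepsilon) - f_{p(t)}^t(t-\varepsilon)\|
\;+\;
\|f_{p(t)}^t(t-\varepsilon) - p(t-\varepsilon)\|.
\]
The second summand is bounded by $\nu$ by hypothesis, so everything reduces to showing that the first summand is bounded by $\mu$. The key observation is that $f$ restricted to $[t-\varepsilon,t]$ is itself a solution of the differential equation in (\ref{eq:2}) whose value at time $t$ equals $f(t)$; therefore $f(\tau) = f_{f(t)}^t(\tau)$ on that interval, and in particular $f(t-\varepsilon) = f_{f(t)}^t(t-\varepsilon)$. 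Thus the first summand equals $\|f_{f(t)}^t(t-\varepsilon) - f_{p(t)}^t(t-\varepsilon)\|$, i.e.\ the distance at time $t-\varepsilon$ between two solutions of the \emph{same} ODE starting at different values at time $t$.

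The remaining task, which is the real content of the lemma, is to show that this ODE is \emph{non-expansive in the maximum norm}, namely that for any two initial vectors $x,y\in[0,1]^{|L|}$ and any $\tau\in[t-\varepsilon,t]$ we have $\|f_x^t(\tau) - f_y^t(\tau)\| \le \|x-y\|$. Applied with $x=f(t)$ and $y=p(t)$ this gives exactly the bound $\mu$ we need. I would prove non-expansivity via Euler discretisation: rewrite~(\ref{eq:2}) in its equivalent normed form (using $\sum_{l'}\ratematrix(l,a,l')=1$) as
\[
-\dot{f}_x^t(l,\tau) \;=\; -\,f_x^t(l,\tau) \;+\; \opttwo_{a\in\Sigma(l)} \sum_{l'\in L}\ratematrix(l,a,l')\,f_x^t(l',\tau),
\]
so that a backward Euler step of size $\delta$ maps $u \mapsto (1-\delta)\,u(l) + \delta\cdot \opt_{a} \sum_{l'}\ratematrix(l,a,l')\,u(l')$. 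Each such step is a composition of (i) convex combinations with respect to the stochastic row $\ratematrix(l,a,\cdot)$, (ii) the pointwise $\opt$ operator, and (iii) a convex combination with weight $\delta$; all three are non-expansive in the max norm, so each Euler step is non-expansive. Iterating $\lceil\varepsilon/\delta\rceil$ such steps preserves non-expansivity, and letting $\delta\to 0$ yields the required bound for the true flow.

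The main obstacle is the non-expansivity argument itself: it is the one place where the normed assumption is actually used (it makes $\ratematrix(l,a,\cdot)$ stochastic so that the diagonal $-f(l,\tau)$ term exactly balances the off-diagonal mass, turning the update into a genuine convex combination). Once this is established, combining it with the triangle-inequality step above gives $\|f(t-\varepsilon) - p(t-\varepsilon)\| \le \mu + \nu$, completing the proof.
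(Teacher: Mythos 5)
Your proposal is correct, and it follows the same high-level decomposition as the paper: both reduce the claim, via the triangle inequality, to showing that the flow defined by Equation~(\ref{eq:2}) does not increase the max-norm distance between initial vectors, i.e.\ $\|f_{f(t)}^t(t-\varepsilon) - f_{p(t)}^t(t-\varepsilon)\| \le \|f(t)-p(t)\|$. Where you differ is in how this non-expansion property is established. The paper observes that $\sum_{l'\in L}\mathbf{Q}(l,a,l')=0$, so shifting every coordinate of the initial vector by the same constant $c$ shifts the whole solution by exactly $c$; it then passes directly from ``$p(t)$ is within $\mu$ of $f(t)$ componentwise'' to the conclusion. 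Strictly speaking that last step also needs monotonicity of the flow (to sandwich $f^t_{p(t)}$ between $f^t_{f(t)-\mu}$ and $f^t_{f(t)+\mu}$), which the paper leaves implicit. Your Euler-discretisation argument proves full non-expansivity in one go: each step $u \mapsto (1-\delta)u(l) + \delta\cdot\opttwo_{a}\sum_{l'}\ratematrix(l,a,l')u(l')$ is a composition of convex combinations and the $\opt$ operator, all $1$-Lipschitz in the max norm, and the property survives the limit $\delta\to 0$. This is slightly longer (and carries the usual burden of justifying convergence of the Euler scheme to the true flow), but it is self-contained and subsumes both the shift-invariance and the monotonicity that the paper's version relies on; it also makes explicit where the normed assumption enters, exactly as you note.
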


\subsection{Single $\varepsilon$-Nets}
\label{sec:simple_nets}

In single $\epsilon$-nets, we compute the gradient of the function~$f$ at the
end of each interval, and we assume that this gradient remains constant
throughout the interval. This yields a \emph{linear} approximation
function~$p_1$, which achieves a local error of $\varepsilon^2$. 

We now define the function $p_1$. For initialisation, we define $p_1(l,T)=1$ if
$l\in G$ and $p_1(l,T)=0$ otherwise. Then, if~$p_1$ is defined for the interval
$[t, T]$, we will use the following procedure to extend it to the interval $[t -
\epsilon, T]$. We first determine the optimising enabled actions for each
location for $f_{p_1(t)}^t$ at time $t$. That is, we choose, for all $l\in
\locations$ and all $a \in \Sigma(l)$, an action:

\begin{equation}\label{eq:opt_act}
a_l^t \in 
\argopttwo_{a\in\Sigma(l)}
\sum_{l'\in L}\mathbf{Q}(l,a,l') \cdot p_1(l',t).
\end{equation}
We then fix $c_l^t = \sum_{l'\in L}\mathbf{Q}(l,a_l^t,l') \cdot p_1(l',t)$ as
the descent of $p_1(l,\cdot)$ in the interval $[t-\varepsilon,t]$. Therefore,
for every $\tau \in [0, \epsilon]$ and every $l \in \locations$ we have:

\begin{equation*}
-\dot{p}_1(l,t-\tau) = c_l^t \quad \mbox{ and } \quad
p_1(l,t-\tau)=p_1(l,t)+\tau \cdot c_l^t.
\end{equation*}

Let us return to our running example. We will apply the approximation~$p_1$ to
the example shown in Figure~\ref{fig:game_example}. We will set
$\varepsilon=0.1$, and focus on the interval $[1.1,1.2]$ with initial values
$p_1(G,1.2)=1$, $p_1(l,1.2)=0.244$, $p_1(l_R,1.2)=0.107$, $p_1(l_S,1.2)=0.075$,
$p_1(\bot,1.2)=0$. These are close to the true values at time~$1.2$. Note that
the point~$t_1$, which is the time at which the reachability player switches the
action played at~$l_R$, is contained in the interval $[1.1, 1.2]$. Applying
Equation~\eqref{eq:opt_act} with these values allows us to show that the
maximising action at~$l_R$ is~$a$, and the minimising action at $l_S$ is
also~$a$. As a result, we obtain the approximation $p_1(l_R,t-\tau)= 0.0286\tau
+ 0.107$ and $p_1(l_S,t-\tau)= 0.032\tau +  0.075$.

We now prove error bounds for the approximation~$p_1$. Recall that $\error(1,
\tau)$ denotes the difference between $f$ and $p_1$ after $\tau$ time units. We
can now formally define this error, and prove the following bounds.

\begin{lemma}
\label{extheo:single}
If $\varepsilon \leq 1$, then $\error(1, \epsilon) :=
\|f_{p_1(t)}^t(t-\epsilon)-p_1(t-\epsilon)\| \le \epsilon^2$. 
\end{lemma}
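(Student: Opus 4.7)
The plan is to exploit the fact that, by construction, $p_1(l,\cdot)$ is the first-order Taylor expansion of $f_{p_1(t)}^t(l,\cdot)$ at the point $t$. Indeed, the initial condition gives $p_1(\cdot,t)=f_{p_1(t)}^t(\cdot,t)$, and the choice of the optimising action $a_l^t$ in~(\ref{eq:opt_act}) is exactly what is needed to make the constant slope $-c_l^t$ of $p_1(l,\cdot)$ equal to $\dot f_{p_1(t)}^t(l,t) = -\opttwo_{a\in\Sigma(l)}\sum_{l'}\mathbf{Q}(l,a,l')\,p_1(l',t) = -c_l^t$. Hence the discrepancy is genuinely of second order, and I expect a bound of the form $O(\epsilon^2)$.

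First, I would express the error as an integral via the fundamental theorem of calculus:
\[
 f_{p_1(t)}^t(l,t-\epsilon) - p_1(l,t-\epsilon) \;=\; \int_0^\epsilon \!\Big[\opttwo_{a\in\Sigma(l)}\sum_{l'\in L}\mathbf{Q}(l,a,l')\,f_{p_1(t)}^t(l',t-\tau) \;-\; c_l^t \Big]\,d\tau,
\]
noting that the integrand vanishes at $\tau=0$, so it suffices to bound its growth rate in $\tau$.

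For this I would record two simple estimates, both hinging on the normed identity $\sum_{l'}\mathbf{R}(l,a,l')=1$, which rewrites the linear form as $\sum_{l'}\mathbf{Q}(l,a,l')\,z(l') = \sum_{l'}\mathbf{R}(l,a,l')(z(l')-z(l))$. First, for each fixed $l$ the map $x \mapsto \opttwo_{a\in\Sigma(l)}\sum_{l'}\mathbf{Q}(l,a,l')\,x(l')$ is $2$-Lipschitz in $\|\cdot\|_\infty$: combine the rewriting (which yields the $2$-Lipschitz estimate for each fixed $a$) with the elementary contraction $|\opttwo_a F(a) - \opttwo_a G(a)| \le \max_a |F(a)-G(a)|$, valid for both $\max$ and $\min$. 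Second, $f_{p_1(t)}^t$ is $1$-Lipschitz in time, because the same rewriting applied to $f_{p_1(t)}^t(\cdot,s)\in[0,1]^{|L|}$ yields $|\dot f_{p_1(t)}^t(l,s)|\le 1$; therefore $\|f_{p_1(t)}^t(\cdot,t-\tau) - p_1(\cdot,t)\|_\infty \le \tau$. Applying the first estimate with $x=f_{p_1(t)}^t(\cdot,t-\tau)$ and $y=p_1(\cdot,t)$ bounds the integrand by $2\tau$, and $\int_0^\epsilon 2\tau\,d\tau = \epsilon^2$ closes the argument.

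The main obstacle is handling the $\opttwo$-operator inside the integrand, since the optimising action may switch within $[t-\epsilon,t]$ and the right-hand side of~(\ref{eq:diffeq}) is therefore only piecewise smooth in $\tau$. The elementary contraction $|\opttwo_a F - \opttwo_a G| \le \max_a |F-G|$ sidesteps this issue without requiring us to locate the switching times, which is precisely what keeps the estimate clean despite the non-smoothness; the hypothesis $\epsilon \le 1$ plays only the cosmetic role of ensuring that the nominal bound $\epsilon^2$ does not exceed the trivial bound $\|f\|_\infty \le 1$.
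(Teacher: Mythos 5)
Your argument is correct and follows essentially the same route as the paper's proof: both establish $\|f_{p_1(t)}^t(\cdot,t-\tau)-p_1(\cdot,t)\|\le\tau$ from the bound $|\dot f_{p_1(t)}^t|\le 1$, then bound the difference of the two derivatives by $2\tau$, and integrate to obtain $\tau^2$. Your packaging of the optimality step as the contraction $|\opttwo_a F(a)-\opttwo_a G(a)|\le\max_a|F(a)-G(a)|$ is a cleaner, player-uniform way of expressing what the paper spells out as four explicit chains of inequalities (two for each player), but it is the same estimate, so I would not call the route genuinely different.

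One caveat. Your time-Lipschitz bound $|\dot f_{p_1(t)}^t(l,s)|\le 1$ silently uses $f_{p_1(t)}^t(\cdot,s)\in[0,1]^{|\locations|}$, which holds only because the terminal data $p_1(\cdot,t)$ lies in $[0,1]^{|\locations|}$. That containment is not automatic for interior $t$: the paper proves it by induction over the intervals (Lemma~\ref{lem:p1-01}), and it is exactly there that the hypothesis $\varepsilon\le 1$ is used, since it makes $p_1(l,t-\tau)=p_1(l,t)+\tau c_l^t$ a convex combination of the values $p_1(l',t)$. So your closing remark that $\varepsilon\le 1$ plays only a cosmetic role is off the mark: without it $p_1$ could leave $[0,1]$, and then the bound $|\dot f_{p_1(t)}^t|\le 1$ on the next interval would fail. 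Adding this inductive containment argument closes the gap, after which your proof is complete.
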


The approximation~$p_1$ can also be used to construct strategies for the two
players with similar error bounds. We will describe the construction for
the reachability player. The construction for the safety player can be derived
analogously.

The strategy for the reachability player is to play the action chosen by~$p_1$
during the entire interval $[t - \epsilon, t]$. We will define a system of
differential equations $g_1(l, \tau)$ that describe the outcome when the
reachability fixes this strategy, and when the safety player plays an
optimal counter strategy. For each location~$l$, we define $g_1(l, t) =
f_{p_1(t)}^t(l, t)$, and we define $g_1(l, \tau)$, for each $\tau \in [t - \epsilon,
t]$, as:
\begin{align}
\label{eq:g}
-\dot{g_1}(l,\tau) &= \sum_{l'\in\locations} \mathbf{Q}(l,a_l^t,l') \cdot
g_1(l',\tau) &\text{if $l \in \locations_r$,} \\
-\dot{g_1}(l,\tau) &= 
\min_{a\in\Sigma(l)} \sum_{l'\in\locations} \mathbf{Q}(l,a,l') \cdot
g_1(l',\tau) &\text{if $l \in \locations_s$.}
\end{align}

We can prove the following bounds for $\error_s(1,\varepsilon)$, which is the
difference between $g_1$ and $f^t_{p_1(t)}$ on an interval of length $\epsilon$.

\begin{lemma}
\label{lem:single}
We have $\error_s(1,\varepsilon) :=
\|g_1(t-\varepsilon)-f^t_{p_1(t)}(t-\varepsilon)\| \le 2 \cdot \varepsilon^2$.
\end{lemma}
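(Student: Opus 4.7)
My plan is to bound the target quantity by triangulating through $p_1$: since Lemma~\ref{extheo:single} already shows $\|f^t_{p_1(t)}(t-\varepsilon) - p_1(t-\varepsilon)\| \le \varepsilon^2$, it suffices to prove the symmetric estimate $\|g_1(t-\varepsilon) - p_1(t-\varepsilon)\| \le \varepsilon^2$ and then apply the triangle inequality. The proof of this symmetric estimate mirrors the proof of Lemma~\ref{extheo:single} in structure: both the linear $p_1$ and the system $g_1$ share the same value and the same derivative at the endpoint $t$, so their discrepancy at $t-\varepsilon$ is controlled purely by a Lipschitz bound on their derivatives.

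The first step is to verify the matching boundary data at time $t$. By construction $g_1(l,t) = f^t_{p_1(t)}(l,t) = p_1(l,t)$. For the derivative, $-\dot{g_1}(l,t) = c_l^t = -\dot{p_1}(l,t)$: at reachability locations this is immediate because $g_1$ plays the fixed action $a_l^t$, and at safety locations it holds because both $p_1$ and $g_1$ minimise $\sum_{l'}\mathbf{Q}(l,a,l')\,p_1(l',t)$ over $a$ (using $g_1(\cdot,t)=p_1(\cdot,t)$). The second step is a Lipschitz bound on $\dot{g_1}$. Rewriting~\eqref{eq:diffeqAlt} for $g_1$ and noting that $g_1\in[0,1]$ with $\sum_{l'\neq l}\ratematrix(l,a,l')\le 1$ in a normed Markov game, we see that $g_1$ is $1$-Lipschitz in time. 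Since $\sum_{l'}|\mathbf{Q}(l,a,l')|\le 2$, each function $F_a(\tau) := \sum_{l'}\mathbf{Q}(l,a,l')\,g_1(l',\tau)$ is then $2$-Lipschitz, and pointwise minima or fixed selections of $2$-Lipschitz functions remain $2$-Lipschitz. Hence $\tau\mapsto\dot{g_1}(l,\tau)$ is $2$-Lipschitz on $[t-\varepsilon,t]$, regardless of whether $l$ belongs to the reachability or the safety player.

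Combining these ingredients with the fundamental theorem of calculus, and using that $\dot{p_1}(l,\cdot)$ is constant and equal to $\dot{g_1}(l,t)$, I would obtain
\[
|g_1(l,t-\varepsilon) - p_1(l,t-\varepsilon)| = \Bigl|\int_{t-\varepsilon}^{t}\bigl(\dot{g_1}(l,s) - \dot{g_1}(l,t)\bigr)\,ds\Bigr| \le \int_{t-\varepsilon}^{t} 2(t-s)\,ds = \varepsilon^2.
\]
Taking the maximum over $l$ and chaining with Lemma~\ref{extheo:single} via the triangle inequality yields the claimed bound $\error_s(1,\varepsilon)\le 2\varepsilon^2$. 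The main obstacle I anticipate is the non-differentiability of $g_1$ at safety locations, where the minimising action can switch; I sidestep this by phrasing everything through Lipschitz constants on $\dot{g_1}$ rather than through pointwise second derivatives. An equally valid alternative would be to partition $[t-\varepsilon,t]$ into the finitely many subintervals on which the argmin is constant and argue piecewise smoothly on each.
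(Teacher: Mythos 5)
Your proof is correct, and its top-level structure coincides with the paper's: both decompose $\error_s(1,\varepsilon)$ via the triangle inequality through $p_1$, reusing Lemma~\ref{extheo:single} for the leg $\|f^t_{p_1(t)}(t-\varepsilon)-p_1(t-\varepsilon)\|\le\varepsilon^2$ and reducing the task to the auxiliary estimate $\|g_1(t-\varepsilon)-p_1(t-\varepsilon)\|\le\varepsilon^2$ (this is exactly the paper's Lemma~\ref{lem:errorp}). Where you genuinely diverge is in how that auxiliary estimate is established. The paper observes that the system defining $g_1$ is itself a continuous-time Markov game (one in which the reachability player's action sets have been restricted to the single action $a_l^t$), shows that running single $\epsilon$-nets on this restricted game reproduces exactly the function $p_1$, and then simply invokes Lemma~\ref{extheo:single} for the restricted game. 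You instead give a direct, self-contained first-order argument: matching value and derivative of $g_1$ and $p_1$ at the endpoint $t$, a $2$-Lipschitz bound on $\tau\mapsto\dot{g_1}(l,\tau)$ (using $g_1\in[0,1]$, $\sum_{l'}|\mathbf{Q}(l,a,l')|\le 2$ in a normed game, and stability of Lipschitz constants under pointwise minima), and integration of the resulting bound $|\dot{g_1}(l,s)-\dot{g_1}(l,t)|\le 2(t-s)$. Both routes are sound; the paper's reduction is shorter given that Lemma~\ref{extheo:single} is already in hand and applies verbatim to sub-games, while your argument avoids the meta-step of recognising $g_1$ as the value function of a restricted CTMG and replaces the argmax-exchange reasoning inside the proof of Lemma~\ref{extheo:single} with a cleaner Lipschitz bound that handles the switching of the safety player's minimiser for free.
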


Lemma~\ref{extheo:single} gives the $\epsilon$-step error for $p_1$, and we can
apply Lemma~\ref{lem:individual} to show that the global error is bounded by
$\epsilon^2\cdot\frac{T}{\epsilon}=\varepsilon T$. If~$\pi$ is the required
precision, then we can choose $\epsilon=\frac{\pi}{T}$ to produce an
algorithm that terminates after $\frac{T}{\varepsilon}\approx \frac{T^2}{\pi}$
many steps. Hence, we obtain the following known result. 

\begin{theorem}
\label{thm:singlenets}
For a normed Markov game $\M$ of size $|\M|$, we can compute a $\pi$-optimal strategy and determine the quality of $\M$ up to precision $\pi$ in time $O(|\M| \cdot T \cdot \frac{T}{\pi})$.
\end{theorem}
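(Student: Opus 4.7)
The plan is to combine the step-error bound of Lemma~\ref{extheo:single} (and its strategy counterpart, Lemma~\ref{lem:single}) with the global-error accumulation bound of Lemma~\ref{lem:individual}, and then to account for the computational cost of extending $p_1$ one interval at a time.

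First I fix $\epsilon := \pi/(2T) \le 1$ (for sufficiently small $\pi$; otherwise the statement is trivial) and partition $[0,T]$ into $N := \lceil T/\epsilon \rceil = O(T^2/\pi)$ intervals of length~$\epsilon$. Starting from the terminal values $p_1(l,T) = [l\in G]$, I iterate backwards through the intervals, computing $p_1(\cdot, t-\epsilon)$ from $p_1(\cdot,t)$ using Equation~\eqref{eq:opt_act} to select the optimising action $a_l^t$ at each location, and then setting $p_1(l,t-\tau) = p_1(l,t) + \tau \cdot c_l^t$ as prescribed in Section~\ref{sec:simple_nets}.

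Next I bound the global error by induction on the intervals, sweeping from $t = T$ down to $t = 0$. At $t = T$, $p_1$ agrees exactly with $f$, so the base error is $0$. If at some $t$ the accumulated error satisfies $\|f(t) - p_1(t)\| \le \mu_t$, then Lemma~\ref{extheo:single} gives a step error of at most $\epsilon^2$ on $[t-\epsilon, t]$, and Lemma~\ref{lem:individual} yields $\|f(t-\epsilon) - p_1(t-\epsilon)\| \le \mu_t + \epsilon^2$. Summing over all $N$ steps gives a global error of at most $N\cdot \epsilon^2 \le \epsilon T = \pi/2 \le \pi$, as required. The same argument applied to $g_1$ via Lemma~\ref{lem:single} gives a global strategy error of $2\epsilon T \le \pi$, so the positional strategy derived from~$p_1$ is $\pi$-optimal.

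Finally I account for the running time. Within a single interval, updating $p_1$ requires, for every location $l$ and every enabled action $a \in \Sigma(l)$, evaluating the sum $\sum_{l'} \mathbf{Q}(l,a,l')\cdot p_1(l',t)$, and then taking the $\opt$ over the actions. Since $|\M|$ counts the non-zero entries of $\ratematrix$, the total work per interval is $O(|\M|)$, and multiplying by the $O(T/\epsilon) = O(T^2/\pi)$ intervals gives the claimed $O(|\M| \cdot T \cdot T/\pi)$ bound. There is no real obstacle here: the two lemmas already do the analytic work, and the only care needed is in choosing $\epsilon$ small enough that both the value error $\epsilon T$ and the strategy error $2\epsilon T$ are at most $\pi$, which is absorbed into the constant hidden in the running time.
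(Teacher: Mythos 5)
Your proof is correct and follows essentially the same route as the paper's: Lemma~\ref{extheo:single} for the $\epsilon$-step error, Lemma~\ref{lem:individual} to accumulate it to a global error of $\epsilon T$, the choice $\epsilon \approx \pi/T$ giving $O(T^2/\pi)$ intervals, and an $O(|\M|)$ cost per interval. You are in fact slightly more careful than the paper in taking $\epsilon = \pi/(2T)$ so that the strategy error $2\epsilon T$ from Lemma~\ref{lem:single} also lands below $\pi$, but this only changes the constant hidden in the $O(\cdot)$.
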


\subsection{Double $\varepsilon$-Nets}\label{sec:double_nets}

In this section we show that only a small amount of additional computation
effort needs to be expended in order to dramatically improve over the precision
obtained by single $\epsilon$-nets. This will allow us to use much larger values
of~$\epsilon$ while still retaining our desired precision.

In single $\epsilon$-nets, we computed the gradient of~$f$ at the start of each
interval and assumed that the gradient remained constant for the duration of
that interval. This gave us the approximation~$p_1$. The key idea behind double
$\epsilon$-nets is that we can use the approximation~$p_1$ to approximate the
gradient of $f$ throughout the interval.

We define the approximation~$p_2$ as follows: we have $p_2(l, T) = 1$ if $l \in
G$ and $0$ otherwise, and if $p_2(l, \tau)$ is defined for every $l \in L$ and
every $\tau \in [t, T]$, then we define $p_2(l, \tau)$ for every $\tau \in [t -
\epsilon, t]$ as:
\begin{equation}\label{eq:approx2}
-\dot{p}_2(l,\tau) = \opttwo_{a\in\Sigma(l)}
\sum_{l'\in\locations} \ratematrix(l,a,l') \cdot \left(p_1(l',\tau) -
p_1(l,\tau)\right)\quad \forall l\in\locations.
\end{equation}
By comparing Equations~\eqref{eq:approx2} and~\eqref{eq:diffeqAlt}, we can see
that double $\epsilon$-nets uses $p_1$ as an approximation for $f$ during the
interval $[t-\epsilon, t]$. Furthermore, in contrast to~$p_1$, note that the
approximation~$p_2$ can change it's choice of optimal action during the
interval. The ability to change the choice of action during an interval is the
key property that allows us to prove stronger error bounds than previous work.

\begin{Lemma} 
\label{lem:double}
If $\varepsilon \leq 1$ then $\error(2,\varepsilon) := \|p_2(\tau)-f_{p_2(t)}^t(\tau)\| \leq \frac{2}{3}\varepsilon^3$. 
\end{Lemma}  

Let us apply the approximation $p_2$ to the example shown in 
Figure~\ref{fig:game_example}. We will again use the interval $[1.1, 1.2]$,
and we will use initial values that were used when we applied single
$\epsilon$-nets to the example in Section~\ref{sec:simple_nets}. We will focus
on the location~$l_R$. From the previous section, we know that $p_1(l_R,t-\tau)=
0.0286\tau +  0.107$, and for the actions~$a$ and~$b$ we have:
\begin{itemize}
\item $\sum_{l'\in\locations} \ratematrix(l_R,a,l') p_1(l',t-\tau)=
\frac{1}{20} +\frac{4}{5} p_1(l_R,t-\tau),$
\item $\sum_{l'\in\locations} \ratematrix(l_R,b,l') p_1(l',t-\tau)=
\frac{1}{5}  p_1(l,t-\tau) +\frac{4}{5} p_1(l_R,t-\tau).$
\end{itemize}
These functions are shown in Figure~\ref{fig:simple_net}.
To obtain the approximation $p_2$, we must take the maximum of these two functions.
Since~$p_1$ is a linear function, we know that these two functions have exactly
one crossing point, and it can be determined that this point occurs when
$p_1(l,t-\tau)=0.25$, which happens at $\tau = z := \frac{5}{63}$. Since $z
\le 0.1 = \epsilon$, we know that the lines intersect within the interval $[1.1,
1.2]$. Consequently, we get the following piecewise quadratic function for
$p_2$:
\begin{itemize}
\item When $0\le \tau \leq z$, we use the action $a$ and obtain
$-\dot{p}_2(l_R,t-\tau) = -0.00572\tau + 0.0286$, which implies that
$p_2(l_R,t-\tau)=  - 0.00286\tau^2 + 0.0286\tau +  0.107$. 
\item When $z<\tau\le 0.1$ we use action $b$ and obtain 
$-\dot{p}_2(l_R,t-\tau) = 0.0094\tau + 0.0274$, which implies that 
$p_2(l_R,t-\tau) = 0.0047\tau^2 + 0.0274\tau + 0.107047619$.
\end{itemize}

\begin{figure}[t]
\begin{center}
\vspace{-.5cm}
    \scalebox{0.95}{\input{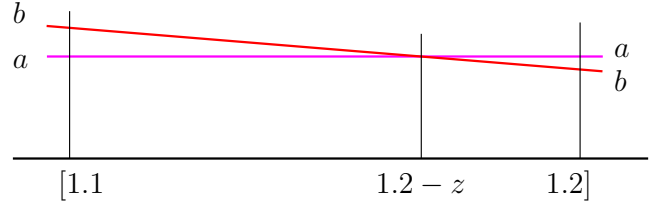}}
\vspace{-.5cm}
\end{center}
\caption{This figure shows how $-\dot{p}_2$ is computed on the interval $[1.1,
1.2]$ for the location $l_R$. The function is given by the upper envelope of the two functions: it
agrees with the quality of $a$ on the interval $[1.2-z, 1.2]$ and with the
quality of $b$ on the interval $[1.1, 1.2-z]$.}
 \label{fig:simple_net}
\end{figure}

As with single $\epsilon$-nets, we can provide a strategy that obtains similar
error bounds. Once again, we will consider only the reachability player, because
the proof can easily be generalised for the safety player. In much the same way
as we did for $g_1$, we will define a system of differential equations $g_2(l,
\tau)$ that describe the outcome when the reachability player plays according
to~$p_2$, and the safety player plays an optimal counter strategy. For each
location~$l$, we define $g_2(l, t) = f_{p_2(t)}^t(l, t)$. If $a_l^\tau$ denotes
the action that maximises Equation~\eqref{eq:approx2} at the time point $\tau
\in [t - \epsilon, t]$, then we define $g_2(l, \tau)$, as:
\begin{align}
\label{eq:g2}
-\dot{g_2}(l,\tau) &= \sum_{l'\in\locations} \mathbf{Q}(l,a_l^{\tau},l') \cdot
g_2(l',\tau) &\text{if $l \in \locations_r$,} \\
-\dot{g_2}(l,\tau) &= 
\min_{a\in\Sigma(l)} \sum_{l'\in\locations} \mathbf{Q}(l,a,l') \cdot
g_2(l',\tau) &\text{if $l \in \locations_s$.}
\end{align}
The following lemma proves that difference between~$g_2$
and~$f_{p_2(t)}^t$ has similar bounds to those shown in Lemma~\ref{lem:double}

\begin{lemma}
\label{lem:doubleConcrete}
If $\epsilon \le 1$ then we have $\error_s(2,\varepsilon) :=
\|g_2(t-\varepsilon)-f^t_{p_2(t)}(t-\varepsilon)\| 
\leq 2 \cdot \varepsilon^3$.
\end{lemma}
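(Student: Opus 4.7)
The plan is to apply the triangle inequality $\|g_2 - f^t_{p_2(t)}\| \le \|g_2 - p_2\| + \|p_2 - f^t_{p_2(t)}\|$, where the second summand is at most $\tfrac{2}{3}\varepsilon^3$ by Lemma~\ref{lem:double}. It therefore suffices to show $\|g_2 - p_2\| \le \tfrac{4}{3}\varepsilon^3$, which comfortably fits within the claimed $2\varepsilon^3$.

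As a preliminary I would prove the auxiliary bound $\|p_2(t-s) - p_1(t-s)\| \le s^2$ on the interval. Both $p_1$ and $p_2$ are initialised from the same value $p_2(t)$ at $\tau = t$, and at that endpoint their derivatives already coincide: $-\dot{p}_1(l,t) = c_l^t = -\dot{p}_2(l,t)$, because the action $a_l^t$ realises the opt in both~\eqref{eq:opt_act} and~\eqref{eq:approx2}. Since $p_1$ is affine on the interval and $-\dot{p}_2(l,\tau)$ is the opt of finitely many affine functions of $\tau$, the drift of $-\dot{p}_2(l,\tau)$ away from $c_l^t$ is at most $2(t-\tau)$ via $|c_{l'}^t|\le 1$ and $\sum_{l'}|\mathbf{Q}(l,a,l')|\le 2$. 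Integrating from $t$ backwards then yields the claimed $s^2$ bound.

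Now let $\psi(l,s) = g_2(l,t-s) - p_2(l,t-s)$ so that $\psi(\cdot,0)=0$. Differentiating the defining equations of $g_2$ and $p_2$, and decomposing $g_2 - p_1 = \psi + (p_2 - p_1)$, one sees that at reachability locations both sides share the action $a_l^{t-s}$, while at safety locations the standard inequality $|\min_a A(a) - \min_a B(a)| \le \max_a |A(a)-B(a)|$ applies. Combined with the uniform estimate $|\sum_{l'} \mathbf{Q}(l,a,l')h(l')| \le 2\|h\|$ and the $s^2$ bound on $p_2 - p_1$, both cases give $\bigl|\tfrac{d\psi}{ds}(l,s)\bigr| \le 2\|\psi(s)\| + 2s^2$ pointwise in $l$. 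The induced scalar differential inequality $\tfrac{d\|\psi(s)\|}{ds} \le 2\|\psi(s)\| + 2s^2$ with $\|\psi(0)\|=0$ has closed-form upper solution $\tfrac{1}{2}(e^{2s}-1) - s - s^2$, whose Taylor expansion begins at $\tfrac{2}{3}s^3$ and is at most $\tfrac{4}{3}s^3$ for $s \le 1$. The triangle inequality then completes the proof.

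I expect the main obstacle to be arranging the constants so that the bound $\tfrac{4}{3}\varepsilon^3$ holds uniformly on $\varepsilon \le 1$: a naive Grönwall application introduces a multiplier $e^{2\varepsilon}$ that is too generous, so one needs either the explicit closed-form solution above or a sharpening of the estimate $|\sum_{l'}\mathbf{Q}h|\le 2\|h\|$ exploiting the substochastic structure of $\mathbf{Q}$. A secondary but easier point is verifying that the min-vs-min step at safety locations picks out a single worst-case action per $(l,s)$, which then plays exactly the same coupling role in the differential inequality as the fixed action $a_l^{t-s}$ at reachability locations, so the two cases merge into one scalar inequality.
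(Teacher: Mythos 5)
Your proof is correct, but it takes a genuinely different route from the paper's. The paper proves this lemma operationally: it defines a family of strategies $g_2^k$ that follow $p_2$ for the first $k$ discrete transitions and then revert to $p_1$, proves $\|g_2^k(t-\epsilon)-f^t_{p_2(t)}(t-\epsilon)\|\le 2\epsilon^3$ by induction on $k$ (conditioning on the time of the first jump and integrating $\int_0^\epsilon e^{\tau-\epsilon}\,6\tau^2\,d\tau$ against its density, with an auxiliary ``action-swap'' lemma bounding the loss from optimising a nearby function), and then lets $k\to\infty$. You instead stay entirely at the level of the defining ODEs: the triangle inequality through $p_2$ reduces everything to $\|g_2-p_2\|$, the auxiliary estimate $\|p_2(t-s)-p_1(t-s)\|\le s^2$ is the same computation as in Lemma~\ref{extheo:single} (note it needs $|c_l^t|\le 1$, i.e.\ Lemma~\ref{lem:p1-01} and hence $\epsilon\le 1$), and the coupled comparison $D^+\|\psi(s)\|\le 2\|\psi(s)\|+2s^2$ with explicit upper solution $\tfrac12(e^{2s}-1)-s-s^2\le\tfrac43 s^3$ on $[0,1]$ closes the argument --- I checked that this last inequality does hold on all of $[0,1]$ (the difference vanishes to third order at $0$ and is still $\approx 0.139$ at $s=1$), so your worry about the Gr\"onwall constant is resolved by the closed form and no sharpening of $|\sum_{l'}\mathbf{Q}h|\le 2\|h\|$ is needed. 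What your route buys is self-containedness: no reasoning about jump-time densities, no limit of modified strategies, and the two player cases genuinely merge into one scalar inequality as you anticipate (the reachability case shares the action $a_l^{t-s}$ exactly, the safety case uses $|\min_a A-\min_a B|\le\max_a|A-B|$). What the paper's route buys is reuse: the $g_k^i$ machinery and the action-swap lemma are applied verbatim in the inductive step of Lemma~\ref{lem:knets} to get the $\frac{8c+3d}{k+2}$ recursion, whereas your argument would need the analogous $\|p_{k+1}-p_k\|$ bounds and a re-tuned comparison function at each level. The only technical point you should make explicit is that $\|\psi(s)\|$ is a maximum of piecewise-$C^1$ functions, so the scalar inequality must be read with an upper Dini derivative before invoking the comparison principle; this is standard and does not affect the result.
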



Computing the approximation~$p_2$ for an interval $[t - \epsilon, t]$ is not
expensive. The fact that~$p_1$ is linear implies that
each action can be used for at most one subinterval of $[t - \epsilon, t]$.
Therefore, there are less than~$|\Sigma|$ points at which the strategy changes,
which implies that~$p_2$ is a piecewise quadratic function with at most
$|\Sigma|$ pieces. It is possible to design an algorithm that uses sorting to
compute these switching points, achieving the following complexity.

\begin{Lemma}
\label{lem:sortdouble}
Computing $p_2$ for an interval $[t - \epsilon, t]$ takes $O(|\M| +
|\locations|\cdot|\Sigma|\cdot \log |\Sigma|)$ time.
\end{Lemma}

Since the $\epsilon$-step error for double $\epsilon$-nets is bounded by
$\varepsilon^3$, we can apply Lemma~\ref{lem:individual} to conclude that the
global error is bounded by
$\varepsilon^3\cdot\frac{T}{\varepsilon}=\varepsilon^2 T$. Therefore, if we want
to compute $f$ with a precision of~$\pi$, we should choose $\varepsilon \approx
\sqrt{\frac{\pi}{T}}$,  which gives $\frac{T}{\varepsilon}\approx
\frac{T^{1.5}}{\sqrt{\pi}}$ distinct intervals.

\begin{Theorem}
\label{thm:doublenets}
For a normed Markov game $\mathcal M$ we can approximate the time-bounded reachability,
construct $\pi$ optimal memoryless strategies for both players, and
determine the quality of these strategies with precision $\pi$ in time
$O(|\M| \cdot T
 \cdot \sqrt{\frac{T}{\pi}} + |\locations| \cdot  T
 \cdot \sqrt{\frac{T}{\pi}} \cdot |\Sigma| \log |\Sigma|)$.
\end{Theorem}

\subsection{Triple $\varepsilon$-Nets and Beyond}\label{ssec:beyond}

The techniques used to construct the approximation~$p_2$ from the
approximation~$p_1$ can be generalised. This is because the only property
of~$p_1$ that is used in the proof of Lemma~\ref{lem:double} is the fact that it
is a piecewise polynomial function that approximates~$f$. Therefore, we can
inductively define a sequence of approximations~$p_k$ as follows:
\begin{align}\label{eq:approxk}
-\dot{p}_k(l,\tau) = \opttwo_{a\in\Sigma(l)}
\sum_{l'\in\locations} \ratematrix(l,a,l') \cdot \left(p_{k-1}(l',\tau) - p_{k-1}(l,\tau)\right)
\end{align}
We can repeat the arguments from the previous sections to obtain the following
error bounds: 

\begin{Lemma}
\label{lem:knets}
For every $k > 2$, if we have $\error(k, \varepsilon) \leq c \cdot
\epsilon^{k+1}$, then we have $\error(k + 1, \varepsilon) \leq \frac{2}{k+2}
\cdot c \cdot \epsilon^{k+2}$. Moreover, if we additionally have that
$\error_s(k, \varepsilon) \leq d \cdot \epsilon^{k+1}$, then we also have that 
$\error_s(k + 1, \varepsilon) \leq \frac{8c + 3d}{k+2} \cdot \epsilon^{k+2}$.
\end{Lemma}

Computing the accuracies explicitly for the first four levels of $\epsilon$-nets
gives:

\begin{center}
\begin{tabular}{@{$\quad$}c@{$\quad$}||@{$\quad$}c@{$\quad$}|@{$\quad$}c@{$\quad$}|@{$\quad$}c@{$\quad$}|@{$\quad$}c@{$\quad$}|@{$\quad$}c@{$\quad$}}
$k$ & $1$ & $2$ & $3$ & $4$ & $\ldots$ \\
\hline
\hline
$\error(k,\varepsilon)$ & $\varepsilon^2$ & $\frac{2}{3}\varepsilon^3$ &
$\frac{1}{3}\varepsilon^4$ & $\frac{2}{15}\varepsilon^5$ & $\ldots$ \\
\hline

$\error_s(k,\varepsilon)$ &  $2 \varepsilon^2$ & $2 \varepsilon^3$ &
$\frac{17}{6}\varepsilon^4$ & $\frac{67}{30}\varepsilon^5$ & $\ldots$ \\



\end{tabular}
\end{center}

We can also compute, for a given precision~$\pi$, the value of~$\epsilon$ that
should be used in order to achieve an accuracy of~$\pi$ with~$\epsilon$-nets of
level~$k$.

\begin{Lemma}
\label{lem:kprecision}
To obtain a precision~$\pi$ with an $\varepsilon$-net of level~$k$,  we choose
$\varepsilon \approx \sqrt[k]{\frac{\pi}{T}}$, resulting in
$\frac{T}{\varepsilon}\approx T \sqrt[k]{\frac{T}{\pi}}$ 
steps. 
\end{Lemma}

Unfortunately, the cost of computing $\epsilon$-nets of level~$k$ becomes
increasingly prohibitive as~$k$ increases. To see why, we first give a property
of the functions~$p_k$. Recall that~$p_2$ is a piecewise quadratic function. It
is not too difficult to see how this generalises to the approximations $p_k$.

\begin{lemma}\label{lem:piecepoly}
The approximation $p_k$ is piecewise polynomial with degree less than or equal
to $k$.
\end{lemma}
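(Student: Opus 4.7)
The plan is to prove this by induction on $k$, relying on the recursive definition in Equation~\eqref{eq:approxk} together with the fact that the $\opt$ operator acts over a finite action set $\Sigma$ and does not increase the polynomial degree.

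For the base case $k = 1$, I would simply observe that Section~\ref{sec:simple_nets} defines $p_1(l, t - \tau) = p_1(l, t) + \tau \cdot c_l^t$ on each interval of length $\epsilon$, so $p_1$ is piecewise linear, i.e., a piecewise polynomial of degree at most $1$. For the inductive step, assume $p_{k-1}$ is piecewise polynomial of degree at most $k - 1$ with finitely many pieces on each length-$\epsilon$ interval. For each fixed action $a \in \Sigma(l)$, the inner sum
\[
q_{l,a}(\tau) := \sum_{l'\in\locations} \ratematrix(l,a,l') \cdot \bigl(p_{k-1}(l',\tau) - p_{k-1}(l,\tau)\bigr)
\]
is a finite linear combination of piecewise polynomials of degree at most $k - 1$, hence itself piecewise polynomial of degree at most $k - 1$ with finitely many breakpoints (the union of the breakpoints inherited from the $p_{k-1}(l',\cdot)$).

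The next step is to argue that the pointwise $\opt$ over $a \in \Sigma(l)$ preserves the degree bound and keeps the number of pieces finite. On any subinterval where each $q_{l,a}$ is a single polynomial of degree at most $k-1$, two such polynomials differ by a polynomial of degree at most $k-1$ and hence cross at only finitely many points. Since $|\Sigma(l)| < \infty$, the upper/lower envelope $-\dot{p}_k(l,\tau) = \opt_a q_{l,a}(\tau)$ switches between the $q_{l,a}$ at finitely many points and agrees with one of them on each resulting piece. Therefore $-\dot{p}_k(l,\cdot)$ is piecewise polynomial of degree at most $k - 1$ with finitely many pieces.

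Finally, $p_k(l,\tau)$ is obtained by integrating $-\dot{p}_k(l,\cdot)$ from $\tau$ up to the right endpoint of the current length-$\epsilon$ interval, with the terminal value $p_k(l,T) \in \{0,1\}$ (or the value handed in from the next interval) as the boundary condition. Integration of a piecewise polynomial of degree at most $k - 1$ produces a piecewise polynomial of degree at most $k$, with the same (finite) set of breakpoints, and continuity at the breakpoints is forced by the boundary condition carried along the integration. This yields the claim. The only slightly delicate point is the handling of breakpoints: one must verify that the set of switching times of the $\opt$ does not accumulate, but this is immediate because on each of the finitely many sub-pieces where every $q_{l,a}$ is a single polynomial, the envelope switches at most finitely often, and there are only $|\Sigma(l)|$ candidate functions to take the envelope over.
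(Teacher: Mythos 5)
Your proof is correct and follows essentially the same route as the paper's: induction on $k$, with the sum over successor locations and the finite $\opt$ preserving piecewise-polynomial degree $k-1$, and integration raising the degree to $k$. The extra care you take in bounding the number of envelope switching points (via pairwise crossings of degree-$(k-1)$ polynomials) is a detail the paper defers to its separate Lemma~\ref{lem:kswitchpoint}, but it does not change the argument.
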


Although these functions are well-behaved in the sense that they are always
piecewise polynomial, the number of pieces can grow exponentially in the worst
case. The following lemma describes this bound.

\begin{Lemma}\label{lem:kswitchpoint}
If $p_{k-1}$ has $c$ pieces in the interval $[t - \epsilon, t]$, then $p_k$ has
at most $\frac{1}{2}\cdot c \cdot k \cdot |\locations| \cdot|\Sigma|^2$ pieces
in the interval $[t - \epsilon, t]$. 
\end{Lemma}

The upper bound given above is quite coarse, and we would be surprised if it
were found to be tight. Moreover, we do not believe that the number of pieces
will grow anywhere close to this bound in practice. This is because it is rare,
in our experience, for optimal strategies to change their decision many times
within a small time interval.

However, there is a more significant issue that makes $\epsilon$-nets become
impractical as~$k$ increases. In order to compute the approximation~$p_k$, we
must be able to compute the roots of polynomials with degree $k-1$. Since we can
only efficiently compute the roots of quadratic functions, and efficiently
approximate the roots of cubic functions, only the approximations~$p_3$
and~$p_4$ are realistically useful. 

Once again it is possible to provide a smart algorithm that uses sorting in
order to find the switching points in the functions~$p_3$ and~$p_4$, which gives
the following bounds on the cost of computing them.

\begin{Theorem}\label{theo:knets}
For a normed Markov $\mathcal{M}$ we can construct $\pi$ optimal memoryless
strategies for both players and determine the quality of these
strategies with precision $\pi$ in time $O(|\locations|^2 \cdot
\sqrt[3]{\frac{T}{\pi}}\cdot T \cdot |\Sigma|^4 \log |\Sigma|)$ when using triple $\varepsilon$-nets, and
in time $O(|\locations|^3 \cdot
\sqrt[4]{\frac{T}{\pi}}\cdot T \cdot |\Sigma|^6 \log |\Sigma|)$ when using quadruple $\varepsilon$-nets.
\end{Theorem}

It is not clear if triple and quadruple $\varepsilon$-nets will
only be of theoretical interest, or if they will be useful in practice. 
It should be noted that the worst case complexity bounds given by
Theorem~\ref{theo:knets} arise from the upper bound on the number of switching
points given in Lemma~\ref{lem:kswitchpoint}. Thus, if the number of switching
points that occur in practical examples is small, these techniques may become
more attractive. Our experiments in the following section give some evidence
that this may be true.

\section{Experimental Results and Conclusion}
\label{sec:experimental}

In order to test the practicability of our algorithms, we have implemented both
double and triple-$\epsilon$ nets. We evaluated these algorithms on two sets of
examples. Firstly, we tested our algorithms on  the Erlang-example (see
Figure~\ref{fig:ErlangModel}) presented
in~\cite{Buchholz+Hahn+Hermanns+Zhang/11/MC_CTMDPs}
and~\cite{Zhang+Neuh/10/ModelCheckingIMCs}. We chose to consider the same
parameters used by those papers: we consider maximal probability to reach
location $l_4$ from $l_1$ within 7 time units. Since this example is a CTMDP, we
were able to compare our results with the Markov Reward Model Checker
(MRMC)~\cite{Buchholz+Hahn+Hermanns+Zhang/11/MC_CTMDPs} implementation, which includes
an implementation of the techniques proposed by Buckholz and Schulz. 

We also tested our algorithms on continuous-time Markov games, where we used the
model depicted in Figure~\ref{fig:ScalableGameModel}, consisting of two chains
of locations $l_1,l_2,\dots,l_{100}$ and $l'_1,l'_2,\dots,l'_{100}$ that are
controlled by the maximising player and the minimising player, respectively.
This example is designed to produce a large number of switching points. In
every location~$l_i$ of the maximising player, there is the choice between the
short but slow route along the chain of maximising locations, and the slightly
longer route which uses the minimising player's locations. If very little
time remains, the maximising player prefers to take the slower actions, as
fewer transitions are required to reach the goal using these actions. The
maximiser also prefers these actions when a large amount of time remains.
However, between these two extremes, there is a time interval in which it is
advantageous for the maximising player to take the action with rate~3. A similar
situation occurs for the minimising player, and this leads to a large number of
points where the players change their strategy. 

\begin{figure}[t]
\centering
\begin{pspicture}[unit=0.9cm,showgrid=false](0,0.6)(10,3.3)
	\psset{arrowsize=5pt,nodesep=0pt,arrowlength=1,linewidth=1pt}
	\psline[]{->}(-.7,3.7)(-.25,3.25)
	\cnode[](0,3){10pt}{1}
	\rput(0,3){$l_1$}
	
	\definecolor{lightgrey}{rgb}{.8,.8,.8}
	\psframe[linestyle=none, fillstyle=solid,fillcolor=lightgrey,framearc=0.5](1.4,2.5)(8.6,3.5)
	\rput(5,2.2){Erlang(30,10)}
	
	\cnode[](2,3){10pt}{e1}
	
	\cnode[](4,3){10pt}{e2}
	
	\cnode[linecolor=lightgrey](6,3){10pt}{edot}
	\rput(6,3){\dots}
	
	\cnode[](8,3){10pt}{e30}
	
	\cnode[linewidth=0.8pt](10,3){10pt}{goal}
	\cnode[linewidth=0.8pt](10,3){8pt}{goalinner}
	\rput(10,3){$l_4$}
	
	\cnode[](5,1){10pt}{2}
	\rput(5,1){$l_3$}
	
	\cnode[](10,1){10pt}{5}
	\rput(10,1){$l_5$}
	
	\ncline[]{->}{1}{2}
	\ncline[]{->}{1}{e1}
	\ncline[]{->}{e1}{e2}
	\ncline[]{->}{e2}{edot}
	\ncline[]{->}{edot}{e30}
	\ncline[]{->}{1}{2}
	\ncline[]{->}{2}{goal}
	\ncline[]{->}{e30}{goal}
	\ncline[]{->}{2}{5}
	
	\rput(0.8,3.2){$a$,1}
	\rput(0.8,2.3){$b$,1}
	
	\rput(3,3.2){10}
	\rput(5,3.2){10}
	\rput(7,3.2){10}
	\rput(9,3.2){10}
	
	\rput(7.5,1.7){$a$,0.5}
	\rput(7.5,1.2){$a$,0.5}
\end{pspicture}
\caption{A CTMDP offering the choice between a long chain of fast transition and a slower path that looses some probability mass in $l_5$.}%
\label{fig:ErlangModel}%
\end{figure}
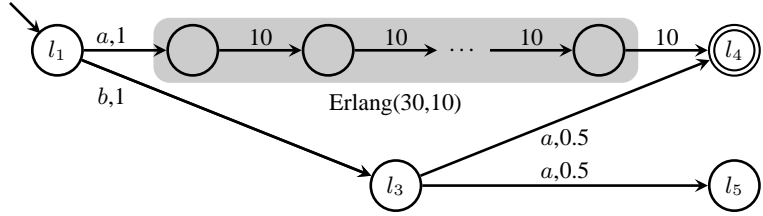

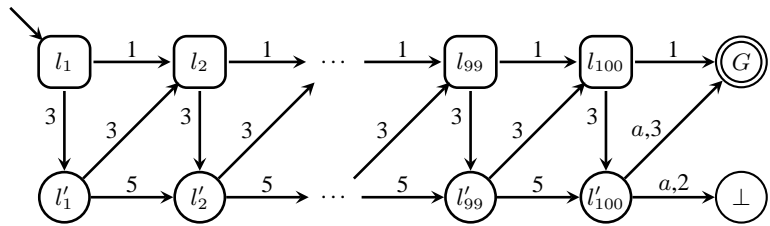
\begin{figure}[b]
\centering
\begin{pspicture}[unit=0.9cm,showgrid=false](0,0.6)(10,3.3)
	\psset{arrowsize=5pt,nodesep=0pt,arrowlength=1,linewidth=1pt}
	\psline[]{->}(-.8,3.8)(-.32,3.32)
	\cnode[linecolor=white](0,3){11pt}{max1}
	\psframe[fillstyle=none,framearc=0.5](-.4,2.6)(.4,3.4)
	\cnode[linecolor=white](2,3){11pt}{max2}
	\psframe[fillstyle=none,framearc=0.5](1.6,2.6)(2.4,3.4)
	\cnode[linecolor=white](4,3){11pt}{maxdots}
	\rput(4,3){\dots}
	\cnode[linecolor=white](6,3){11pt}{maxSecondToLast}
	\psframe[fillstyle=none,framearc=0.5](5.6,2.6)(6.4,3.4)
	\cnode[linecolor=white](8,3){11pt}{maxLast}
	\psframe[fillstyle=none,framearc=0.5](7.6,2.6)(8.4,3.4)
	
	\cnode[](0,1){10pt}{min1}
	\cnode[](2,1){10pt}{min2}
	\cnode[linecolor=white](4,1){10pt}{mindots}
	\rput(4,1){\dots}
	\cnode[](6,1){10pt}{minSecondToLast}
	\cnode[](8,1){10pt}{minLast}
	
	\cnode[linewidth=0.8pt](10,3){10pt}{goal}
	\cnode[linewidth=0.8pt](10,3){8pt}{goalinner}
	\rput(10,3){$G$}
	\cnode[linewidth=0.8pt](10,1){10pt}{deadend}
	\rput(10,1){$\bot$}
	
	\ncline[]{->}{max1}{max2}
	\ncline[]{->}{max2}{maxdots}
	\ncline[]{->}{maxdots}{maxSecondToLast}
	\ncline[]{->}{max1}{min1}
	\ncline[]{->}{max2}{min2}
	\ncline[]{->}{maxSecondToLast}{minSecondToLast}
	\ncline[]{->}{maxSecondToLast}{maxLast}
	\ncline[]{->}{maxLast}{minLast}
	\ncline[]{->}{maxLast}{goal}
	\ncline[]{->}{minLast}{goal}
	\ncline[]{->}{minLast}{deadend}
	
	\ncline[]{->}{min1}{min2}
	\ncline[]{->}{min2}{mindots}
	\ncline[]{->}{mindots}{minSecondToLast}
	\ncline[]{->}{minSecondToLast}{minLast}
	
	\ncline[]{->}{min1}{max2}
	\ncline[]{->}{min2}{maxdots}
	\ncline[]{->}{mindots}{maxSecondToLast}
	\ncline[]{->}{minSecondToLast}{maxLast}
	
	\rput(0,3){$l_1$}
	\rput(2,3){$l_2$}
	\rput(6,3){$l_{99}$}
	\rput(8,3){$l_{100}$}

	\rput(0,1){$l'_1$}
	\rput(2,1){$l'_2$}
	\rput(6,1){$l'_{99}$}
	\rput(8,1){$l'_{100}$}
	
	\rput(1,3.2){1}
	\rput(3,3.2){1}
	\rput(5,3.2){1}
	\rput(7,3.2){1}
	\rput(9,3.2){1}
	
	\rput(1,1.2){5}
	\rput(3,1.2){5}
	\rput(5,1.2){5}
	\rput(7,1.2){5}
	
	\rput(8.6,2){$a$,3}
	\rput(9,1.2){$a$,2}
	
	\rput(0.7,2){3}
	\rput(2.7,2){3}
	\rput(4.7,2){3}
	\rput(6.7,2){3}
	
	\rput(-0.2,2.2){3}
	\rput(1.8,2.2){3}
	\rput(5.8,2.2){3}
	\rput(7.8,2.2){3}
\end{pspicture}
\caption{A CTMG with many switching points.}%
\label{fig:ScalableGameModel}%
\end{figure}

\begin{table}[t]
\label{tbl:results}
\begin{center}
\begin{tabular}{@{~}c@{~} || @{~~}c@{~~} | @{~~}c@{~~}|@{~~}c@{~~} || @{~~}c@{~~}|@{~~}c@{~~}}

& \multicolumn{3}{c||@{~~}}{\bf Erlang model} & \multicolumn{2}{c}{\bf Game model} \\
precision \textbackslash\ method & MRMC \cite{Buchholz+Hahn+Hermanns+Zhang/11/MC_CTMDPs} & Double-nets & Triple-nets & Double-nets & Triple-nets \\
\hline
\hline

$10^{-4}$ &  
0.05 s & 0.04 s & 0.01 s & 0.34 s & 0.08 s \\
\hline

$10^{-5}$ &  
0.20 s & 0.10 s & 0.02 s & 1.04 s & 0.15 s \\
\hline

$10^{-6}$ & 
1.32 s & 0.32 s & 0.04 s & 3.29 s & 0.31 s \\
\hline

$10^{-7}$ & 
8 s & 0.98 s & 0.06 s & 10.45 s & 0.66 s \\
\hline

$10^{-8}$ &
475 s & 3.11 s & 0.14 s & 33.12 s & 1.42 s \\
\hline

$10^{-9}$ &
--- & 9.91 s & 0.30 s & 106 s & 3.09 s \\
\hline

$10^{-10}$ &
---& 31.24 s & 0.64 s & 339 s & 6.60 s \\

\end{tabular}
\end{center}
\caption{Experimental evalutation of our algorithms.}
\end{table}

The results of our experiments are shown in Table~\ref{tbl:results}. The MRMC
implementation was unable to provide results for precisions beyond $1.86 \cdot
10^{-9}$. For the Erlang examples we found that, as the desired precision
increases, our algorithms draw further ahead of the current techniques. The most
interesting outcome of these experiments is the validation of triple
$\epsilon$-nets for practical use. While the worst case theoretical bounds
arising from Lemma~\ref{lem:kswitchpoint} indicated that the cost of computing
the approximation for each interval may become prohibitive, these results show
that the worst case does not seem to play a role in practice. In fact, we found
that the number of switching points summed over all intervals and locations
never exceeded 2 in this example.



Our results on Markov games demonstrate that our algorithms are capable of
solving non-trivially sized games in practice. Once again we find that triple
$\epsilon$-nets provide a substantial performance increase over double
$\epsilon$-nets, and that the worst case bounds given by
Lemma~\ref{lem:kswitchpoint} do not seem occur. Double $\epsilon$-nets found 297
points where the strategy changed during an interval, and triple $\epsilon$-nets
found 684 such points. Hence, the $|L||\Sigma|^2$ factor given in
Lemma~\ref{lem:kswitchpoint} does not seem to arise~here.


\bibliographystyle{abbrv}
\bibliography{references}

\begin{thebibliography}{10}

\bibitem{Baier+all/05/efficientCTMDP}
C.~Baier, H.~Hermanns, J.-P. Katoen, and B.~Haverkort.
\newblock Efficient computation of time-bounded reachability probabilities in
  uniform continuous-time {M}arkov decision processes.
\newblock {\em Theoretical Computer Science}, 345(1):2--26, 2005.

\bibitem{Bellman/57/DP}
R.~Bellman.
\newblock {\em Dynamic Programming}.
\newblock Princeton University Press, 1957.

\bibitem{esa}
M.~Bozzano, A.~Cimatti, M.~Roveri, J.-P. Katoen, V.~Y. Nguyen, and T.~Noll.
\newblock Verification and performance evaluation of {AADL} models.
\newblock In {\em ESEC/SIGSOFT FSE}, pages 285--286, 2009.

\bibitem{BrazdilFKKK09}
T.~Br{\'a}zdil, V.~Forejt, J.~Krc{\'a}l, J.~Kret\'{\i}nsk{\'y}, and A.~Kucera.
\newblock Continuous-time stochastic games with time-bounded reachability.
\newblock In {\em Proc.\ of FSTTCS}, pages 61--72, 2009.

\bibitem{Buchholz+Hahn+Hermanns+Zhang/11/MC_CTMDPs}
P.~Buchholz, E.~M. Hahn, H.~Hermanns, and L.~Zhang.
\newblock Model checking algorithms for {CTMDPs}.
\newblock In {\em Proc.\ of CAV}, 2011.
\newblock To appear.

\bibitem{Buchholz11}
P.~Buchholz and I.~Schulz.
\newblock Numerical analysis of continuous time {M}arkov decision processes
  over finite horizons.
\newblock {\em Computers and Operations Research}, 38(3):651--659, 2011.

\bibitem{ChenHKM10}
T.~Chen, T.~Han, J.-P. Katoen, and A.~Mereacre.
\newblock Computing maximum reachability probabilities in {M}arkovian timed
  automata.
\newblock Technical report, RWTH Aachen, 2010.

\bibitem{CosteHLS09}
N.~Coste, H.~Hermanns, E.~Lantreibecq, and W.~Serwe.
\newblock Towards performance prediction of compositional models in industrial
  gals designs.
\newblock In {\em Proc.\ of CAV}, pages 204--218, 2009.

\bibitem{cadp}
H.~Garavel, R.~Mateescu, F.~Lang, and W.~Serwe.
\newblock {CADP} 2006: A toolbox for the construction and analysis of
  distributed processes.
\newblock In {\em Proc.\ of CAV}, pages 158--163, 2007.

\bibitem{Hairer+Norsett+Wanner/92/SolvingODEsI}
E.~Hairer, S.~P. N{\o}rsett, and G.~Wanner.
\newblock {\em Solving Ordinary Differential Equations {I} (2nd revised. ed.):
  Nonstiff Problems}.
\newblock Springer-Verlag, New York, 1993.

\bibitem{HenzingerMW09}
T.~A. Henzinger, M.~Mateescu, and V.~Wolf.
\newblock Sliding window abstraction for infinite {M}arkov chains.
\newblock In {\em Proc.\ of CAV}, pages 337--352, 2009.

\bibitem{Martin67}
A.~Martin-L{\"o}fs.
\newblock Optimal control of a continuous-time {M}arkov chain with periodic
  transition probabilities.
\newblock {\em Operations Research}, 15(5):872--881, 1967.

\bibitem{Miller/68/finiteStrategies}
B.~L. Miller.
\newblock Finite state continuous time {M}arkov decision processes with a
  finite planning horizon.
\newblock {\em SIAM Journal on Control}, 6(2):266--280, 1968.

\bibitem{NeuhausserDelayedNondeterminism09}
M.~R. Neuh\"{a}u{\ss}er, M.~Stoelinga, and J.-P. Katoen.
\newblock Delayed nondeterminism in continuous-time {M}arkov decision
  processes.
\newblock In {\em Proc.\ of FOSSACS}, pages 364--379, 2009.

\bibitem{Zhang+Neuh/10/reachabilityCTMDPs}
M.~R. Neuh{\"a}u{\ss}er and L.~Zhang.
\newblock Time-bounded reachability probabilities in continuous-time {M}arkov
  decision processes.
\newblock In {\em Proc.\ of QEST}, pages 209--218, 2010.

\bibitem{Rabe+Schewe/10/QAPL}
M.~Rabe and S.~Schewe.
\newblock Optimal time-abstract schedulers for {CTMDPs} and {M}arkov games.
\newblock In {\em Proc.\ of QAPL}, pages 144--158, 2010.

\bibitem{Rabe+Schewe/11/optimalSchedulersExist}
M.~Rabe and S.~Schewe.
\newblock Finite optimal control for time-bounded reachability in
  continuous-time {Markov} games and {CTMDPs}.
\newblock {\em Accepted at Acta Informatica}, 2011.

\bibitem{Zhang+Neuh/10/ModelCheckingIMCs}
L.~Zhang and M.~R. Neuh{\"a}u{\ss}er.
\newblock Model checking interactive {M}arkov chains.
\newblock In {\em Proc.\ of TACAS}, pages 53--68, 2010.

\end{thebibliography}

\newpage
\appendix

\section{Proof of Lemma~\ref{lem:uniform}}

We first show how our algorithms can be used to solve uniform Markov games, and
then argue that this is sufficient to solve general Markov games. In order to
solve uniform Markov games with arbitrary uniformisation rate~$\lambda$, we will
define a corresponding normed Markov game in which time has been compressed by a
factor of $\lambda$. More precisely,  for each Markov game
$\M=(\locations,\locations_r,\locations_s,\Sigma,\ratematrix,\probabilitymatrix,\nu)$
with uniform transition rate $\lambda>0$, we define
$\M^{\|\cdot\|}=(\locations,\locations_r,\locations_s,\Sigma,\probabilitymatrix,\probabilitymatrix,\nu)$,
which is the Markov game that differs from $\M$ only in the rate matrix. In
particular, we replace $\ratematrix$ with $\probabilitymatrix =
\frac{1}{\lambda}\ratematrix$. The following lemma allows us to translate
solutions of $\M^{\|\cdot\|}$ to $\M$.

\begin{lemma}
\label{lem:boring}
For every uniform Markov game~$\M$, if we have approximated the optimal
time-bounded reachability probabilities and strategies in $\M^{\|\cdot\|}$ with
some precision~$\pi$ for the time bound~$T$, then we can approximate optimal
time-bounded reachability probabilities and strategies in~$\M$ with precision~$\pi$ for the time bound $\lambda T$.
\end{lemma}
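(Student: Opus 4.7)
The plan is to exploit that $\M^{\|\cdot\|}$ is the same continuous-time Markov game as $\M$ up to a linear rescaling of time: every transition rate is multiplied by $\frac{1}{\lambda}$, so $\mathbf{Q}^{\|\cdot\|} = \frac{1}{\lambda}\mathbf{Q}$, and the Bellman equation~\eqref{eq:diffeq} for $\M^{\|\cdot\|}$ is obtained from the one for $\M$ by substituting $t$ with an appropriately rescaled time variable. This is exactly the ``time compression by a factor of $\lambda$'' mentioned at the start of the proof.

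The first step is to make this correspondence precise. Setting $\tilde f(l,s) := f_\M(l, s/\lambda)$, we have $\dot{\tilde f}(l, s) = \tfrac{1}{\lambda}\,\dot f_\M(l, s/\lambda)$, hence
\[
-\dot{\tilde f}(l, s) \;=\; \tfrac{1}{\lambda}\opttwo_{a\in\Sigma(l)}\sum_{l'}\mathbf{Q}(l,a,l')\,f_\M(l', s/\lambda) \;=\; \opttwo_{a\in\Sigma(l)}\sum_{l'}\mathbf{Q}^{\|\cdot\|}(l,a,l')\,\tilde f(l', s),
\]
which is exactly the Bellman equation for $\M^{\|\cdot\|}$. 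Since the goal-indicator boundary condition is preserved under the reparameterization, uniqueness of solutions -- which follows from the existence of optimal piecewise-constant timed-positional strategies cited in the preliminaries -- yields the pointwise identity $f_{\M^{\|\cdot\|}}(l, s) = f_\M(l, s/\lambda)$ on the matched time intervals.

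With the two value functions linked, approximations and strategies are transported by the same rescaling. Given a $\pi$-approximation $p$ of $f_{\M^{\|\cdot\|}}$, the function $\bar p(l, t) := p(l, \lambda t)$ satisfies $\|\bar p(t) - f_\M(t)\| = \|p(\lambda t) - f_{\M^{\|\cdot\|}}(\lambda t)\| \le \pi$ at every $t$ in the corresponding interval. The same construction converts any near-optimal timed-positional strategy $\sigma(l, s)$ for $\M^{\|\cdot\|}$ into a strategy $\bar\sigma(l, t) := \sigma(l, \lambda t)$ for $\M$, and because any opponent counter-strategy in $\M$ can itself be rescaled back to a counter-strategy in $\M^{\|\cdot\|}$, the induced value against $\bar\sigma$ inherits the approximation guarantee verbatim.

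The main subtlety I anticipate is bookkeeping rather than mathematics: one must verify that the class of measurable, cylindrical, timed-positional schedulers used throughout the paper is closed under the rescaling $s \mapsto \lambda s$ for both players, and that the min-max saddle-point value of the game is preserved under the simultaneous rescaling of both strategy spaces. Once these formalities are handled, preservation of the max-norm error on the value approximation $p$ and of the strategy-value error on $g$ follows immediately from the pointwise identity above.
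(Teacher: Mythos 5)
Your proposal is correct and follows essentially the same route as the paper: both arguments rest on the observation that $\M$ and $\M^{\|\cdot\|}$ are related by a linear rescaling of time, and both transport approximations and strategies via the map $\sigma(l,\cdot)\mapsto\sigma(l,\lambda\,\cdot)$, which is exactly the scheduler bijection the paper constructs. The only difference is cosmetic: you verify the value identity by a change of variables in the Bellman equation~\eqref{eq:diffeq} (using $\mathbf{Q}^{\|\cdot\|}=\frac{1}{\lambda}\mathbf{Q}$ and uniqueness of solutions), whereas the paper asserts the corresponding fact directly at the level of schedulers and the induced stochastic processes; note also that the bookkeeping between the two horizons ($T$ versus $\lambda T$, i.e.\ whether the rescaling is by $\lambda$ or $1/\lambda$) is left equally loose in your write-up and in the paper's own statement and proof.
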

\begin{proof}
To prove this claim, we define the bijection $b:\mathcal S[\M^{\|\cdot\|}]
\rightarrow \mathcal S[\M]$ between schedulers of $\M^{\|\cdot\|}$ and $\M$ that
maps each scheduler $s\in  S[\M^{\|\cdot\|},T]$ to a scheduler $s' \in \mathcal
S[\M,\lambda T]$ with $s'(l,t) = s(l,\lambda t)$ for all $t\in [0,T]$. In other
words, we map each scheduler of $\M^{\|\cdot\|}$ to a scheduler of $\M$ in which
time has been stretched by a factor of $\lambda$. It is not too difficult to see
that the time-bounded reachability probability for time bound $\lambda T$ in
$\M$ under $s'=b(s)$ is equivalent to the time-bounded reachability probability
for time bound~$T$ for $\M^{\|\cdot\|}$ under~$s$. This bijection therefore
proves that the optimal time-bounded reachability probabilities are the same in
both games, and it also provides a procedure for translating approximately
optimal strategies of the game $\M^{\|\cdot\|}$ to the game $\M$. Since the
optimal reachability probabilities are the same in both games, an approximation
of the optimal reachability probability in $\M^{\|\cdot\|}$ with precision~$\pi$
must also be an approximation of the optimal reachability probability in
$\M^{\|\cdot\|}$ with precision~$\pi$.
\end{proof}

In order to solve general Markov games we can first \emph{uniformise} them, and
then apply Lemma~\ref{lem:boring}. 
If
$\M=(\locations,\locations_r,\locations_s,\Sigma,\ratematrix,\probabilitymatrix,\nu)$
is a continuous-time Markov game, then we define the uniformisation of $\M$ as
$\unif(\M)=(\locations,\locations_r,\locations_s,\Sigma,\ratematrix',\probabilitymatrix,\nu)$,
where $\ratematrix'$ is defined as follows. If $\lambda=\max_{l\in
\locations} \max_{a \in \Sigma(l)} \ratematrix(l,a,\locations\setminus \{l\})$,
then we define, for every pair of locations $l, l' \in \locations$, and every
action $a \in \Sigma(l)$:
\begin{equation*}
\ratematrix'(l, a, l') = \begin{cases}
\ratematrix(l, a, l') & \text{if $l \ne l'$}, \\
\lambda - \ratematrix(l, a, L) & \text{if $l = l'$}.
\end{cases}
\end{equation*}



Previous work has noted that, for the class of late schedulers, the optimal
time-bounded reachability probabilities and schedulers in $\M$ are identical to
the optimal time-bounded reachability probabilities and schedulers in
$\unif(\M)$~\cite{Rabe+Schewe/11/optimalSchedulersExist}. To see why, note that
Equation~\eqref{eq:diffeqAlt} does not refer to the entry $\ratematrix'(l, a,
l)$, and therefore the modifications made to the rate matrix by uniformisation
can have no effect on the choice of optimal action.

\begin{lemma}
\cite{Rabe+Schewe/11/optimalSchedulersExist}
For every continuous-time Markov game $\M$, the optimal time-bounded
reachability probabilities and schedulers of $\M$ are identical to the optimal
time-bounded reachability probabilities and schedulers of $\unif(\M)$.
\end{lemma}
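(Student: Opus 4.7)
The plan is to exploit the differential-equation characterisation of $f$ given in Equation~\eqref{eq:diffeqAlt}, and the observation already flagged in the paragraph preceding the lemma: the entry $\ratematrix(l,a,l)$ does not appear in that equation because the self-loop term contributes $\ratematrix(l,a,l)\cdot(f(l,t)-f(l,t))=0$. Since uniformisation only touches the diagonal entries of $\ratematrix$, the right-hand side of Equation~\eqref{eq:diffeqAlt} is pointwise identical in $\M$ and $\unif(\M)$ for every location $l$, enabled action $a\in\Sigma(l)$, and time $t$. Together with the terminal condition $f(l,T)=\mathbf{1}_{l\in G}$ and uniqueness of the solution (which is cited in the preliminaries via \cite{Bellman/57/DP,Miller/68/finiteStrategies,Martin67}), this gives $f^{\M}=f^{\unif(\M)}$.

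Next I would lift value equivalence to scheduler equivalence. Because the per-action quantity $\sum_{l'}\ratematrix(l,a,l')\bigl(f(l',t)-f(l,t)\bigr)$ agrees in the two games, the pointwise $\argopt$ returns the same set of optimal timed-positional actions at every $(l,t)$. Invoking the cited fact that piecewise-constant timed-positional strategies suffice for optimality \cite{Rabe+Schewe/11/optimalSchedulersExist,Zhang+Neuh/10/reachabilityCTMDPs,Bellman/57/DP}, any such optimal strategy constructed in one game is optimal in the other.

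Finally I would check that following a common scheduler gives the same reachability probability in both games. Fixing a timed-positional pair $(\S_r,\S_s)$ produces a time-inhomogeneous Markov chain whose transient distribution $Pr_{\S_{r+s}}(\cdot,t)$ solves a Kolmogorov forward equation whose generator is $\mathbf{Q}(l,\S(l,t),\cdot)$. Rewritten in the form of Equation~\eqref{eq:diffeqAlt}, the evolution of $\sum_{l\in G}Pr_{\S_{r+s}}(l,t)$ depends only on the off-diagonal rates $\ratematrix(l,a,l')$ with $l'\ne l$, so it agrees in $\M$ and $\unif(\M)$.

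The main obstacle I expect is bookkeeping around the enabled-action sets $\Sigma(l)$: one must ensure that uniformisation does not silently introduce new, pure-self-loop actions (with $\ratematrix'(l,a,l)=\lambda$ and all off-diagonal entries zero) as spurious optimisers — in particular, such a stalling action would be attractive to the safety player. The clean way to handle this is to fix $\Sigma(l)$ to be exactly the set of actions that were enabled in $\M$ and to carry this restriction into $\unif(\M)$; once that is done, the algebraic argument above goes through unchanged. Aside from this housekeeping, the proof reduces to the single observation that the dynamics of $f$ are insensitive to the diagonal of $\ratematrix$.
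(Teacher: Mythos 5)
Your proposal is correct and follows essentially the same route as the paper, which treats this lemma as a cited result and justifies it by the single observation that Equation~\eqref{eq:diffeqAlt} never refers to the diagonal entry $\ratematrix(l,a,l)$, so uniformisation cannot affect the dynamics of $f$ or the choice of optimal action. Your elaboration (uniqueness of the ODE solution, agreement of the $\argopt$ sets, and the forward-equation check for a fixed scheduler) fills in details the paper leaves to the cited reference, and your worry about spurious self-loop actions is already handled by the paper's definition of $\unif(\M)$, which only modifies rates of actions enabled in $\M$.
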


\section{Proof of Lemma~\ref{lem:individual}}

\begin{proof}
In order to prove this lemma, we will show that
$\|f(t-\varepsilon) - f_{p(t)}^t(t-\varepsilon)\|\leq \mu$. This implies the
claimed result, because by assumption we have $\|f_{p(t)}^t(t-\varepsilon) -
p(t-\varepsilon)\|\leq \nu$, and therefore the triangle inequality implies that
$\|f(t-\varepsilon) - p(t-\varepsilon)\|\leq \mu+\nu$.

We first prove that $\|f(t - \tau) - f^{t}_{f(t) + c}(t - \tau)\| = c$ for every
constant~$c$ and every $\tau \in [0, \epsilon]$. In other words, by increasing
the values of each location by~$c$ at time~$t$, we increase the values given
by~$f$ by~$c$ on the interval $[t - \epsilon, t]$. To see this, note
that by definition we have $\sum_{l'\in L} \mathbf{Q}(l,a,l') = 0$ for every
action~$a$, and therefore if we have $f'(t - \tau,l) = f(t - \tau,l) + c$ for
some time~$\tau \in [0, \epsilon]$, and every location~$l$, then we have:
\begin{equation*}
\sum_{l'\in L} \mathbf{Q}(l,a,l') f'(l, t - \tau) = \sum_{l'\in L}
\mathbf{Q}(l,a,l')(f(l, t - \tau) + c) = \sum_{l'\in L} \mathbf{Q}(l,a,l') f(l,
t - \tau).
\end{equation*}
We can then use this equality and Equation~\eqref{eq:2} to conclude that
$-\dot{f}_{f(t) + c}^t(l, t - \tau) = -\dot{f}(l, t - \tau)$ for every $\tau \in
[t - \epsilon, t]$, and therefore, we have $f^{t}_{f(t) + c}(t - \tau) - f(t -
\tau) = c$ for every $\tau \in [0, \epsilon]$. Since $\|f(t) - p(t)\| \leq \mu$,
it follows that $\|f(t-\varepsilon) - f_{p(t)}^t(t-\varepsilon)\|\leq \mu$ as
required. 
\end{proof}

\section{Proof of Lemma~\ref{extheo:single}}

\begin{lemma}
\label{lem:p1-01}
If $\epsilon \le 1$, then we have $p_1(l, t) \in [0, 1]$ for all $t \in [0,
T]$.
\end{lemma}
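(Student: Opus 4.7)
The plan is to proceed by backward induction on the number of intervals in which $p_1$ has already been constructed, starting from time $T$. The base case is immediate: by definition $p_1(l,T) \in \{0,1\} \subseteq [0,1]$ for all $l \in \locations$. For the inductive step, I assume that $p_1(l',t) \in [0,1]$ for every $l' \in \locations$, and I show that $p_1(l, t - \tau) \in [0,1]$ for every $\tau \in [0,\epsilon]$ and every $l \in \locations$.

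The key observation is that the definition of $c_l^t$ simplifies nicely in a normed game. Because the game is normed, $\sum_{l' \in \locations} \ratematrix(l, a_l^t, l') = 1$, so $\mathbf{Q}(l, a_l^t, l) = \ratematrix(l, a_l^t, l) - 1$ and $\mathbf{Q}(l, a_l^t, l') = \ratematrix(l, a_l^t, l')$ for $l' \ne l$. Substituting into the defining formula for $c_l^t$ and grouping terms gives
\begin{equation*}
c_l^t \;=\; \sum_{l' \in \locations} \ratematrix(l, a_l^t, l') \, p_1(l', t) \;-\; p_1(l, t) \;=\; M - p_1(l, t),
\end{equation*}
where $M := \sum_{l' \in \locations} \ratematrix(l, a_l^t, l') \, p_1(l', t)$ is a convex combination of the values $p_1(l', t)$. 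By the inductive hypothesis every such $p_1(l', t)$ lies in $[0,1]$, so $M \in [0,1]$.

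Plugging this into the linear formula $p_1(l, t - \tau) = p_1(l, t) + \tau \cdot c_l^t$ yields
\begin{equation*}
p_1(l, t - \tau) \;=\; (1 - \tau) \, p_1(l, t) \;+\; \tau \, M.
\end{equation*}
Since $\tau \in [0, \epsilon]$ and $\epsilon \le 1$ by hypothesis, both coefficients $1 - \tau$ and $\tau$ are nonnegative and sum to one, so $p_1(l, t - \tau)$ is itself a convex combination of two quantities in $[0,1]$, and therefore lies in $[0,1]$. This completes the induction and proves the lemma.

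I expect no real obstacles. The only subtlety is noticing that the argument is completely insensitive to which action $a_l^t$ is selected by $\opt$ — the convex-combination identity holds for the coefficient vector $\ratematrix(l, a, \cdot)$ of \emph{any} enabled action, so the reachability/safety distinction and the specific argopt tie-breaking play no role in the bound. The essential ingredients are only (i) normedness, which makes each row of $\ratematrix$ a probability distribution, and (ii) the hypothesis $\epsilon \le 1$, which guarantees that the linear extrapolation step length does not exceed the ``convex'' range.
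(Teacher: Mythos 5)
Your proof is correct and follows essentially the same route as the paper: backward induction over the $\epsilon$-intervals, using normedness to express $p_1(l,t-\tau)$ as a convex combination of values that lie in $[0,1]$ by the inductive hypothesis. If anything, your exact identity $p_1(l,t-\tau)=(1-\tau)\,p_1(l,t)+\tau M$ is a slightly cleaner rendering than the paper's, which passes through a one-sided inequality before appealing to the weighted-average structure.
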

\begin{proof}
We will prove this by induction over the intervals $[t - \epsilon, t]$. The
base case is trivial since we have by definition that either $p_1(l, T) = 0$ or
$p_1(l, T) = 1$. Now suppose that $p_1(l, t) \in [0, 1]$ for some
$\epsilon$-interval $[t-\epsilon, t]$. We will prove that $p_1(l, t - \tau) \in
[0, 1]$ for all $\tau \in [0, \epsilon]$.

From the definition of~$p_1$, we know that $\dot{p_1}(l, t - \tau) = c_l^t =
\mathbf{R}(l,a_l^t,l') \cdot (p_1(l',t) - p_1(l, t))$ for all $\tau \in [0,
\epsilon]$. Therefore, since $\tau \le \epsilon \le 1$ we have:
\begin{align*}
p_1(l, t - \tau) &= p_1(l, t) + \tau \cdot \sum_{l'\in L}\mathbf{R}(l,a_l^t,l')
\cdot (p_1(l',t) - p_1(l, t))\\
&\le p_1(l, t) + \sum_{l'\in L}\mathbf{R}(l,a_l^t,l') \cdot (p_1(l',t) - p_1(l,
t))\\
&= \left(1-\sum_{l'\ne L}\mathbf{R}(l,a_l^t,l')\right) \cdot p_1(l, t) +
\sum_{l'\ne L}\mathbf{R}(l,a_l^t,l') \cdot p_1(l',t).
\end{align*}
Since we are considering normed Markov games, we have that $\sum_{l'\ne
L}\mathbf{R}(l,a_l^t,l') \le 1$, and therefore $p_1(l, t - \tau)$ is a weighted
average over the values $p_1(l', t)$ where $l' \in L$. From the inductive
hypothesis, we have that $p_1(l', t) \in [0, 1]$ for every $l' \in L$, and
therefore a weighted average over these values must also lie in $[0, 1]$.
\end{proof}

\begin{lemma}
\label{lem2:p2}
If $\epsilon \le 1$ then we have $-\dot{f}^t_{p_1(t)}(l, t - \tau) \in [-1, 1]$
for every $\tau \in [0, \epsilon]$.
\end{lemma}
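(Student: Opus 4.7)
The plan is to work from the equivalent form of the Bellman equation given in Equation~\eqref{eq:diffeqAlt},
\begin{equation*}
-\dot{f}^t_{p_1(t)}(l,\tau) \;=\; \opttwo_{a\in\Sigma(l)} \sum_{l'\in\locations}\ratematrix(l,a,l')\,\bigl(f^t_{p_1(t)}(l',\tau)-f^t_{p_1(t)}(l,\tau)\bigr),
\end{equation*}
which is available because each row of $\mathbf{Q}$ sums to zero. Once we know that every value $f^t_{p_1(t)}(l',\tau)$ lies in $[0,1]$, each bracketed difference lies in $[-1,1]$, and the weights $\ratematrix(l,a,\cdot)$ sum to $1$ because the game is normed. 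The triangle inequality then bounds the weighted sum by $1$ in absolute value for every enabled action $a$, so the $\opttwo$ over $a\in\Sigma(l)$ (whether $\max$ or $\min$) also lies in $[-1,1]$, which is exactly the claim.

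The only substantive ingredient is therefore the boundedness $f^t_{p_1(t)}(l,\tau)\in[0,1]$ throughout $[t-\varepsilon,t]$. The terminal vector $f^t_{p_1(t)}(\cdot,t)=p_1(t)$ already lies in $[0,1]^{|\locations|}$ by the just-established Lemma~\ref{lem:p1-01}; this is the single place where the hypothesis $\varepsilon\le 1$ is used. For propagating this back in time I would appeal to the standard invariance of the unit box under the Bellman dynamics: if some component first reaches value $1$ at an interior point $\tau^\ast$, then every bracketed difference $f^t_{p_1(t)}(l',\tau^\ast)-1$ is non-positive, so the $\opttwo$ is non-positive as well, which, after reparameterising by $s=t-\tau$, prevents the component from exceeding $1$ as $s$ grows. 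A symmetric argument at value $0$ rules out dropping below zero. The main obstacle is making this invariance argument fully rigorous at the boundary, but it is a routine consequence of the Lipschitz continuity of the right-hand side (or, alternatively, of the probabilistic interpretation of $f^t_{p_1(t)}$ as an optimal reachability probability from a sub-probability terminal vector). Everything else in the proof is direct bookkeeping.
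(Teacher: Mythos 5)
Your proposal is correct and follows essentially the same route as the paper: both reduce the claim to the boundedness $f^t_{p_1(t)}(l,\tau)\in[0,1]$ (using Lemma~\ref{lem:p1-01} for the terminal vector and the interpretation of $f^t_{p_1(t)}$ as optimal reachability probabilities to propagate it back in time), and then bound the rewritten Bellman right-hand side by $1$ in absolute value using that the rates of a normed game sum to one. Your explicit unit-box invariance argument is merely a more detailed justification of the propagation step that the paper dispatches in one sentence via the probabilistic interpretation, which you also note as an alternative.
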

\begin{proof}
Lemma~\ref{lem:p1-01} implies that $f^t_{p_1(t)}(l, t) = p_1(l, t) \in [0, 1]$
for all $l \in L$. Since the system of differential equations given
by~\eqref{eq:2} gives optimal reachability probabilities under the assumption
that $f^t_{p_1(t)}(l, t) = p_1(l, t)$, we must have that $f^t_{p_1(t)}(l, t -
\tau) \in [0, 1]$ for all $\tau \in [0, \epsilon]$. 

We first prove that $-\dot{f}^t_{p_1(t)}(l, t - \tau) \le 1$. We will prove this
for the reachability player, the proof for the safety player is analogous. By
definition we have:
\begin{equation*}
-\dot{f}_{x}^t(l,t-\tau)=\max_{a\in\Sigma(l)}\sum_{l'\in L} \mathbf{R}(l,a,l')
(f_{p_1(t)}^t(l',t-\tau) - f_{p_1(t)}^t(l,t-\tau)).
\end{equation*}
Since we have shown that $f_{p_1(t)}^t(l',t-\tau) \in [0, 1]$ for all $l$, and we have
$\sum_{l'\in L} \mathbf{R}(l,a,l') = 1$ for every action $a$ in a normed Markov
game, we obtain:
\begin{equation*}
\max_{a\in\Sigma(l)}\sum_{l'\in L} \mathbf{R}(l,a,l') (f_{p_1(t)}^t(l',t-\tau) -
f_{p_1(t)}^t(l,t-\tau))  \le \max_{a\in\Sigma(l)}\sum_{l'\in L} \mathbf{R}(l,a,l') (1 -
0) = 1
\end{equation*}
To prove that $-\dot{f}^t_{p_1(t)}(l, t - \tau) \ge -1$ we use a similar
argument: 
\begin{equation*}
\max_{a\in\Sigma(l)}\sum_{l'\in L} \mathbf{R}(l,a,l') (f_{p_1(t)}^t(l',t-\tau) -
f_{p_1(t)}^t(l,t-\tau))  \ge \max_{a\in\Sigma(l)}\sum_{l'\in L} \mathbf{R}(l,a,l') (0 -
1) = -1 
\end{equation*}
Therefore we have $-\dot{f}^t_{p_1(t)}(l, t - \tau) \in [-1, 1]$.
\end{proof}

We can now provide a proof for Lemma~\ref{extheo:single}.

\begin{proof}
Lemma~\ref{lem2:p2} implies that $-\dot{f}^t_{p_1(t)}(l, t - \tau) \in [-1, 1]$
for every $\tau \in [0, \epsilon]$. Since the rate of change of $f^t_{p_1(t)}$
is in the range $[-1, 1]$, we know that $f^t_{p_1(t)}$ can change by at
most~$\tau$ in the interval $[t - \tau, t]$. We also know
that $f^t_{p_1(t)}(l, t) = p_1(l, t)$, and therefore we must have the following
property:
\begin{equation}
\label{keyeqn}
\| f^t_{p_1(t)}(l, t - \tau) - p_1(l, t)\| \le \tau.
\end{equation}

The key step in this proof is to show that $\| \dot{f}^t_{p_1(t)}(l, t - \tau) -
\dot{p}_1(l, t - \tau) \| \le 2 \cdot \tau$ for all $\tau \in [0, \epsilon]$.
Note that by definition we have $\dot{p}_1(l, t - \tau) = \dot{p}_1(l, t)$ for
all $\tau \in [0, \epsilon]$, and so it suffices to prove that $\|
\dot{f}^t_{p_1(t)}(l, t - \tau) - \dot{p}_1(l, t) \| \le 2 \cdot \tau$.

Suppose that $l$ is a location for the reachability player, let $a^t_l$ be the
optimal action at time~$t$, and let $a^{t-\tau}_l$ be the optimal action at
$t-\tau$. We have the following:
\begin{align*}
-\dot{p}_1(l, t) - 2 \cdot \tau &= 
\sum_{l'\in L} \mathbf{R}(l,a^t_l,l') (p_1(l', t) - p_1(l, t)) - 2 \cdot \tau \\
&\le \sum_{l'\in L} \mathbf{R}(l,a^t_l,l') (f^t_{p_1(t)}(l', t - \tau) -
f^t_{p_1(t)}(l, t - \tau)) \\
&\le \sum_{l'\in L} \mathbf{R}(l,a^{t-\tau}_l,l') (f^t_{p_1(t)}(l', t - \tau) -
f^t_{p_1(t)}(l, t - \tau)) =  -\dot{f}^t_{p_1(t)}(l, t - \tau)
\end{align*}
The first equality is the definition of $-\dot{p}_1(l, t)$. The first inequality
follows from Equation~\eqref{keyeqn} and the fact that $\mathbf{R}(l,a,l') = 1$.
The second inequality follows from the fact that $a^{t-\tau}_l$ is an optimal
action at time $t-\tau$, and the final equality is the definition of
$-\dot{f}^t_{p_1(t)}(l, t - \tau)$. Using the same techniques in a different
order gives:
\begin{align*}
-\dot{f}^t_{p_1(t)}(l, t - \tau) &= \sum_{l'\in L} \mathbf{R}(l,a^{t-\tau}_l,l')
(f^t_{p_1(t)}(l', t - \tau) - f^t_{p_1(t)}(l, t - \tau))  \\
&\le \sum_{l'\in L} \mathbf{R}(l,a^{t-\tau}_l,l') (p_1(l', t) - p_1(l, t)) + 2 \cdot \tau \\
&\le \sum_{l'\in L} \mathbf{R}(l,a^{t}_l,l') (p_1(l', t) - p_1(l, t
)) + 2 \cdot \tau = -\dot{p}_1(l, t) + 2 \cdot \tau
\end{align*}

To prove the claim for a location~$l$ belonging to the safety player, we use
the same arguments, but in reverse order. That is, we have:
\begin{align*}
-\dot{p}_1(l, t) - 2 \cdot \tau &= \sum_{l'\in L} \mathbf{R}(l,a^{t}_l,l') (p_1(l', t) - p_1(l, t)) - 2 \cdot \tau \\ 
&\le \sum_{l'\in L} \mathbf{R}(l,a^{t - \tau}_l,l') (p_1(l', t) - p_1(l,
t)) - 2 \cdot \tau \\ 
&\le \sum_{l'\in L} \mathbf{R}(l,a^{t-\tau}_l,l') (f^t_{p_1(t)}(l', t - \tau) -
f^t_{p_1(t)}(l, t - \tau)) = -\dot{f}^t_{p_1(t)}(l, t - \tau)
\end{align*}
We also have:
\begin{align*}
-\dot{f}^t_{p_1(t)}(l, t - \tau) &= \sum_{l'\in L} \mathbf{R}(l,a^{t-\tau}_l,l')
(f^t_{p_1(t)}(l', t - \tau) - f^t_{p_1(t)}(l, t - \tau)) \\
&\le \sum_{l'\in L} \mathbf{R}(l,a^{t}_l,l') (f^t_{p_1(t)}(l', t - \tau) -
f^t_{p_1(t)}(l, t - \tau)) \\
& \le \sum_{l'\in L} \mathbf{R}(l,a^t_l,l')
(p_1(l', t) - p_1(l, t)) + 2 \cdot \tau = -\dot{p}_1(l, t) + 2 \cdot \tau 
\end{align*}
Therefore, we have shown that $\| \dot{f}^t_{p_1(t)}(l, t - \tau) -
\dot{p}_1(l, t - \tau) \| \le 2 \cdot \tau$ for all $\tau \in [0, \epsilon]$ and
for every $l \in L$.

We now complete the proof by arguing that $\|f_{p_1(t)}^t(t-\tau)-p_1(t-\tau)\|
\le \tau^2$. Let the $d(l, t- \tau) := \|f_{p_1(t)}^t(t-\tau)-p_1(t-\tau)\|$ for
every $\tau \in [0, \epsilon]$. Our arguments so far imply:
\begin{equation*}
\dot{d}(l, t- \tau) \le \|\dot{f}_{p_1(t)}^t(t-\tau)-\dot{p}_1(t-\tau)\| \le
2 \cdot \tau.
\end{equation*}
Therefore, we have:
\begin{equation*}
d(l, t- \tau) \le \int_0^\tau 2 \cdot \tau d\tau = \tau^2.
\end{equation*}
This allows us to conclude that $\error(1, \epsilon) :=
\|f_{p_1(t)}^t(t-\epsilon)-p_1(t-\epsilon)\| \le \epsilon^2$.
\end{proof}

\section{Proof of Lemma~\ref{lem:single}}



We begin by proving the following auxiliary lemma, which shows that the
difference between $p_1$ and $g_1$ is bounded by $\varepsilon^2$.
\begin{lemma}
\label{lem:errorp}
We have $\|g_1(t-\varepsilon)-p_1(t-\varepsilon)\| \le \varepsilon^2$.
\end{lemma}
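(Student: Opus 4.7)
\section*{Proof proposal for Lemma~\ref{lem:errorp}}

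The plan is to mirror the structure of the proof of Lemma~\ref{extheo:single}, exploiting the fact that $g_1$ and $p_1$ share the same value at the right endpoint of the interval and differ only in how fast they evolve away from it. Concretely, by the definition of $g_1$ we have $g_1(l,t) = f^t_{p_1(t)}(l,t) = p_1(l,t)$ for every location $l$, so the difference $d(l,t-\tau) := g_1(l,t-\tau) - p_1(l,t-\tau)$ vanishes at $\tau = 0$. The whole job is to bound $\dot d$ on $[0,\varepsilon]$ and integrate.

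First I would establish the two standing facts that make the integration go through. By the same invariance argument as in Lemma~\ref{lem:p1-01}, the system of ODEs defining $g_1$ preserves $[0,1]$ (this is standard: $g_1$ is a vector of reachability probabilities under a fixed memoryless strategy for the reachability player against an optimal counter-strategy). Given $g_1 \in [0,1]$, rewriting $\sum_{l'} \mathbf{Q}(l,a,l') g_1(l',t-\tau) = \sum_{l'\neq l}\mathbf{R}(l,a,l')\bigl(g_1(l',t-\tau) - g_1(l,t-\tau)\bigr)$ exactly as in Lemma~\ref{lem2:p2} gives $|\dot g_1(l,t-\tau)| \le 1$. Since $g_1(l',t) = p_1(l',t)$, this yields the key estimate $|p_1(l',t) - g_1(l',t-\tau)| \le \tau$ for every $l'$ and every $\tau \in [0,\varepsilon]$.

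Next I would bound $|\dot p_1(l,t-\tau) - \dot g_1(l,t-\tau)|$ by $2\tau$. For a reachability location $l$ this is immediate: $p_1$ and $g_1$ use the \emph{same} fixed action $a_l^t$, so
\[
\dot p_1(l,t-\tau) - \dot g_1(l,t-\tau) \;=\; \sum_{l'} \mathbf{Q}(l,a_l^t,l')\bigl(g_1(l',t-\tau) - p_1(l',t)\bigr),
\]
and writing the sum in the telescoped form $\sum_{l'\neq l}\mathbf{R}(l,a_l^t,l')\bigl((g_1(l',t-\tau)-p_1(l',t)) - (g_1(l,t-\tau)-p_1(l,t))\bigr)$ and applying $|p_1(\cdot,t) - g_1(\cdot,t-\tau)| \le \tau$ gives the bound $2(1-\mathbf{R}(l,a_l^t,l))\tau \le 2\tau$. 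For a safety location the situation is analogous to Lemma~\ref{extheo:single}: the argmin action used by $g_1$ at time $t-\tau$ may differ from the argmin action $a_l^t$ used by $p_1$, so I would run the standard two-sided argument, plugging $a_l^t$ into the min defining $\dot g_1$ to obtain one inequality and plugging the argmin at $t-\tau$ into the min defining $\dot p_1$ to obtain the other, each estimate losing at most $2\tau$ by the previous paragraph.

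Finally, since $d(l,t) = 0$ and $|\dot d(l,t-\tau)| \le 2\tau$ pointwise, integrating from $0$ to $\varepsilon$ yields $|d(l,t-\varepsilon)| \le \int_0^\varepsilon 2s\,ds = \varepsilon^2$, which is the claimed bound. I expect no serious obstacle here: the only delicate step is the safety case, which is handled exactly by the two-directional comparison already used in Lemma~\ref{extheo:single}. The rest is bookkeeping together with the elementary inequality $\sum_{l'}|\mathbf{Q}(l,a,l')| \le 2$ that holds for every normed Markov game.
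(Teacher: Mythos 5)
Your proof is correct, but it takes a different route from the paper. The paper's proof is a reduction: it observes that $g_1$ is itself the value function of a continuous-time Markov game in which the reachability player's action set has been restricted to the single action $a_l^t$ at each of his locations, checks that the single-$\epsilon$-net approximation $p_1^g$ of \emph{that} reduced game coincides with $p_1$ (because $g_1(l,t)=p_1(l,t)$ forces the same action choices at safety locations), and then invokes Lemma~\ref{extheo:single} for the reduced game as a black box to get the $\varepsilon^2$ bound. You instead inline the computation: you re-establish $g_1\in[0,1]$ and $|\dot g_1|\le 1$ (the analogues of Lemmas~\ref{lem:p1-01} and~\ref{lem2:p2}), derive $|p_1(l',t)-g_1(l',t-\tau)|\le\tau$, bound $|\dot p_1-\dot g_1|\le 2\tau$ directly --- trivially at reachability locations since both functions use the same fixed action, and via the standard two-sided exchange of minimisers at safety locations --- and integrate. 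Both arguments are sound and yield the same constant; the paper's version is shorter and reuses existing machinery, while yours is self-contained and makes visible exactly where the factor $2\tau$ in the derivative gap comes from, at the cost of essentially re-proving Lemma~\ref{extheo:single} for the restricted game. One presentational nit: your final bound implicitly relies on $p_1(\cdot,t)\in[0,1]$, hence on $\varepsilon\le 1$ via Lemma~\ref{lem:p1-01}; the paper's statement of Lemma~\ref{lem:errorp} omits this hypothesis too, but it is worth flagging since it is genuinely used.
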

\begin{proof}
Suppose that we apply single-$\epsilon$ nets to approximate the solution of the
system of differential equations~$g_1$ over the interval $[t-\epsilon, t]$ to
obtain an approximation $p^{g}_{1}$. To do this, we select for each location $l \in
L_s$ an action~$a$ that satisfies:
\begin{equation*}
a \in \argopttwo_{a\in\Sigma(l)} \sum_{l'\in L}\mathbf{Q}(l,a,l') \cdot g_1(l',t).
\end{equation*}
Since $g_1(l, t) = f^t_{p_1(t)}(l, t) = p_1(l, t)$ for every location~$l$, we
have that $a = a_l^t$, where $a_l^t$ is the action chosen by $p_1$ at $l$. In
other words, the approximations~$p_1$ and~$p^{g}_1$ choose the same actions for
every location in $L_s$. Therefore, for all locations $l \in L$, we have $c_l^t
= \sum_{l'\in L}\mathbf{Q}(l,a_l^t,l') \cdot p^{g}_1(l',t) = \sum_{l'\in
L}\mathbf{Q}(l,a_l^t,l') \cdot p_1(l',t)$, which implies that for every time
$\tau \in [0, \epsilon]$ we have:
\begin{align*}
p^{g}_1(l, t - \tau) = p_1^{g}(l, t) + \tau \cdot c_l^t = p_1(l, t) + \tau
\cdot c_l^t = p_1(l, t - \tau).
\end{align*}
That is, the approximations~$p_1$ and $p^g_1$ are identical.

Note that the system of differential equations~$g_1$ describes a continuous-time
Markov game in which some actions for the reachability player have been removed.
Since~$g_1$ describes a CTMG, we can apply Lemma~\ref{extheo:single} to obtain
$\|g_1(t-\tau)-p^g_1(t-\tau)\| \le \epsilon^2$. Since $p_1(t -
\epsilon) = p_1^g(t - \epsilon)$, we can conclude that $\|g_1(t-\tau) -
p_1(t-\tau)\| \le \epsilon^2$.
\end{proof}

Lemma~\ref{lem:single} now follows from Lemma~\ref{lem:errorp} and
Lemma~\ref{extheo:single}.

\section{Proof of Theorem~\ref{thm:singlenets}}

\begin{proof}
As we have argued in the main text, in order to guarantee a precision of $\pi$,
it suffices to choose $\epsilon = \frac{\pi}{T}$, which gives
$\frac{T^2}{\pi}$ many intervals $[t - \epsilon, t]$ for which $p_1$ must be
computed. It is clear that, for each interval, the approximation $p_1$ can be
computed in $O(\mathcal M)$ time, and therefore, the total running time will
be $O(|\M| \cdot T \cdot \frac{T}{\pi})$.
\end{proof}

\section{Proof of Lemma~\ref{lem:double}}

\begin{proof}
We begin by considering the system of differential equations that define $p_2$,
as given in Equation~\eqref{eq:approx2}:
\begin{equation*}
-\dot{p}_2(l,\tau) = \opttwo_{a\in\Sigma(l)}
\sum_{l'\in\locations} \ratematrix(l,a,l') \cdot \left(p_1(l',\tau) -
p_1(l,\tau)\right)\quad \forall l\in\locations.
\end{equation*}
The error bounds given by Lemma~\ref{extheo:single} imply that $\| p_1(t -
\tau)-f_{p_2(t)}^t(t - \tau) \| \le \tau^2$ for every $\tau \in [0,
\epsilon]$. Therefore, for every pair of locations $l,l' \in L$ and every $\tau
\in [t - \epsilon, t]$ we have:
\begin{equation*}
\|(p_1(l',t - \tau) - p_1(l, t - \tau)) - (f_{p_2(t)}^t(l', t -
\tau)-f_{p_2(t)}^t(l, t - \tau))\| \le 2 \cdot \tau^2.
\end{equation*}
Since we are dealing with normed Markov games, we have $\sum_{l' \in
\locations}\ratematrix(l, a, l') = 1$ for every location $l \in L$ and every
action $a \in A(l)$. Therefore, we also have for every action $a$:
\begin{multline*}
\| \sum_{l'\in\locations} \ratematrix(l,a,l') \cdot \left(p_1(l',t - \tau)
- p_1(l,t - \tau)\right) \\ - \sum_{l'\in\locations} \ratematrix(l,a,l') \cdot
(f_{p_2(t)}^t(l', t - \tau)-f_{p_2(t)}^t(l, t - \tau) \| \le  2 \cdot \tau^2.
\end{multline*}
This implies that $\|\dot{p}_2(l,t - \tau) - \dot{f}^t_{p_2(t)}(l, t - \tau)\|
\le 2 \cdot \tau^2$.

We can obtain the claimed result by integrating over this difference:
\begin{equation*}
\|p_2(l,t - \tau) - f^t_{p_2(t)}(l, t - \tau)\| = \int_0^\tau
\|\dot{p}_2(l,t - \tau) - \dot{f}^t_{p_2(t)}(l, t - \tau)\| \le
\frac{2}{3}\tau^3.
\end{equation*}
Therefore, the total amount of error incurred by~$p_2$ in the interval $[t -
\epsilon, t]$ is at most $\frac{2}{3}\epsilon^3$.
\end{proof}

\section{Proof of Lemma~\ref{lem:doubleConcrete}}

To begin, we prove an auxiliary lemma, that will be used throughout the rest of
the proof.

\begin{lemma}
\label{lem:dcuseful}
Let $f$ and $g$ be two functions such that $\|f(t - \tau) - g(t - \tau)\| \le
c \cdot \tau^k$. If $a_f$ is an action that maximises (resp. minimises)
\begin{equation}
\label{eqn:fclub}
\opttwo_{a\in\Sigma(l)} \sum_{l'\in\locations} \mathbf{Q}(l,a,l') \cdot f(l',t
- \tau), 
\end{equation}
and $a_g$ is is an action that maximises (resp. minimises)
\begin{equation}
\label{eqn:gclub}
\opttwo_{a\in\Sigma(l)} \sum_{l'\in\locations} \mathbf{Q}(l,a,l') \cdot g(l',t
- \tau), 
\end{equation}
then we have:
\begin{equation*}
\|\sum_{l'\in\locations} \mathbf{R}(l,a^g,l') \cdot g(t - \tau) -
\sum_{l'\in\locations} \mathbf{R}(l,a^f,l') \cdot f(t - \tau) \| \le 3 \cdot c
\cdot \tau^k.
\end{equation*}
\end{lemma}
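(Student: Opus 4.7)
The plan is to reduce the claim to a comparison of four scalar quantities and apply the hypothesis together with the optimality conditions. First, I would exploit the fact that the game is normed, so $\sum_{l'} \mathbf{R}(l,a,l')=1$ and $\mathbf{Q}(l,a,l')=\mathbf{R}(l,a,l')$ for $l'\neq l$ while $\mathbf{Q}(l,a,l)=\mathbf{R}(l,a,l)-1$. Consequently, for any vector $h$,
\begin{equation*}
\sum_{l'} \mathbf{Q}(l,a,l')\,h(l') \;=\; \sum_{l'} \mathbf{R}(l,a,l')\,h(l') \;-\; h(l),
\end{equation*}
and since the correction $-h(l)$ does not depend on $a$, the $\arg\opt$ over the $\mathbf{Q}$-sum coincides with the $\arg\opt$ over the $\mathbf{R}$-sum. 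Thus $a^f$ also $\opt$-imises $\sum_{l'} \mathbf{R}(l,a,l')\,f(l',t-\tau)$ and $a^g$ also $\opt$-imises $\sum_{l'} \mathbf{R}(l,a,l')\,g(l',t-\tau)$, which is the form in which the conclusion is stated.

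Next I would introduce the four quantities
\begin{align*}
A &:= \textstyle\sum_{l'} \mathbf{R}(l,a^f,l')\,f(l',t-\tau), &
B &:= \textstyle\sum_{l'} \mathbf{R}(l,a^f,l')\,g(l',t-\tau),\\
C &:= \textstyle\sum_{l'} \mathbf{R}(l,a^g,l')\,f(l',t-\tau), &
D &:= \textstyle\sum_{l'} \mathbf{R}(l,a^g,l')\,g(l',t-\tau),
\end{align*}
so that the goal is $|A-D|\le 3c\tau^k$. The hypothesis $\|f(t-\tau)-g(t-\tau)\|\le c\tau^k$ combined with the fact that $\mathbf{R}(l,a,\cdot)$ is a probability distribution immediately gives $|A-B|\le c\tau^k$ and $|C-D|\le c\tau^k$. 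This is the only place where the hypothesis enters.

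The remaining ingredient is optimality: in the maximising case $A\ge C$ (because $a^f$ maximises for $f$) and $D\ge B$ (because $a^g$ maximises for $g$); in the minimising case the two inequalities flip. I would then bound $|A-D|$ by the triangle inequality $|A-D|\le|A-B|+|B-D|$. The first summand is at most $c\tau^k$ by the previous step. For the second, in the maximising case $D-B\ge 0$ and
\begin{equation*}
D-B \;\le\; (C + c\tau^k) - B \;\le\; (A + c\tau^k) - B \;\le\; 2c\tau^k,
\end{equation*}
using in turn $|C-D|\le c\tau^k$, optimality of $a^f$ for $f$, and $|A-B|\le c\tau^k$; the minimising case is completely symmetric. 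Combining yields $|A-D|\le 3c\tau^k$ as required.

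The argument is essentially pure bookkeeping; the main thing to watch is that the optimality inequalities flip between the max and min cases, which I would handle by writing the chain once and noting that reversing every inequality yields the symmetric argument. No deeper estimate is needed, and in fact the bound could be sharpened to $c\tau^k$, so the constant $3$ leaves comfortable slack for the triangle inequality step.
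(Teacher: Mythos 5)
Your proof is correct, and it takes a mildly but genuinely different route from the paper's. The paper works with the generator sums $\sum_{l'}\mathbf{Q}(l,a,l')h(l')$ throughout: it first shows that for any \emph{fixed} action the $f$- and $g$-versions of this sum differ by at most $2c\tau^k$ (the factor $2$ arising because $\sum_{l'}|\mathbf{Q}(l,a,l')|\le 2$ in a normed game), deduces that the two optimisers achieve $\mathbf{Q}$-values within $2c\tau^k$ of each other, and only at the end converts to $\mathbf{R}$-sums, paying a further $c\tau^k$ for the shift by $h(l)$ --- total $3c\tau^k$. You instead observe up front that the shift $\sum_{l'}\mathbf{Q}(l,a,l')h(l')=\sum_{l'}\mathbf{R}(l,a,l')h(l')-h(l)$ is action-independent, so $a^f$ and $a^g$ also optimise the $\mathbf{R}$-sums, and then run the standard ``optima of close functions are close'' argument directly on the convex combinations $A,B,C,D$, where swapping $f$ for $g$ costs only $c\tau^k$ because $\mathbf{R}(l,a,\cdot)$ is a probability distribution. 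Both arguments land on $3c\tau^k$, but yours buys something extra: your closing remark that the bound sharpens to $c\tau^k$ is correct (chain $A\ge C\ge D-c\tau^k$ and $D\ge B\ge A-c\tau^k$), whereas the paper's accounting through $\mathbf{Q}$ cannot see this. Since the constant from this lemma propagates into the constants of Lemmas~\ref{lem:doubleConcrete} and~\ref{lem:knets}, your tighter version would actually improve those downstream bounds.
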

\begin{proof}
We will provide a proof for the case where the equations must be maximises, the
proof for the minimisation case is identical. We begin by noting that the
property $\|f(t - \tau) - g(t - \tau)\| \le c \cdot \tau^k$, and the fact that
we consider only normed Markov games imply that, for every action~$a$ we have:
\begin{equation}
\label{eqn:protoss}
\|\sum_{l'\in\locations} \mathbf{Q}(l,a,l') \cdot f(l',t -
\tau) - \sum_{l'\in\locations} \mathbf{Q}(l,a,l') \cdot g(l',t -
\tau) \| \le 2 \cdot c \cdot \tau^k. 
\end{equation}
We use this to claim that the following inequality holds:
\begin{equation}
\label{eqn:zerg}
\|\sum_{l'\in\locations} \mathbf{Q}(l,a^g,l') \cdot g(l',t -
\tau) - \sum_{l'\in\locations} \mathbf{Q}(l,a^f,l') \cdot f(l',t -
\tau) \| \le 2 \cdot c \cdot \tau^k.
\end{equation}
To see why, suppose that 
\begin{equation*}
\sum_{l'\in\locations} \mathbf{Q}(l,a^g,l') \cdot
g(l',t - \tau) > \sum_{l'\in\locations} \mathbf{Q}(l,a^f,l') \cdot f(l',t -
\tau) + 2 \cdot c \cdot \tau^k.
\end{equation*} 
Then we could invoke Equation~\eqref{eqn:protoss} to argue that
$\sum_{l'\in\locations} \mathbf{Q}(l,a^g,l') \cdot f(l',t - \tau) >
\sum_{l'\in\locations} \mathbf{Q}(l,a^f,l') \cdot f(l',t - \tau)$, which
contradicts the fact that~$a^f$ achieves the maximum in
Equation~\eqref{eqn:fclub}. Similarly, if $\sum_{l'\in\locations}
\mathbf{Q}(l,a^f,l') \cdot f(l',t - \tau) > \sum_{l'\in\locations}
\mathbf{Q}(l,a^g,l') \cdot g(l',t - \tau) + 2 \cdot c \cdot \tau^k$, then we can
invoke Equation~\eqref{eqn:protoss} to argue that $a^g$ does not achieve the
maximum in Equation~\eqref{eqn:gclub}. Therefore, Equation~\eqref{eqn:zerg} must
hold.

Now, to finish the proof, we apply the fact that $\|f(t - \tau) - g(t -
\tau)\| \le c \cdot \tau^k$ to Equation~\eqref{eqn:zerg} to obtain:
\begin{equation*}
\| \sum_{l'\in\locations} \ratematrix(l, a^g, l')g(l', t - \tau) -
\sum_{l'\in\locations} \ratematrix(l, a^f, l') f(l',t - \tau) \| \le
3 \cdot c \cdot \tau^k
\end{equation*}
This complete the proof.
\end{proof}

To prove Lemma~\ref{lem:doubleConcrete} we will consider the following class of
strategies: play the action chosen by~$p_2$ for the first~$k$ transitions, and
then play the action chosen by~$p_1$ for the remainder of the interval. We will
denote the reachability probability obtained by this strategy as $g_2^k$, and we
will denote the error of this strategy as $\error^{k}_s(2, \epsilon) :=
\|g_2^k(t - \epsilon) - f^t_{p_2(t)}(t - \epsilon)\|$. Clearly, as~$k$
approaches infinity, we have that~$g_2^k$ approaches~$g_2$, and $\error^{k}_s(2,
\epsilon)$ approaches $\error_s(2, \epsilon)$. Therefore, in order to prove
Lemma~\ref{lem:doubleConcrete}, we will show that $\error^{k}_s(2,
\epsilon) \le 2 \cdot \epsilon^3$ for all~$k$.

We will prove error bounds on~$g_2^k$ by induction. The following lemma
considers the base case, where~$k = 1$. In other words, it considers the
strategy that plays the action chosen by~$p_2$ for the first transition, and
then plays the action chosen by~$p_1$ for the rest of the interval.

\begin{lemma}
\label{lem:dcbase}
If $\varepsilon \le 1$, then we have $\error^{1}_s(2, \epsilon) \le 2
\cdot \epsilon^3$.
\end{lemma}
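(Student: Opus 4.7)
The plan is to bound the pointwise difference in derivatives $|\dot{g}_2^1(l, t-\tau) - \dot{f}^t_{p_2(t)}(l, t-\tau)|$ by a term of order $\tau^2$, and then integrate over $\tau \in [0, \varepsilon]$ to obtain the claimed $O(\varepsilon^3)$ bound. Since $g_2^1$ and $f^t_{p_2(t)}$ agree at the right endpoint (both equal $p_2(t)$ at time~$t$), controlling the derivative gap is sufficient.

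First I would recall that $p_1$ and $g_1$ are, in this local setting, constructions starting from the initial condition $p_2(t)$ at time~$t$, so $f^t_{p_1(t)} = f^t_{p_2(t)}$. Lemma~\ref{extheo:single} then yields $\|p_1(t-\tau) - f^t_{p_2(t)}(t-\tau)\| \leq \tau^2$, and Lemma~\ref{lem:single} yields $\|g_1(t-\tau) - f^t_{p_2(t)}(t-\tau)\| \leq 2\tau^2$ for every $\tau \in [0, \varepsilon]$.

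Next I would write out the defining differential equations. For a reachability location $l$,
\begin{align*}
-\dot{g}_2^1(l, t-\tau) &= \sum_{l'} \mathbf{Q}(l, a_l^{t-\tau}, l') \cdot g_1(l', t-\tau), \\
-\dot{f}^t_{p_2(t)}(l, t-\tau) &= \max_{a \in \Sigma(l)} \sum_{l'} \mathbf{Q}(l, a, l') \cdot f^t_{p_2(t)}(l', t-\tau),
\end{align*}
where $a_l^{t-\tau}$ is the action chosen by $p_2$, i.e., one that maximises $\sum_{l'} \mathbf{Q}(l, a, l') p_1(l', t-\tau)$. I would then split the difference into two pieces: (a) the error from evaluating against $g_1$ instead of $f^t_{p_2(t)}$ on the right-hand side, controlled by $\sum_{l'} |\mathbf{Q}(l, a, l')| \cdot \|g_1 - f^t_{p_2(t)}\| = O(\tau^2)$; and (b) the error from $a_l^{t-\tau}$ being optimal for $p_1$ rather than for $f^t_{p_2(t)}$. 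Part (b) is precisely the setting of Lemma~\ref{lem:dcuseful} with $f = f^t_{p_2(t)}$, $g = p_1$, $c = 1$, $k = 2$, yielding an $O(\tau^2)$ bound. For a safety location $l$ the argument is simpler, since both $g_2^1$ and $f^t_{p_2(t)}$ use the minimising action and only part (a) contributes.

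Integrating the resulting inequality $|\dot{g}_2^1 - \dot{f}^t_{p_2(t)}|(l, t-\tau) \leq C\tau^2$ from $0$ to $\varepsilon$ then produces the $O(\varepsilon^3)$ bound; tracking the constants carefully through the two contributions should give precisely $2\varepsilon^3$. The main obstacle will be the book-keeping between $\mathbf{Q}$ and $\mathbf{R}$: Lemma~\ref{lem:dcuseful} is stated in terms of $\mathbf{R}$, whereas the differential equations use $\mathbf{Q}$. The identity $\sum_{l'} \mathbf{Q}(l, a, l') v(l') = \sum_{l'} \mathbf{R}(l, a, l') v(l') - v(l)$ lets us pass between them, and because the $-v(l)$ correction is identical for both actions being compared it cancels out of the difference, so the bound transfers cleanly.
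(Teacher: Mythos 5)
Your plan diverges from the paper's proof in a way that hides the main difficulty. The paper does not compare derivatives of two ODE solutions: it conditions on the time $t-\tau$ of the \emph{first discrete transition}, bounds the error incurred at that jump by $6\tau^2$ (via Lemma~\ref{lem:errorp}, Lemma~\ref{lem:single} and Lemma~\ref{lem:dcuseful}), and then integrates against the density $e^{\tau-\varepsilon}$ of the first jump time, using $e^{\tau-\varepsilon}\le 1$ to land exactly on $\int_0^\varepsilon 6\tau^2\,d\tau = 2\varepsilon^3$. The object $g_2^1$ is the value of a \emph{history-dependent} strategy (commit to $p_2$'s action only until the first jump, then switch to the $p_1$-strategy), and the equation you write for it, $-\dot{g}_2^1(l,s)=\sum_{l'}\mathbf{Q}(l,a_l^{s},l')\,g_1(l',s)$, is not its defining equation. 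The correct pre-jump dynamics are $-\dot v(l,s)=\sum_{l'\ne l}\mathbf{R}(l,a_l^{s},l')\bigl(g_1(l',s)-v(l,s)\bigr)$: the diagonal term involves $v$ itself, not $g_1(l,s)$. That term is precisely the damping that the paper's $e^{\tau-\varepsilon}$ weighting encodes; if you drop it you have an unjustified equation with an unaccounted error of order $\|g_2^1-g_1\|$, and if you keep it your derivative-difference argument becomes a (backwards) variation-of-constants computation that is essentially the paper's first-jump decomposition in disguise.

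There is also a quantitative problem with the constant. As stated, your part (a) contributes $2\cdot\|g_1-f^t_{p_2(t)}\|\le 4\tau^2$ and your part (b) contributes $3\tau^2$ from Lemma~\ref{lem:dcuseful} (with $c=1$, $k=2$), for a total of $7\tau^2$, which integrates to $\tfrac{7}{3}\varepsilon^3>2\varepsilon^3$. The paper avoids this by routing the reachability-player comparison through $\|g_1-p_1\|\le\tau^2$ (Lemma~\ref{lem:errorp}) rather than through $\|g_1-f^t_{p_2(t)}\|\le 2\tau^2$, obtaining $4\tau^2$ there and $6\tau^2$ at safety locations, so that the worst case integrates to exactly $2\varepsilon^3$. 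Your safety-player remark (that only part (a) contributes because both sides minimise) is a reasonable observation, but the reachability-player bookkeeping and the missing first-jump structure are genuine gaps that your sketch does not close.
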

\begin{proof}
Suppose that the first discrete transition occurs at time $t - \tau$, where
$\tau \in [0, \epsilon]$. Let~$l$ be a location belonging to the reachability
player, and let $a^{t-\tau}_l$ be the action that maximises~$p_2$ at time $t -
\tau$. By definition, we know that the probability of moving to a location~$l'$
is given by $\ratematrix(l, a_l^{t-\tau}, l')$, and we know that the
time-bounded reachability probabilities for each state $l'$ are given by
$g_1(l', t-\tau)$. Therefore, the outcome of choosing $a_l^{t-\tau}$ at time $t-
\tau$ is $\sum_{l'\in\locations} \ratematrix(l, a_l^{t-\tau}, l')g_1(l', t -
\tau)$. If $a^*$ is an action that would be chosen by $f^t_{p_2(t)}$ at time
$t-\tau$, then we have the following bounds:
\begin{align*}
&\| \sum_{l'\in\locations} \ratematrix(l, a_l^{t-\tau}, l')g_1(l', t - \tau) -
\sum_{l'\in\locations} \ratematrix(l, a^*, l') f^t_{p_2(t)}(l',t - \tau) \| \\
\le &\| \sum_{l'\in\locations} \ratematrix(l, a_l^{t-\tau}, l')p_1(l', t - \tau) -
\sum_{l'\in\locations} \ratematrix(l, a^*, l') f^t_{p_2(t)}(l',t - \tau) \| +
\tau^2  \\
\le& 4 \cdot \tau^2
\end{align*}
The first inequality follows from Lemma~\ref{lem:errorp}, and the second
inequality follows from Lemma~\ref{lem:dcuseful}.

Now suppose that~$l$ is a location belonging to the safety player. Since the
reachability player will follow $p_1$ during the interval $[t-\tau, t]$, we know
that the safety player will choose an action $a_g$ that minimises:
\begin{equation*}
\min_{a\in\Sigma(l)} \sum_{l'\in\locations} \mathbf{Q}(l,a,l') \cdot
g_1(l',t - \tau). 
\end{equation*}
If $a^*$ is the action chosen by $f$ at time $t - \tau$, then
Lemma~\ref{lem:single} and Lemma~\ref{lem:dcuseful} imply:
\begin{equation*}
\| \sum_{l'\in\locations} \ratematrix(l, a^g, l')g_1(l', t - \tau) -
\sum_{l'\in\locations} \ratematrix(l, a^*, l') f^t_{p_2(t)}(l',t - \tau) \| \le
6 \cdot \tau^2 \\
\end{equation*}

So far we have proved that the total amount of error made by $g_2^1$ when the
first transition occurs at time $t-\tau$ is at most $6 \cdot \tau^2$. To obtain
error bounds for $g_2^1$ over the entire interval $[t-\epsilon, t]$, we consider
the probability that the first transition actually occurs at time $t-\tau$:
\begin{equation*}
\error^{1}_s(2,\varepsilon) \leq \int_0^\varepsilon e^{\tau-\varepsilon} 6\tau^2
d\tau \leq  \int_0^\varepsilon 6\tau^2 d\tau= 2 \cdot \varepsilon^3.
\end{equation*}
This completes the proof.
\end{proof}

We now prove the inductive step, by considering~$g_2^k$. This is the strategy
that follows the action chosen by~$p_2$ for the first~$k$ transitions, and then
follows $p_1$ for the rest of the interval. 

\begin{lemma}
If $\error^{k}_s(2, \epsilon) \le 2 \cdot \epsilon^3$ for some~$k$, then
$\error^{k+1}_s(2, \epsilon) \le 2 \cdot \epsilon^3$.
\end{lemma}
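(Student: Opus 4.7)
The plan is to mirror the proof of Lemma~\ref{lem:dcbase}: condition on the time $t-\tau$ of the first discrete transition (with density $e^{\tau-\epsilon}$ on $\tau\in[0,\epsilon]$), bound the per-step error at that transition, and integrate over $\tau$. The only structural change from the base case is that the continuation after the first transition is $g_2^k$ rather than $g_1$, so I would replace the role of Lemma~\ref{lem:errorp} by the inductive hypothesis.

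After the first transition to some state $l'$ at time $t-\tau$, the strategy $g_2^{k+1}$ has used one of its ``$p_2$-credits'' and continues exactly as $g_2^k$ starting from $l'$ at time $t-\tau$ over a remaining interval of length $\tau$. To apply the inductive hypothesis to this shorter interval, I would first note that the whole chain of bounds $\error_s^1,\dots,\error_s^k$ depends only on the interval length (and on $\epsilon\le 1$), not on the absolute time, so the hypothesis gives $\|g_2^k(l',t-\tau)-f^t_{p_2(t)}(l',t-\tau)\|\le 2\tau^3$ uniformly in $\tau\le\epsilon$.

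With that in hand I would bound the per-step error exactly as in Lemma~\ref{lem:dcbase}. For a reachability location $l$, I split
\[
\bigl\|\textstyle\sum_{l'}\ratematrix(l,a_l^{t-\tau},l')\,g_2^k(l',t-\tau)-\sum_{l'}\ratematrix(l,a^*,l')\,f^t_{p_2(t)}(l',t-\tau)\bigr\|
\]
by the triangle inequality: first replace $g_2^k$ by $p_1$, costing at most $\|g_2^k-p_1\|\le\|g_2^k-f^t_{p_2(t)}\|+\|f^t_{p_2(t)}-p_1\|\le 2\tau^3+\tau^2$ (using the inductive hypothesis and the bound established in the proof of Lemma~\ref{lem:double}), then apply Lemma~\ref{lem:dcuseful} with $g=p_1$ and $f=f^t_{p_2(t)}$ to pick up the $3\tau^2$ action-choice contribution. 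The reachability step error is thus at most $4\tau^2+2\tau^3$. For a safety location, the safety player responds optimally to $g_2^k$, so a direct application of Lemma~\ref{lem:dcuseful} with $f=f^t_{p_2(t)}$, $g=g_2^k$, $c=2$, $k=3$ yields a step error of at most $6\tau^3$.

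Integrating against the transition density, using $e^{\tau-\epsilon}\le 1$, the reachability contribution is at most $\int_0^\epsilon(4\tau^2+2\tau^3)\,d\tau=\tfrac{4}{3}\epsilon^3+\tfrac{1}{2}\epsilon^4$, and the safety contribution is at most $\int_0^\epsilon 6\tau^3\,d\tau=\tfrac{3}{2}\epsilon^4$; both are $\le 2\epsilon^3$ for $\epsilon\le 1$, which yields $\error_s^{k+1}(2,\epsilon)\le 2\epsilon^3$ as required. The main subtle point will be the uniform-in-interval-length use of the inductive hypothesis on $[t-\tau,t]$, since the lemma is phrased for a single fixed $\epsilon$; once that is in place, the remainder is a direct adaptation of the proof of Lemma~\ref{lem:dcbase}.
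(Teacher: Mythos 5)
Your proposal is correct and follows essentially the same route as the paper's proof: condition on the time of the first transition, replace $g_1$ by $g_2^k$ via the inductive hypothesis, pass through $p_1$ using the $\tau^2$ bound from Lemma~\ref{extheo:single} together with Lemma~\ref{lem:dcuseful}, and integrate against the transition density to obtain $2\epsilon^3$. The paper simply rounds the per-step bounds up to $6\tau^2$ before integrating, whereas you integrate the exact expressions and invoke $\epsilon\le 1$ at the end; your explicit remark about applying the inductive hypothesis uniformly on the shorter interval $[t-\tau,t]$ is a point the paper leaves implicit.
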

\begin{proof}
The structure of this proof is similar to the proof of Lemma~\ref{lem:dcbase},
however, we must account for the fact that~$g_2^{k+1}$ follows~$g_2^k$ after the
first transition rather than~$g_1$.

Suppose that we play the strategy for $g_2^{k+1}$, and that the first discrete
transition occurs at time $t - \tau$, where $\tau \in [0, \epsilon]$. Let~$l$ be
a location belonging to the reachability player, and let $a^{t-\tau}_l$ be the
action that maximises~$p_2$ at time $t - \tau$. If $a^*$ is an action that would
be chosen by $f^t_{p_2(t)}$ at time $t-\tau$, then we have the following bounds:
\begin{align*}
&\| \sum_{l'\in\locations} \ratematrix(l, a_l^{t-\tau}, l')g_2^{k}(l', t - \tau)
- \sum_{l'\in\locations} \ratematrix(l, a^*, l') f^t_{p_2(t)}(l',t - \tau) \| \\
\le &\| \sum_{l'\in\locations} \ratematrix(l, a_l^{t-\tau}, l')p_1(l', t - \tau)
- \sum_{l'\in\locations} \ratematrix(l, a^*, l') f^t_{p_2(t)}(l',t - \tau) \| +
\tau^2 + 2 \cdot \tau^3 \\ 
\le& 4 \cdot \tau^2 + 2 \cdot \tau^3 \le 6 \cdot \tau^2
\end{align*}
The first inequality follows from the inductive hypothesis, which gives bounds
on how far $g_2^k$ is from $f^t_{p_2(t)}$, and from Lemma~\ref{extheo:single},
which gives bounds on how far~$f^t_{p_2(t)}$ is from~$p_1$. The second
inequality follows from Lemma~\ref{extheo:single} and Lemma~\ref{lem:dcuseful},
and the final inequality follows from the fact that $\tau \le 1$.

Now suppose that the location~$l$ belongs to the safety player. Let $a_g$ be an
action that minimises:
\begin{equation*}
\min_{a\in\Sigma(l)} \sum_{l'\in\locations} \mathbf{Q}(l,a,l') \cdot
g_2^k(l',t - \tau). 
\end{equation*}
If $a^*$ is the action chosen by~$f$ at time $t - \tau$, then
Lemma~\ref{lem:single} and Lemma~\ref{lem:dcuseful} imply:
\begin{equation*}
\| \sum_{l'\in\locations} \ratematrix(l, a^g, l')g_2^k(l', t - \tau) -
\sum_{l'\in\locations} \ratematrix(l, a^*, l') f^t_{p_2(t)}(l',t - \tau) \| \le
6 \cdot \tau^3 \le 6 \cdot \tau^2\\
\end{equation*}
The first inequality follows from the inductive hypothesis and
Lemma~\ref{lem:dcuseful}, and the second inequality follows from the fact that
$\tau \le 1$.

To obtain error bounds for $g_2^{k+1}$ over the entire interval $[t-\epsilon,
t]$, we consider the probability that the first transition actually occurs at
time $t-\tau$:
\begin{equation*}
\error^{k+1}_s(2,\varepsilon) \leq \int_0^\varepsilon e^{\tau-\varepsilon} 6
\cdot \tau^2 \;d\tau \leq  \int_0^\varepsilon 6\tau^2 \; d\tau= 2 \cdot
\varepsilon^3.
\end{equation*}
This completes the proof.
\end{proof}

\section{Proof of Lemma~\ref{lem:sortdouble}}

We give the algorithm for the reachability player. The algorithm for the safety
player is symmetric. For every location $l \in L$, and time point $\tau \in [0,
\epsilon]$, we define the \emph{quality} of an action $a$ as:
\begin{equation*}
q_l^{\tau}(a) := \sum_{l'\in L} \mathbf{Q}(l,a,l') p_1^t(l',t -\tau).
\end{equation*}
We also define an operator that compares the quality of two actions. For two
actions $a_1$ and $a_2$, we have $a_1 \preceq^{\tau}_l a_2$ if and only if
$q_l^{\tau}(a_1) \leq  q_l^{\tau}(a_2)$, and we have $a_1 \prec^{\tau}_l a_2$ if
and only if $q_l^{\tau}(a_1) <  q_l^{\tau}(a_2)$.

\begin{algorithm}
\begin{algorithmic}
\STATE Sort the actions into a list $\langle a_1, a_2, \dots, a_m \rangle$ such
that $a_i \preceq^{0}_l a_{i+1}$ for all $i$.
\STATE $O := \langle (a_1, 0) \rangle$.
\FOR{$i := 2$ to $m$}
	\STATE $(a, \tau) := $ the last element in $O$.
	\IF{$a \prec^{\epsilon}_l a_i$}
		\WHILE{\textbf{true}}
			\STATE $x :=$ the point at which $q_l^{x}(a) = q_l^{x}(a_i)$.
			\IF{$x \ge \tau$}
				\STATE Add $(a_i, x)$ to the end of $O$.
				\STATE \textbf{break}
			\ELSE
				\STATE Remove $(a, \tau)$ from $O$.
				\STATE $(a, \tau) := $ the last element in $O$.
			\ENDIF
		\ENDWHILE
	\ENDIF
\ENDFOR
\STATE \textbf{return}~$O$.
\end{algorithmic}
\caption{BestActions}
\label{alg:quaddamage}
\end{algorithm}

Algorithm~\ref{alg:quaddamage} shows the key component of our algorithm for
computing the approximation $p_2$ during the interval $[t-\epsilon, t]$. The
algorithms outputs a list~$O$ containing pairs $(a, \tau)$, where~$a$ is an
action and~$\tau$ is a point in time, which represents the optimal actions
during the interval $[t-\epsilon, t]$: if the algorithm outputs the list $O =
\langle (a_1, \tau_1), (a_2, \tau_2), \dots, (a_n, \tau_n)\rangle$, then $a_1$
maximises Equation~\eqref{eq:approx2} for the interval $[t - \tau_2, t -
\tau_1]$, $a_2$ maximises Equation~\eqref{eq:approx2} for the interval $t -
\tau_3, t - \tau_2]$, and so on.

The algorithm computes~$O$ as follows. It begins by sorting the actions
according to their quality at time~$t$. Since $a_1$ maximises the quality at
time~$t$, we know that~$a_1$ is chosen by Equation~\eqref{eq:approx2} at
time~$t$. Therefore, the algorithm initialises~$O$ by assuming that~$a_1$
maximises Equation~\eqref{eq:approx2} for the entire interval $[t - \epsilon,
t]$. The algorithm then proceeds by iterating through the actions $\langle a_2,
a_3, \dots a_m \rangle$. 

We will prove the following invariant on the outer loop of the algorithm: if the
first~$i$ actions have been processed, then the list~$O$ gives the solution to:
\begin{equation}
\label{eqn:mode2}
-\dot{p}_2(l,\tau,i) = \max_{a\in\langle a_1, a_2, \dots, a_i \rangle}
\sum_{l'\in\locations} \ratematrix(l,a,l') \cdot \left(p_1(l',\tau) -
p_1(l,\tau)\right).
\end{equation}
In other words, the list~$O$ would be a solution to
Equation~\eqref{eq:approx2} if the actions $\langle a_{i+1}, a_{i+2}, \dots
a_m\rangle$ did not exist. Clearly, when $i = m$ the list~$O$ will actually be a
solution to Equation~\eqref{eq:approx2}.

We will prove this invariant by induction. The base case is trivially true,
because when $i = 1$ the maximum in Equation~\eqref{eqn:mode2} considers
only~$a_1$, and therefore~$a_1$ is optimal throughout the interval $[t-\epsilon,
t]$. We now prove the inductive step. Assume that~$O$ is a solution to
Equation~\eqref{eqn:mode2} for~$i-1$. We must show that
Algorithm~\ref{alg:quaddamage} correctly computes~$O$ for $i$. Let us consider
the operations that Algorithm~\ref{alg:quaddamage} performs on the action
$a_{i}$. It compares $a_{i}$ with the pair $(a, \tau)$, which is the final
pair in~$O$, and one of three actions is performed:
\begin{itemize}
\item If $a_i \prec^{\epsilon}_l a$, then the algorithm ignores $a_{i}$. This is
because we have $a_{i} \prec^{0}_l a_1$, which means that $a_{i}$ is worse
than~$a_1$ at time~$t$, and we have $a_{i} \prec^{\epsilon}_l a$, which implies
that $a_{i}$ is worse than~$a$ at time $t - \epsilon$. Since $q_l^{\tau}(a_{i})$
is a linear function, we can conclude that $a_i$ never maximises
Equation~\eqref{eq:approx2} during the interval $[t-\epsilon, t]$.

\item If~$x$, which is the point at which the functions $q_l^{x}(a)$ and
$q_l^{x}(a_i)$ intersect, is greater than~$\tau$, then we add $(a_i, x)$ to~$O$.
This is because the fact that~$q_l^{x}(a_i)$ and $q_l^{x}(a)$ are linear
functions implies that~$a_i$ cannot be optimal for every time $\tau' < \tau$.

\item Finally, if~$x$ is smaller than~$\tau$, then we remove $(a, \tau)$ from
$O$ and continue by comparing~$a_i$ to the new final pair in~$O$. From the
inductive hypothesis, we have that $a$ is not optimal for every time point
$\tau' \le \tau$, and the fact that $x < \tau$ and the fact that~$q_l^{x}(a_i)$
and $q_l^{x}(a)$ are linear functions implies that $a_i$ is better than $a$ for
every time point $\tau' > \tau$. Therefore, $a$ can never be optimal.
\end{itemize}
These three observations are sufficient to prove that
Algorithm~\ref{alg:quaddamage} correctly computes $O$, and $O$ can obviously be
used to compute the approximation~$p_2$. The following lemma gives the time
complexity of the algorithm.

\begin{lemma}
Algorithm~\ref{alg:quaddamage} runs in time $O(|\Sigma| \log |\Sigma|)$.
\end{lemma}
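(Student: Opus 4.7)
The plan is to account separately for the initial sort and for the main loop, and to bound the total work of the inner while-loop by an amortised argument.

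First I would observe that the sorting step at the beginning costs $O(|\Sigma| \log |\Sigma|)$ and that the remaining setup (initialising $O$ with $(a_1,0)$) is $O(1)$. The outer \textbf{for} loop runs $m - 1 < |\Sigma|$ times, so if I can show that the total amount of work performed inside the outer loop across all iterations is $O(|\Sigma|)$, the claimed bound of $O(|\Sigma| \log |\Sigma|)$ follows immediately.

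Next I would inspect what happens inside a single outer iteration. Each iteration either (i) ignores $a_i$ (constant work), or (ii) enters the \textbf{while}-loop. In case (ii), every pass through the loop body does a constant amount of real work: it computes the intersection point $x$ of two linear functions $q_l^\tau(a)$ and $q_l^\tau(a_i)$ (which is just the root of a linear equation, computable in $O(1)$ given the coefficients, which are themselves available from $p_1$ in $O(1)$ per action), it compares $x$ with $\tau$, and it either appends one element to $O$ and breaks, or it removes one element from $O$ and reads the new tail. Hence each \textbf{while}-iteration either terminates the loop after $O(1)$ work or strictly shrinks $|O|$ by one.

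The key step is the amortised argument: across the whole run of the algorithm, at most $|\Sigma|$ pairs $(a_i,x)$ are ever appended to $O$ (at most one per outer iteration, namely the pair corresponding to $a_i$ itself). Consequently the number of removals over the whole run is also at most $|\Sigma|$, and therefore the total number of \textbf{while}-loop iterations summed over all outer iterations is $O(|\Sigma|)$. Adding this to the $O(|\Sigma|)$ work done by the outer loop bookkeeping gives $O(|\Sigma|)$ for the main loop, which together with the initial sort yields the claimed $O(|\Sigma| \log |\Sigma|)$ bound. The only subtle point I expect to need to check carefully is the $O(1)$ evaluation of $q_l^\tau(a)$ and of the crossing point $x$; this relies on the fact that $p_1$ is a piecewise linear function that is linear on $[t-\varepsilon,t]$, so the coefficients of $q_l^\tau(a)$ can be precomputed once per action in total time $O(|\M|)$ (which is outside the scope of this lemma and absorbed in the surrounding bound of Lemma~\ref{lem:sortdouble}).
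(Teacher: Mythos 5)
Your proof is correct and follows essentially the same route as the paper's: the initial sort accounts for the $O(|\Sigma|\log|\Sigma|)$ term, and the main loop is bounded by the amortised observation that each action enters and leaves the list $O$ at most once, so only $O(|\Sigma|)$ crossing points are ever computed. Your additional remark about the $O(1)$ evaluation of the crossing point via the linearity of $p_1$ is a reasonable (and correct) elaboration of a step the paper leaves implicit.
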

\begin{proof}
Since sorting can be done in $O(|\Sigma| \log |\Sigma|)$ time, the first step of
this algorithm also takes $O(|\Sigma| \log |\Sigma|)$. We claim that the
remaining steps of the algorithm take $O(|\Sigma|)$ time. To see this, note that
after computing a crossing point $x$, the algorithm either adds an action to the
list~$O$, or removes an action from~$O$. Moreover each action~$a$ can enter the
list~$O$ at most once, and leave the list~$O$ at most once. Therefore at most $2
\cdot |\Sigma|$ crossing points are computed in total.
\end{proof}

We now complete the proof of Lemma~\ref{lem:sortdouble}. In order to compute the
approximation $p_2$, we simply run Algorithm~\ref{alg:quaddamage} for each location
$l \in L$, which takes $O(|L| \cdot |\Sigma| \log |\Sigma|)$ time. Finally, we
must account for the time taken to compute the approximation~$p_1$, which takes
$O(|\mathcal{M}|)$ time, as argued in Theorem~\ref{thm:singlenets}. Therefore,
we can compute~$p_2$ in time $O(|\mathcal{M}| + |L| \cdot |\Sigma| \log
|\Sigma|)$.

\section{Proof of Theorem~\ref{thm:doublenets}}

\begin{proof}
Lemma~\ref{lem:double} gives the step error for double $\epsilon$-nets to be
$\frac{1}{3}\epsilon^{3}$. Since there are $\frac{T}{\epsilon}$ intervals,
Lemma~\ref{lem:individual} implies that the global error of double
$\epsilon$-nets is $\frac{1}{3}\epsilon^{3} \cdot \frac{T}{\epsilon} =
\frac{1}{3}\epsilon^{2} \cdot T$. In order to achieve a precision of $\pi$, we
must select an $\epsilon$ that satisfies $\frac{1}{3}\epsilon^{2} \cdot T =
\pi$. Therefore, we choose $\epsilon = \sqrt{\frac{3\pi}{T}}$, which gives $T
\cdot \sqrt{\frac{T}{3\pi}}$ intervals.

The cost of computing each interval is given by Lemma~\ref{lem:sortdouble} as
$O(|\M| + |\locations|\cdot|\Sigma|\cdot \log |\Sigma|)$, and there are $T \cdot
\sqrt{\frac{T}{3\pi}}$ intervals overall, which gives the claimed complexity of
$O(|\M| \cdot T \cdot \sqrt{\frac{T}{\pi}} + |\locations| \cdot  T \cdot
\sqrt{\frac{T}{\pi}} \cdot |\Sigma| \log |\Sigma|)$.
\end{proof}

\section{Proof of Lemma~\ref{lem:knets}}


Our arguments here are generalisations of those given for the claims made in
Section~\ref{sec:double_nets}. 

\subsection{Error bounds for the approximation $p_k$}

The following lemma is a generalisation of Lemma~\ref{lem:double}.

\begin{lemma}
\label{lem:lem12p1}
For every $k > 1$, if we have $\error(k, \varepsilon) \leq c \cdot
\epsilon^{k+1}$, then we have $\error(k + 1, \varepsilon) \leq \frac{2}{k+2}
\cdot c \cdot \epsilon^{k+2}$.
\end{lemma}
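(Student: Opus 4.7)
The plan is to mimic the proof of Lemma~\ref{lem:double} (which is the case $k=1$) almost verbatim, with the bound from Lemma~\ref{extheo:single} replaced by the inductive hypothesis. Fix an interval $[t-\varepsilon,t]$ and note that within this interval every approximation $p_1,p_2,\ldots,p_{k+1}$ is built from the common value $p_{k+1}(t)$ that was passed in from the previously processed interval. In particular $p_k(t)=p_{k+1}(t)$, so $f^t_{p_k(t)} = f^t_{p_{k+1}(t)}$, which lets the hypothesis $\error(k,\varepsilon) \le c\varepsilon^{k+1}$ supply the pointwise bound $\|p_k(t-\tau) - f^t_{p_{k+1}(t)}(t-\tau)\| \le c\tau^{k+1}$ for every $\tau \in [0,\varepsilon]$.

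Next I would pass from this bound on functions to a bound on derivatives. Starting from Equation~\eqref{eq:approxk}, the normed condition $\sum_{l'}\ratematrix(l,a,l')=1$ lets us rewrite
\begin{equation*}
-\dot{p}_{k+1}(l,t-\tau) = \opttwo_{a\in\Sigma(l)} \sum_{l'\in L} \mathbf{Q}(l,a,l')\cdot p_k(l',t-\tau),
\end{equation*}
and the analogous identity holds for $-\dot{f}^t_{p_{k+1}(t)}$ with $p_k$ replaced by $f^t_{p_{k+1}(t)}$. Reusing the argument of Equations~\eqref{eqn:protoss}--\eqref{eqn:zerg} in the proof of Lemma~\ref{lem:dcuseful}, the zero-row-sum property of $\mathbf{Q}$ converts the $c\tau^{k+1}$ pointwise bound into an optima bound with a factor of~$2$ (not~$3$, since we are working with $\mathbf{Q}$ directly rather than with $\mathbf{R}$); this gives
\begin{equation*}
\bigl\| -\dot{p}_{k+1}(l,t-\tau) + \dot{f}^t_{p_{k+1}(t)}(l,t-\tau) \bigr\| \le 2c\,\tau^{k+1}.
\end{equation*}

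Finally, since $p_{k+1}(t) = f^t_{p_{k+1}(t)}(t)$ at the right endpoint, integrating the derivative bound over $[0,\tau]$ produces
\begin{equation*}
\bigl\| p_{k+1}(t-\tau) - f^t_{p_{k+1}(t)}(t-\tau) \bigr\| \le \int_0^\tau 2c\sigma^{k+1}\,d\sigma = \frac{2c}{k+2}\,\tau^{k+2},
\end{equation*}
and specialising to $\tau=\varepsilon$ yields $\error(k+1,\varepsilon) \le \frac{2}{k+2}\cdot c\cdot \varepsilon^{k+2}$, as required.

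The only step requiring real care, and the one where the proof could go wrong, is justifying that the inductive hypothesis, which is formally stated against $f^t_{p_k(t)}$, can be applied against $f^t_{p_{k+1}(t)}$ in the present interval. Everything else is a routine lifting of the proof of Lemma~\ref{lem:double} to one extra power of $\tau$, with the constant shrinking by the factor $\tfrac{2}{k+2}$ that naturally arises from integrating $\tau^{k+1}$.
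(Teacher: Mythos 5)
Your proof is correct and follows essentially the same route as the paper's: use the inductive hypothesis to bound $\|p_k - f^t_{p_{k+1}(t)}\|$ by $c\tau^{k+1}$, push this through the generator (with the factor $2$ coming from the normed/zero-row-sum structure and the optimum being $2$-Lipschitz in its arguments) to bound the derivative difference by $2c\tau^{k+1}$, and integrate to obtain $\frac{2}{k+2}c\varepsilon^{k+2}$. Your explicit justification that $p_k(t)=p_{k+1}(t)$ within the interval (so the hypothesis stated against $f^t_{p_k(t)}$ applies to $f^t_{p_{k+1}(t)}$) is a point the paper leaves implicit, but it is the same argument.
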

\begin{proof}
The inductive hypothesis implies that $\| p_k(t -
\tau)-f_{p_{k+1}(t)}^t(t - \tau) \| \le c \cdot \tau^{k+1}$ for every $\tau \in [0,
\epsilon]$. Therefore, for every pair of locations $l,l' \in L$ and every $\tau
\in [t - \epsilon, t]$ we have:
\begin{equation*}
\|(p_k(l',t - \tau) - p_k(l, t - \tau)) - (f_{p_{k+1}(t)}^t(l', t -
\tau)-f_{p_{k+1}(t)}^t(l, t - \tau))\| \le 2 \cdot c \cdot \tau^{k+1}.
\end{equation*}
Since we are dealing with normed Markov games, we have $\sum_{l' \in
\locations}\ratematrix(l, a, l') = 1$ for every location $l \in L$ and every
action $a \in A(l)$. Therefore, we also have for every action $a$:
\begin{multline*}
\| \sum_{l'\in\locations} \ratematrix(l,a,l') \cdot \left(p_k(l',t - \tau) -
p_k(l,t - \tau)\right) \\ - \sum_{l'\in\locations} \ratematrix(l,a,l') \cdot
(f_{p_{k+1}(t)}^t(l', t - \tau)-f_{p_{k+1}(t)}^t(l, t - \tau) \| \le  2 \cdot c \cdot
\tau^{k+1}.
\end{multline*}
This implies that $\|\dot{p}_k(l,t - \tau) - \dot{f}^t_{p_{k+1}(t)}(l, t - \tau)\|
\le 2 \cdot c \tau^{k+1}$.

We can obtain the claimed result by integrating over this difference:
\begin{equation*}
\error(k + 1, \tau) = \int_0^\tau
\|\dot{p}_k(l,t - \tau) - \dot{f}^t_{p_{k+1}(t)}(l, t - \tau)\| \le
\frac{2}{k+2} \cdot c \cdot \tau^{k+2}.
\end{equation*}
Therefore, the total amount of error incurred by~$p_{k+1}$ in $[t - \epsilon,
t]$ is at most $\frac{2}{k+2} \cdot c \cdot \epsilon^{k+2}$.
\end{proof}

\subsection{Error bounds for the approximation $g_2$}

We will prove the claim for the reachability player, because the proof for the
safety player is entirely symmetric. We begin by defining the approximation
$g_2$, which gives the time-bounded reachability probability when the
reachability player follows the actions chosen by~$p_k$. If $a_l^\tau$ is the
action that maximises Equation~\eqref{eq:approxk} at the location $l$ for the
time point $\tau \in [t - \epsilon, t]$  then we define $g_k(l, \tau)$ as:
\begin{align}
-\dot{g_k}(l,\tau) &= \sum_{l'\in\locations} \mathbf{Q}(l,a_l^{\tau},l') \cdot
g_k(l',\tau) &\text{if $l \in \locations_r$,} \\
-\dot{g_k}(l,\tau) &= 
\min_{a\in\Sigma(l)} \sum_{l'\in\locations} \mathbf{Q}(l,a,l') \cdot
g_k(l',\tau) &\text{if $l \in \locations_s$.}
\end{align}

Our approach to proving error bounds for $g_k$ follows the approach that we used
in the proof of Lemma~\ref{lem:doubleConcrete}. We will consider the following
class of strategies: play the action chosen by~$p_k$ for the first~$i$
transitions, and then play the action chosen by~$p_1$ for the remainder of the
interval. We will denote the reachability probability obtained by this strategy
as $g_k^i$, and we will denote the error of this strategy as $\error^{i}_s(k,
\epsilon) := \|g_k^i(t - \epsilon) - f^t_{p_2(t)}(t - \epsilon)\|$. Clearly,
as~$i$ approaches infinity, we have that~$g_k^i$ approaches~$g_k$, and
$\error^{i}_s(k, \epsilon)$ approaches $\error_s(k, \epsilon)$. Therefore, if a
bound can be established on $\error^{i}_s(k, \epsilon)$ for all $i$, then that
bound also holds for $\error_s(k, \epsilon)$.

We have by assumption that $\error(k, \varepsilon) \leq c \cdot \epsilon^{k+1}$
and $\error_s(k, \varepsilon) \leq d \cdot \epsilon^{k+1}$, and our goal is to
prove that $\error_s(k+1, \varepsilon) \leq \frac{8c + 3d}{k+2} \cdot
\epsilon^{k+2}$. We will prove error bounds on~$g_{k+1}^i$ by induction. The
following lemma considers the base case, where~$i = 1$. In other words, it
considers the strategy that plays the action chosen by~$p_{k+1}$ for the first
transition, and then plays the action chosen by~$p_{k}$ for the rest of the
interval.

\begin{lemma}
\label{lem:kcbase}
If $\varepsilon \le 1$, $\error(k, \varepsilon) \leq c \cdot \epsilon^{k+1}$,
and $\error_s(k, \varepsilon) \leq d \cdot \epsilon^{k+1}$, then we have
$\error^{1}_s(k+1, \epsilon) \le \frac{4c + 3d}{k+2} \cdot \epsilon^{k+2}$.
\end{lemma}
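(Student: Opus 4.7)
The plan is to lift the proof of Lemma~\ref{lem:dcbase} from $k=1$ to arbitrary $k$, preserving the same two-step scheme: condition on the time $t-\tau$ of the first discrete transition, bound the quality gap at that moment separately for reachability and safety locations, and then integrate against the exponential density of the transition time. Throughout the interval $[t-\epsilon,t]$ I may assume that the nested approximation $p_k$ is initialised from $p_{k+1}(t)$ at time $t$, so that $f^t_{p_k(t)}$ and $f^t_{p_{k+1}(t)}$ coincide on the entire interval. The hypotheses $\error(k,\tau)\le c\tau^{k+1}$ and $\error_s(k,\tau)\le d\tau^{k+1}$ therefore apply with $f^t_{p_{k+1}(t)}$ as the reference function, and a triangle inequality yields the useful bound $\|g_k(t-\tau)-p_k(t-\tau)\|\le(c+d)\tau^{k+1}$.

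For a reachability location $l$, let $a^g:=a_l^{t-\tau}$ be the action selected by $p_{k+1}$ via Equation~\eqref{eq:approxk}, and let $a^*$ be a maximising action for $f^t_{p_{k+1}(t)}$ at time $t-\tau$. The quantity to bound is
\begin{equation*}
\Bigl\|\sum_{l'\in\locations}\ratematrix(l,a^g,l')\,g_k(l',t-\tau)-\sum_{l'\in\locations}\ratematrix(l,a^*,l')\,f^t_{p_{k+1}(t)}(l',t-\tau)\Bigr\|.
\end{equation*}
I would first replace $g_k$ by $p_k$ under the $\ratematrix(l,a^g,\cdot)$ weighting at cost $(c+d)\tau^{k+1}$, and then apply Lemma~\ref{lem:dcuseful} with $f=p_k$ and $g=f^t_{p_{k+1}(t)}$ (using $\|p_k-f^t_{p_{k+1}(t)}\|\le c\tau^{k+1}$), contributing a further $3c\tau^{k+1}$. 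This produces a reachability bound of $(4c+d)\tau^{k+1}$. For a safety location, the strategy for $g_{k+1}^{1}$ after the first transition follows $g_k$ directly, so a single application of Lemma~\ref{lem:dcuseful} with $f=g_k$ and $g=f^t_{p_{k+1}(t)}$ yields the safety bound $3d\tau^{k+1}$.

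Both of these per-instant bounds are dominated by $(4c+3d)\tau^{k+1}$, since $4c+d\le 4c+3d$ and $3d\le 4c+3d$. To finish, I integrate against the probability that the first transition occurs at time $t-\tau$ in a normed Markov game, using $e^{\tau-\epsilon}\le 1$:
\begin{equation*}
\error^{1}_s(k+1,\epsilon)\le\int_0^\epsilon e^{\tau-\epsilon}(4c+3d)\tau^{k+1}\,d\tau\le\frac{4c+3d}{k+2}\,\epsilon^{k+2},
\end{equation*}
which is the claimed bound. The only genuine obstacle is bookkeeping: one must carefully justify the identification $f^t_{p_k(t)}=f^t_{p_{k+1}(t)}$ on the interval (by fixing shared initial values at $t$), and recognise that the reachability case requires the $g_k\to p_k$ swap because $a^g$ is defined through the $p_k$-optimising rule of Equation~\eqref{eq:approxk}, whereas the safety case permits a direct use of Lemma~\ref{lem:dcuseful}.
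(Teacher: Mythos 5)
Your proposal is correct and follows essentially the same route as the paper's proof: the same conditioning on the first transition time, the same two-step bound for reachability locations (swap $g_k$ for $p_k$ at cost $(c+d)\tau^{k+1}$, then apply Lemma~\ref{lem:dcuseful} for a further $3c\tau^{k+1}$), the same direct application of Lemma~\ref{lem:dcuseful} for safety locations giving $3d\tau^{k+1}$, and the same integration against $e^{\tau-\varepsilon}$. Your explicit remark that $f^t_{p_k(t)}$ and $f^t_{p_{k+1}(t)}$ must be identified via shared initial values at $t$ is a bookkeeping point the paper leaves implicit, but it does not change the argument.
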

\begin{proof}
Suppose that the first discrete transition occurs at time $t - \tau$, where
$\tau \in [0, \epsilon]$. Let~$l$ be a location belonging to the reachability
player, and let $a^{t-\tau}_l$ be the action that maximises~$p_{k+1}$ at time $t
- \tau$. By definition, we know that the probability of moving to a
location~$l'$ is given by $\ratematrix(l, a_l^{t-\tau}, l')$, and we know that
the time-bounded reachability probabilities for each state $l'$ are given by
$g_k(l', t-\tau)$. Therefore, the outcome of choosing $a_l^{t-\tau}$ at time $t-
\tau$ is $\sum_{l'\in\locations} \ratematrix(l, a_l^{t-\tau}, l')g_k(l', t -
\tau)$. If $a^*$ is an action that would be chosen by $f^t_{p_{k+1}(t)}$ at time
$t-\tau$, then we have the following bounds:
\begin{align*}
&\| \sum_{l'\in\locations} \ratematrix(l, a_l^{t-\tau}, l')g_k(l', t - \tau) -
\sum_{l'\in\locations} \ratematrix(l, a^*, l') f^t_{p_{k+1}(t)}(l',t - \tau) \|
\\
\le &\| \sum_{l'\in\locations} \ratematrix(l, a_l^{t-\tau}, l')p_k(l', t - \tau) -
\sum_{l'\in\locations} \ratematrix(l, a^*, l') f^t_{p_{k+1}(t)}(l',t - \tau) \|
+ c \cdot \tau^{k+1} + d \cdot \tau^{k+1} \\
\le& 4 \cdot c \cdot \tau^{k+1} + d \cdot \tau^{k+1}
\end{align*}
The first inequality follows from the bounds given for $\error(k, \varepsilon)$
and $\error_s(k, \varepsilon)$. The second inequality follows from the bounds
given for $\error(k, \varepsilon)$ and Lemma~\ref{lem:dcuseful}.

Now suppose that~$l$ is a location belonging to the safety player. Since the
reachability player will follow $p_k$ during the interval $[t-\tau, t]$, we know
that the safety player will choose an action $a_g$ that minimises:
\begin{equation*}
\min_{a\in\Sigma(l)} \sum_{l'\in\locations} \mathbf{Q}(l,a,l') \cdot
g_k(l',t - \tau). 
\end{equation*}
If $a^*$ is the action chosen by $f$ at time $t - \tau$, then
the following inequality is a consequence of Lemma~\ref{lem:dcuseful}:
\begin{equation*}
\| \sum_{l'\in\locations} \ratematrix(l, a^g, l')g_k(l', t - \tau) -
\sum_{l'\in\locations} \ratematrix(l, a^*, l') f^t_{p_{k+1}(t)}(l',t - \tau) \|
\le 3 \cdot d \cdot \tau^{k+1} \\
\end{equation*}

So far we have proved that the total amount of error made by $g_2^1$ when the
first transition occurs at time $t-\tau$ is at most $(4c + 3d) \cdot
\tau^{k+1}$. To obtain error bounds for $g_{k+1}^1$ over the entire interval
$[t-\epsilon, t]$, we consider the probability that the first transition
actually occurs at time $t-\tau$:
\begin{equation*}
\error^{1}_s(k+1,\varepsilon) 
\leq \int_0^\varepsilon e^{\tau-\varepsilon} (4c + 3d) \cdot \tau^{k+1} \; d\tau
\leq  \int_0^\varepsilon (4c + 3d) \cdot \tau^{k+1} d\tau
= \frac{4c + 3d}{k+2}\varepsilon^{k+2}.
\end{equation*}
This completes the proof.
\end{proof}


\begin{lemma}
If $\error^{i}_s(k+1, \epsilon) \le \frac{8c+3d}{k+2} \cdot \epsilon^{k+2}$ for
some~$k$ and $\error(k, \varepsilon) \leq c \cdot \epsilon^{k+1}$, then
$\error^{i+1}_s(k+1, \epsilon) \le \frac{8c + 3d}{k+2} \cdot \epsilon^{k+2}$.
\end{lemma}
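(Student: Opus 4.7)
The plan is to mirror the argument of Lemma~\ref{lem:kcbase} (the base case $i=1$), substituting the role played there by $g_k$ with $g_{k+1}^i$ here. The key structural observation is that, by construction, the continuation of $g_{k+1}^{i+1}$ after its first discrete transition is exactly the strategy $g_{k+1}^i$ (one of the $i+1$ remaining $p_{k+1}$-actions has been used up). Consequently the inductive hypothesis $\|g_{k+1}^i - f^t_{p_{k+1}(t)}\| \le \frac{8c+3d}{k+2}\tau^{k+2}$ controls the post-transition error an order of $\tau$ below the first-transition error, so it enters only as a lower-order correction.

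First I would condition on the time $t-\tau$ at which the first transition fires and bound the conditional error location by location. For a reachability-player starting location $l$ I would, exactly as in the base case, apply the triangle inequality to replace $g_{k+1}^i$ by $p_k$, paying
\[
\|g_{k+1}^i - p_k\| \le \|g_{k+1}^i - f^t_{p_{k+1}(t)}\| + \|f^t_{p_{k+1}(t)} - p_k\| \le \tfrac{8c+3d}{k+2}\tau^{k+2} + c\tau^{k+1},
\]
the first summand coming from the inductive hypothesis and the second from $\error(k,\tau)\le c\tau^{k+1}$ in the form used inside Lemma~\ref{lem:lem12p1}. The residual comparison between $\sum_{l'}\ratematrix(l,a_l^{t-\tau},l')p_k(l',t-\tau)$ and $\sum_{l'}\ratematrix(l,a^*,l')f^t_{p_{k+1}(t)}(l',t-\tau)$ is then bounded by $3c\tau^{k+1}$ using Lemma~\ref{lem:dcuseful} with $f = f^t_{p_{k+1}(t)}$ and $g = p_k$. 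Summing, the reachability per-time error is at most $4c\tau^{k+1} + \frac{8c+3d}{k+2}\tau^{k+2}$.

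For a safety-player starting location the minimizing action $a^g$ is chosen with respect to $g_{k+1}^i$ itself, so no intermediate substitution is needed: Lemma~\ref{lem:dcuseful} applied directly with $g = g_{k+1}^i$, $f = f^t_{p_{k+1}(t)}$ yields the strictly higher-order bound $3\cdot\frac{8c+3d}{k+2}\tau^{k+2}$. Using $\tau\le\epsilon\le 1$ to turn $\tau^{k+2}$ into $\tau^{k+1}$, together with $k>2$ to verify the arithmetic inequalities $\frac{8c+3d}{k+2}\le 4c+3d$ and $\frac{3(8c+3d)}{k+2}\le 8c+3d$, both per-time errors fit under the unified envelope $(8c+3d)\tau^{k+1}$. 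Integrating against the first-transition density $e^{\tau-\epsilon}$ over $\tau\in[0,\epsilon]$ then gives
\[
\error^{i+1}_s(k+1,\epsilon) \;\le\; \int_0^\epsilon (8c+3d)\tau^{k+1}\,d\tau \;=\; \frac{8c+3d}{k+2}\,\epsilon^{k+2},
\]
as required.

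The main obstacle is the constant bookkeeping -- in particular, verifying that the envelope constant $8c+3d$ is loose enough to swallow both the $4c\tau^{k+1}$ term that appears in the reachability substitution \emph{and} the inductive-hypothesis correction, uniformly in $i$, so that the invariant is preserved rather than inflated from one inductive step to the next. A minor subtlety is the slight abuse, already present in the base-case proof, of treating $\error(k,\tau)$ as a bound on $\|p_k - f^t_{p_{k+1}(t)}\|$ rather than on $\|p_k - f^t_{p_k(t)}\|$; this is legitimate because the argument establishing $\error(k,\tau)$ applies with any admissible boundary condition at time $t$.
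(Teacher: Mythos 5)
Your proof is correct and follows essentially the same route as the paper's: condition on the first transition time, substitute $g_{k+1}^i$ by $p_k$ via the inductive hypothesis and $\error(k,\tau)$, invoke Lemma~\ref{lem:dcuseful} for the action swap, absorb everything under the $(8c+3d)\tau^{k+1}$ envelope using $\tau\le 1$ and $k>2$, and integrate against $e^{\tau-\epsilon}$. Your constant bookkeeping is in fact slightly cleaner and more consistent than the paper's own (which mixes $(4c+3d)\tau^{k+2}$ and $\tfrac{8c+3d}{k+1}\tau^{k+2}$ for the inductive-hypothesis correction), and your closing remark on reusing $\error(k,\tau)$ with the boundary condition $p_{k+1}(t)$ addresses a subtlety the paper leaves implicit.
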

\begin{proof}
The structure of this proof is similar to the proof of Lemma~\ref{lem:kcbase},
however, we must account for the fact that~$g_{k+1}^{i+1}$ follows~$g_{k+1}^i$
after the first transition rather than~$g_k$.

Suppose that we play the strategy for $g_{k+1}^{i+1}$, and that the first
discrete transition occurs at time $t - \tau$, where $\tau \in [0, \epsilon]$.
Let~$l$ be a location belonging to the reachability player, and let
$a^{t-\tau}_l$ be the action that maximises~$p_k$ at time $t - \tau$. If $a^*$
is an action that would be chosen by $f^t_{p_2(t)}$ at time $t-\tau$, then we
have the following bounds:
\begin{align*}
&\| \sum_{l'\in\locations} \ratematrix(l, a_l^{t-\tau}, l')g_{k+1}^{i}(l', t - \tau)
- \sum_{l'\in\locations} \ratematrix(l, a^*, l') f^t_{p_{k+1}(t)}(l',t - \tau)
\| \\
\le &\| \sum_{l'\in\locations} \ratematrix(l, a_l^{t-\tau}, l')p_k(l', t - \tau)
- \sum_{l'\in\locations} \ratematrix(l, a^*, l') f^t_{p_{k+1}(t)}(l',t - \tau)
\| + c \cdot \tau^{k+1} + (4c + 3d) \cdot \tau^{k+2} \\
\le & 4 c \cdot \tau^{k+1} + \frac{8c + 3d}{k+1} \cdot \tau^{k+2} \\
\le & (8c + 3d) \cdot \tau^{k+1}
\end{align*}
The first inequality follows from the inductive hypothesis, which gives bounds
on how far $g_{k+1}^{i}$ is from $f^t_{p_{k+1}(t)}$, and from the assumption
about $\error(k, \varepsilon)$. The second inequality follows from our
assumption on $\error(k, \varepsilon)$ and Lemma~\ref{lem:dcuseful}, and the
final inequality follows from the fact that $\tau \le 1$ and $k > 2$.

Now suppose that the location~$l$ belongs to the safety player. Let $a_g$ be an
action that minimises:
\begin{equation*}
\min_{a\in\Sigma(l)} \sum_{l'\in\locations} \mathbf{Q}(l,a,l') \cdot
g_{k+1}^i(l',t - \tau). 
\end{equation*}
If $a^*$ is the action chosen by~$f$ at time $t - \tau$, then
our assumption about $\error^{i}_s(k+1, \epsilon)$ and Lemma~\ref{lem:dcuseful}
imply:
\begin{equation*}
\| \sum_{l'\in\locations} \ratematrix(l, a^g, l')g_{k+1}^{i+1}(l', t - \tau) -
\sum_{l'\in\locations} \ratematrix(l, a^*, l') f^t_{p_{k+1}(t)}(l',t - \tau) \|
\le
\frac{24c + 9d}{k+2} \cdot \tau^{k+2} \le (8c + 3d) \cdot \tau^{k+1}
\end{equation*}
The first inequality follows from the inductive hypothesis and
Lemma~\ref{lem:dcuseful}, and the second inequality follows from the fact that
$\tau \le 1$ and $k > 2$.

To obtain error bounds for $g_2^{k+1}$ over the entire interval $[t-\epsilon,
t]$, we consider the probability that the first transition actually occurs at
time $t-\tau$:
\begin{equation*}
\error^{i+1}_s(k+1,\varepsilon) 
\leq \int_0^\varepsilon e^{\tau-\varepsilon} (8c + 3d) \cdot \tau^{k+1} \;d\tau 
\leq  \int_0^\varepsilon (8c + 3d) \cdot \tau^{k+1} \; d\tau
= \frac{8c + 3d}{k+2} \cdot \epsilon^{k+2}.
\end{equation*}
This completes the proof.
\end{proof}

Our two lemmas together imply that $\error^{i}_s(k+1,\varepsilon) \le \frac{8c +
3d}{k+2} \cdot \epsilon^{k+2}$ for all $i$, and hence we can conclude that
$\error_s(k+1,\varepsilon) \le \frac{8c + 3d}{k+2} \cdot \epsilon^{k+2}$. This
completes the proof of Lemma~\ref{lem:knets}.

\section{Proof of Lemma~\ref{lem:kprecision}}

\begin{proof}
Lemma~\ref{lem:knets} implies that the step error of using a $k$-net is
$\error(k,\varepsilon) \leq c \cdot \epsilon^{k+1}$ for some small constant~$c <
1$. Since we have $\frac{T}/{\epsilon}$ many intervals,
Lemma~\ref{lem:individual} implies that the global error is $T \cdot
\epsilon^k$. Therefore, to obtain a precision of $\pi$ we must choose $\epsilon
= \sqrt[k]{\frac{\pi}{T}}$.
\end{proof}

\section{Proof of Lemma~\ref{lem:piecepoly}}

\begin{proof}
We will prove this claim by induction. For the base case, we have by definition
that $p_1$ is a linear function over the interval $[t-\epsilon, t]$. For the
inductive step, assume that we have proved that $p_{k-1}$ is piecewise
polynomial with degree at most $k-1$. From this, we have that 
$\sum_{l'\in\locations} \mathbf{Q}(l,a,l') \cdot p_{k-1}$ is a piecewise
polynomial function with degree at most $k-1$ for every action~$a$, and
therefore $\opt_{a\in \Sigma(l)} \sum_{l'\in\locations} \mathbf{Q}(l,a,l')
p_{k-1}(l',\cdot)$ is also a piecewise polynomial function with degree at most
$k-1$. Since $\dot{p}_k$ is a piecewise polynomial function of degree at most
$k-1$, we have that $p_k$ is a piecewise polynomial of degree at most $k$.
\end{proof}

\section{Proof of Lemma~\ref{lem:kswitchpoint}}

\begin{proof}
Let $[t-\tau_1, t-\tau_2]$ be the boundaries of a piece in $p_{k-1}$. Since
there can be at most $\frac{1}{2}|\Sigma(l)|^{2}$ actions in the CTMG, we have
that optimum computed by Equation~\eqref{eq:approxk} must choose from at most
$\frac{1}{2}|\Sigma(l)|^2$ distinct polynomials of degree $k-1$. Since each pair
of polynomials can intersect at most~$k$ times, we have that~$p_k$ can have at
most $\frac{1}{2} \cdot k \cdot |\Sigma(l)|^2$ pieces for each location~$l$ in
the interval $[t- \tau_1, t-\tau_2]$. Since $p_{k-1}$ has~$c$ pieces in the
interval $[t-\epsilon, t]$, and $|L|$ locations, we have that~$p_k$ can have at
most $\frac{1}{2}\cdot c \cdot k \cdot |\locations| \cdot|\Sigma|^2$ during this
interval.
\end{proof}

\section{Proof of Theorem~\ref{theo:knets}}

\begin{proof}
We know that double $\epsilon$-nets can produce at most $|\Sigma|$ pieces per
interval, and therefore Lemma~\ref{lem:kswitchpoint} implies that triple
$\epsilon$-nets can produces at most $\frac{3}{2} \cdot |L| \cdot
|\Sigma|^3$ pieces per interval, and there are $T \cdot \sqrt[3]{\frac{T}{\pi}}$
many intervals. To compute each piece, we must sort $O(|\Sigma|)$ crossing
points, which takes time $O(|\Sigma|\log|\Sigma|)$. Therefore, the total amount
of time required to compute $p_3$ is $O(T \cdot \sqrt[3]{\frac{T}{\pi}} \cdot
|L| \cdot |\Sigma|^4 \cdot \log |\Sigma|)$.

For quadruple $\epsilon$-nets, Lemma~\ref{lem:kswitchpoint} implies that there
will be at most $6 \cdot |L|^2 \cdot |\Sigma|^5$ pieces per interval, and there
at most $T \cdot \sqrt[3]{\frac{T}{\pi}}$ many intervals. Therefore, we can
repeat our argument for triple $\epsilon$-nets to obtain an algorithm that runs
in time $O(T \cdot \sqrt[4]{\frac{T}{\pi}} \cdot
|L|^2 \cdot |\Sigma|^6 \cdot \log |\Sigma|)$
\end{proof}

\section{Collocation Methods for CTMDPs}

In the numerical evaluations of CTMCs, numerical methods like collocation
techniques play an important role. We briefly discuss the limits of these
methods when applied to CTMDPs, and in particular we will focus on the
Runge-Kutta method. On sufficiently smooth functions, the Runge-Kutta methods
obtain very high precision. For example, the RK4 method obtains a step error of
$O(\varepsilon^5)$ for each interval of length $\varepsilon$. However, these
results critically depend on the degree of smoothness of the functor describing
the dynamics.
To obtain this precision, the functor needs to be four times continuously
differentiable~\cite[p.157]{Hairer+Norsett+Wanner/92/SolvingODEsI}.
Unfortunately, the Bellman equations describing CTMDPs do not have this
property. In fact, the functor defined by the Bellman equations is not even once
continuously differentiable due to the $\inf$ and/or $\sup$ operators they
contain.

In this appendix we demonstrate on a simple example that the reduced precision is not merely a problem in the proof, but that the precision deteriorates once an $\inf$ or $\sup$ operator is introduced.
We then show that the effect observed in the simple example can also be observed in the Bellman equations on the example CTMDP from Figure~\ref{fig:game_example}.

Our exposition will use the notation given in
\url{http://en.wikipedia.org/wiki/Runge-Kutta\_methods} (accessed 08/04/2011).

\subsection{A Simplified Example}
Maximisation (or minimisation) in the functor that describes the dynamics of the system results in functors with limited smoothness, which breaks the proof of the precision of Runge-Kutta method (incl.\ Collocation techniques).
In order to demonstrate that this is not only a technicality in the proof of the quality of Runge-Kutta methods, we show on a simple example how the step precision deteriorates.

Using the notation of \url{http://en.wikipedia.org/wiki/Runge-Kutta\_methods}
(but dropping the dependency in $t$, that is $y'=f(y)$), consider a function
$y=(y1,y2)$ with dynamics---the functor $f$---defined by $y1'=\max\{0,y2\}$ and
$y2'=1$. Note that the functor $f$ is not partially differentiable at $(0,0)$ in
the second argument, see Figure~\ref{figure2}.

\begin{figure}%
\includegraphics[width=.49\columnwidth]{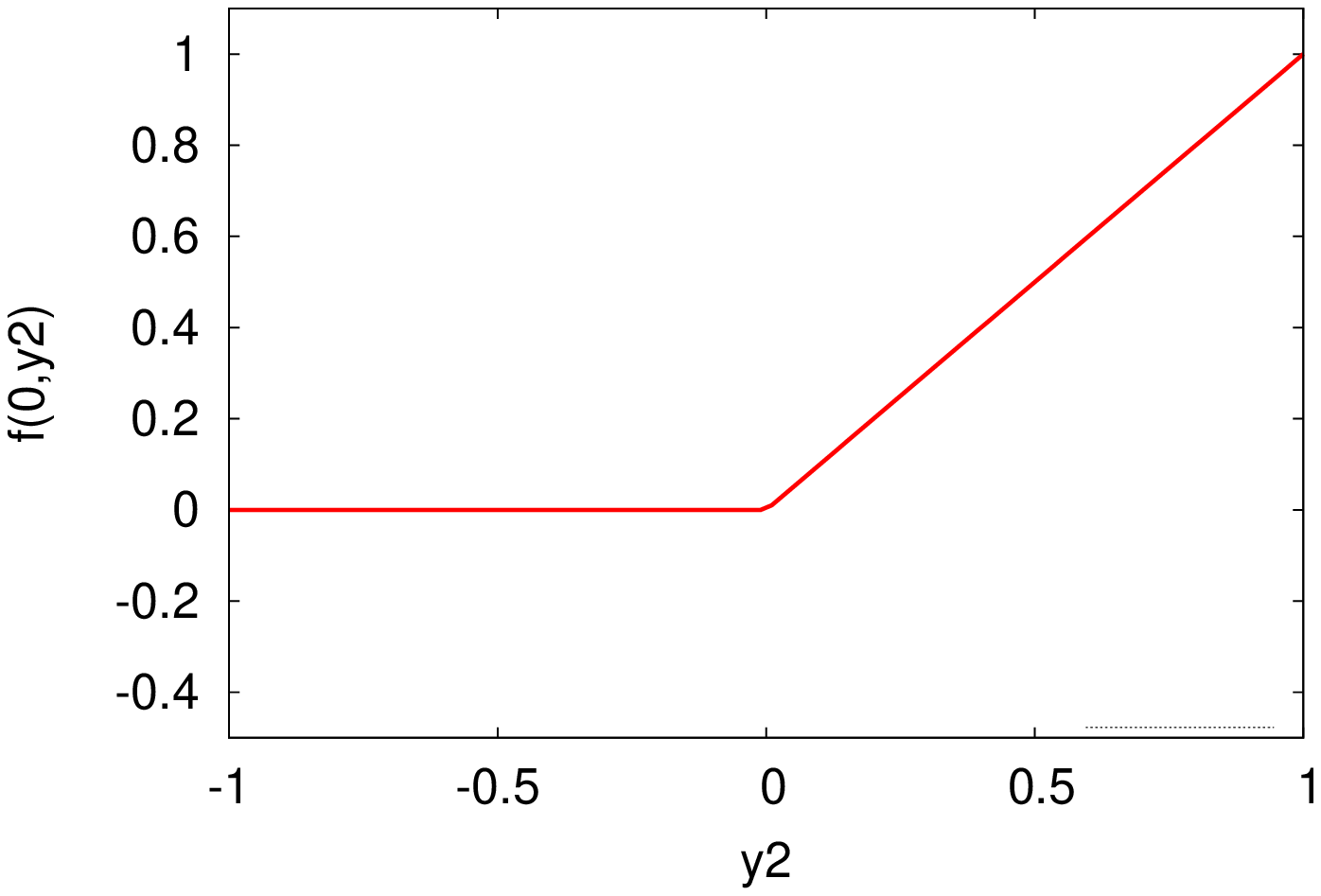}%
\includegraphics[width=.49\columnwidth]{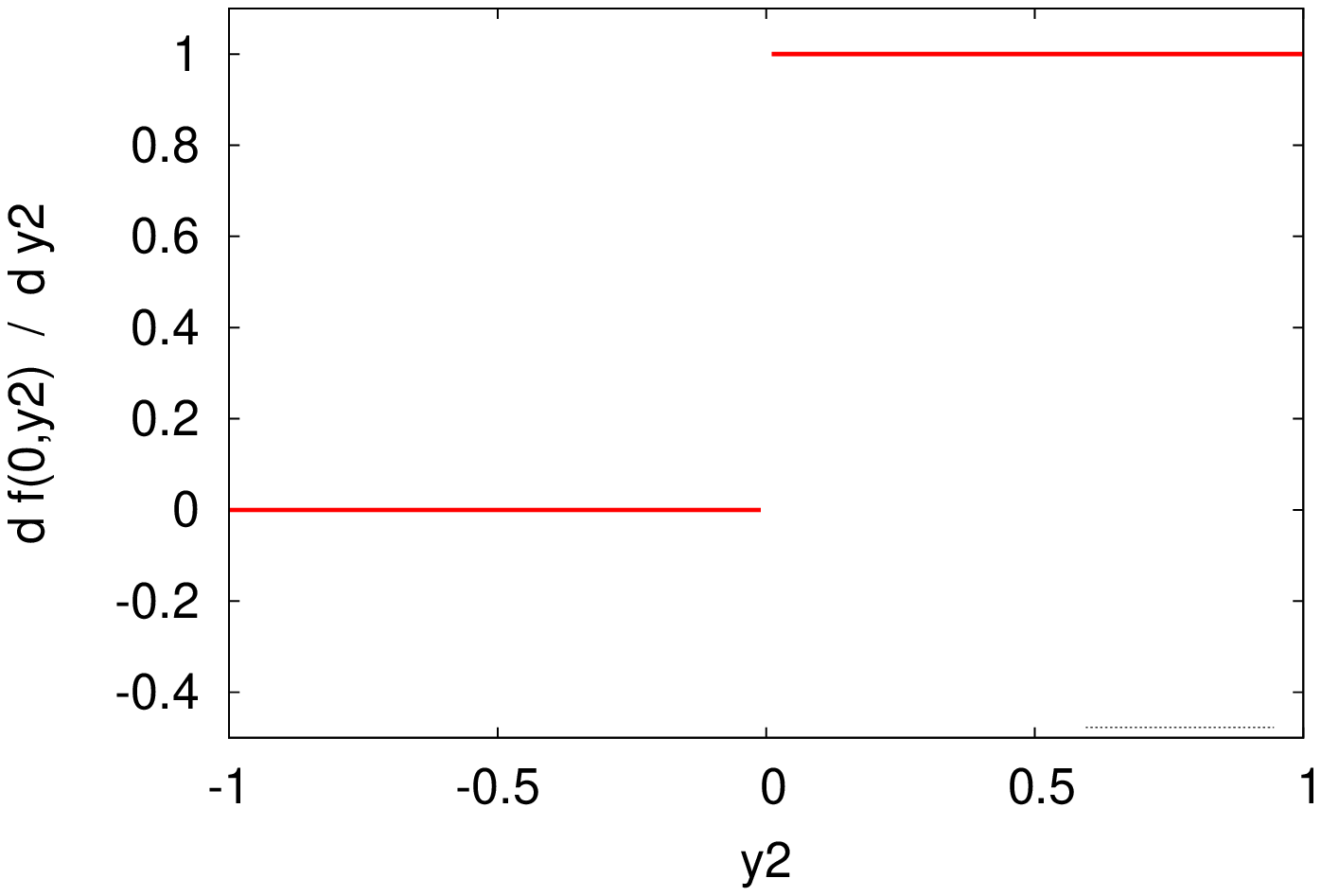}%
\caption{The left graph shows the variation of the first projection of the functor $f$ (that is, of $\max\{0,y2\}$) in the second argument (that is, of $y2$). The right graph shows the respective partial derivation in direction $y2$ on for the values on this line. In the origin (0,0) itself, $f$ is clearly not differentiable.}%
\label{figure2}%
\end{figure}

Let us study the effect this has on the Runge-Kutta method on an interval of size $h$, using the start value $y_n=(0,-\frac{1}{2}h)$.
Applying RK4, we get

\begin{figure}[b]%
\centering
\includegraphics[width=.6\columnwidth]{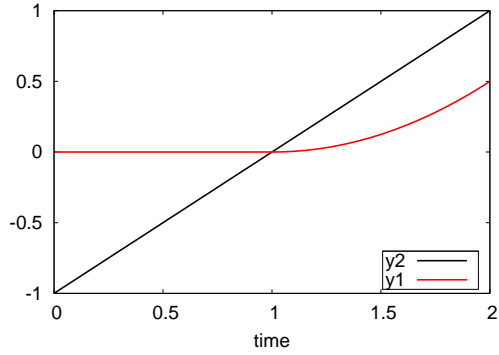}%
\caption{$y1$ and $y2$ from the solution of the ODE of the simplified example in the time interval $[0,2]$.}%
\label{figtwopaths}%
\end{figure}

\begin{itemize}
 \item $k_1=f\big((0,-\frac{1}{2}h)\big)=(0,1)$,
 \item $k_2=f\big((0,0)\big)=(0,1)$,
 \item $k_3=f\big((0,0)\big)=(0,1)$,
 \item $k_4=f\big((0,\frac{1}{2}h)\big)=(\frac{1}{2}h,1)$, and
 \smallskip
 \item $y_{n+1}=y_n + \frac{1}{6}h(k_1+2k_2+2k_3+k_4)=(h^2/12,h/2)$.
\end{itemize}

The analytical evaluation, however, provides ${\mathbf{(h^2/8,h/2)}}$  which differs from the provided result by $\mathbf{\frac{1}{24}h^2}$ in the first projection.  
Note that the expected difference in the first projection is in the order of $h^2$ if we place the point where $\max$ is in balance (the `swapping point' that is related to the point where optimal strategies change) uniformly at random at some point in the interval.

Still, one could object that we had to vary both the left and the right border of the interval. But note that, if we take the initial value
$y(0)=(0,-1)=y_0$, seek $y(2)$, and cut the interval into $2n+1$ pieces of equal length $h=\frac{2}{2n+1}$, then this is the middle interval.
(This family contains interval lengths of arbitrary small size.)

\subsection{Connection to the Bellman Equations}
The first step when applying this to the Bellman equations is to convince ourselves that their functor $F=\bigotimes_{l\in L} F_l$ with $F_l = \mathit{opt} \sum \ldots$ is indeed not differentiable. 
We use $g$ for the arguments of $F$ in order to distinguish it from the solution $f$, where $f(t)$ is the time-bounded reachability probability at time $t$.

For this, we simply re-use the example from Figure \ref{fig:game_example}.
The particular functor $F$ is not differentiable in the origin:
varying $F_{l_R}$ in the direction $g_l$ provides the function shown in Figure \ref{figure3}, showing that  $F_{l_R}$ is not differentiable in the origin.

(Due to the direction of the evaluation, this is the `rightmost' point where the
optimal strategy changes.)

\begin{figure}%
\includegraphics[width=.495\columnwidth]{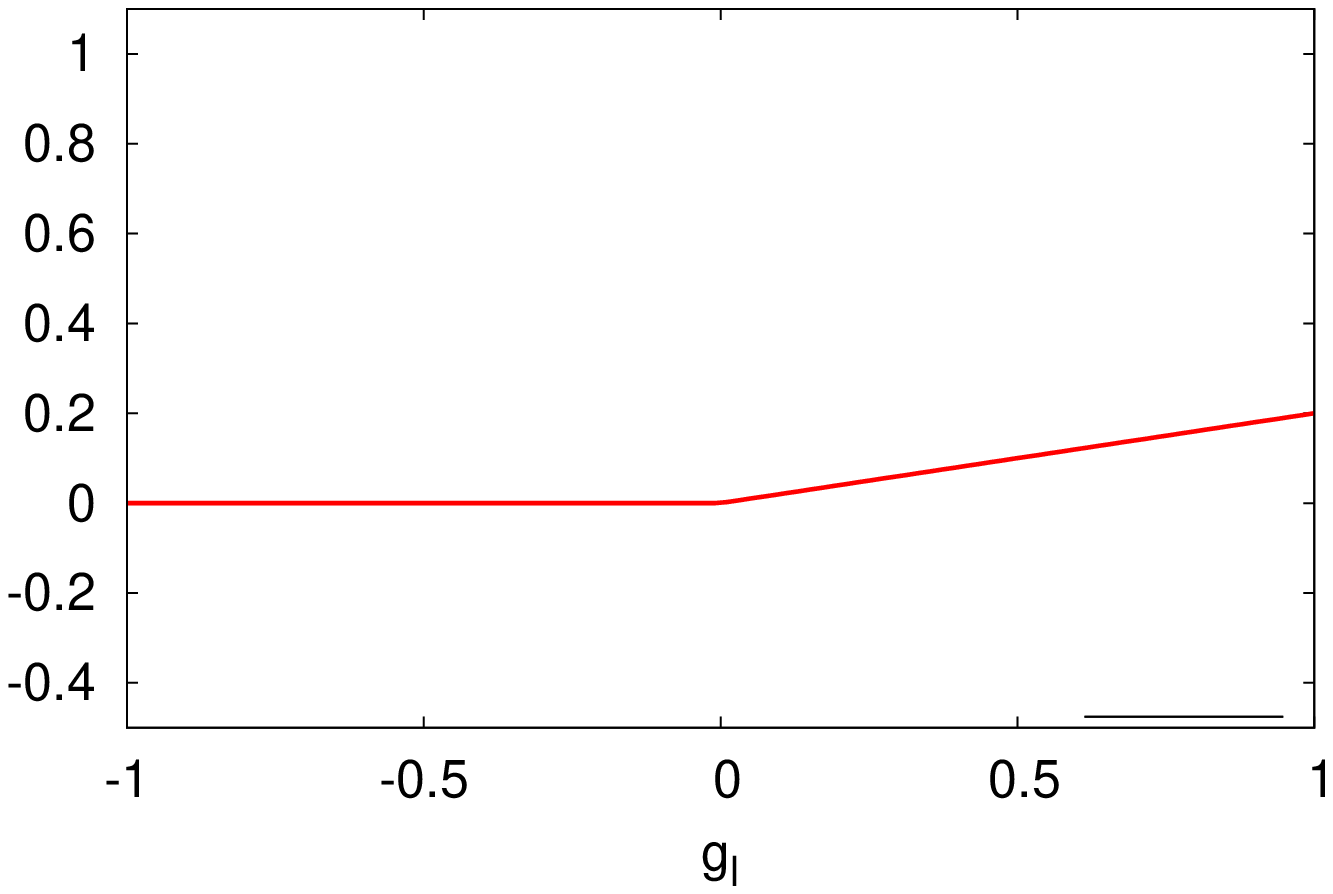}%
\includegraphics[width=.495\columnwidth]{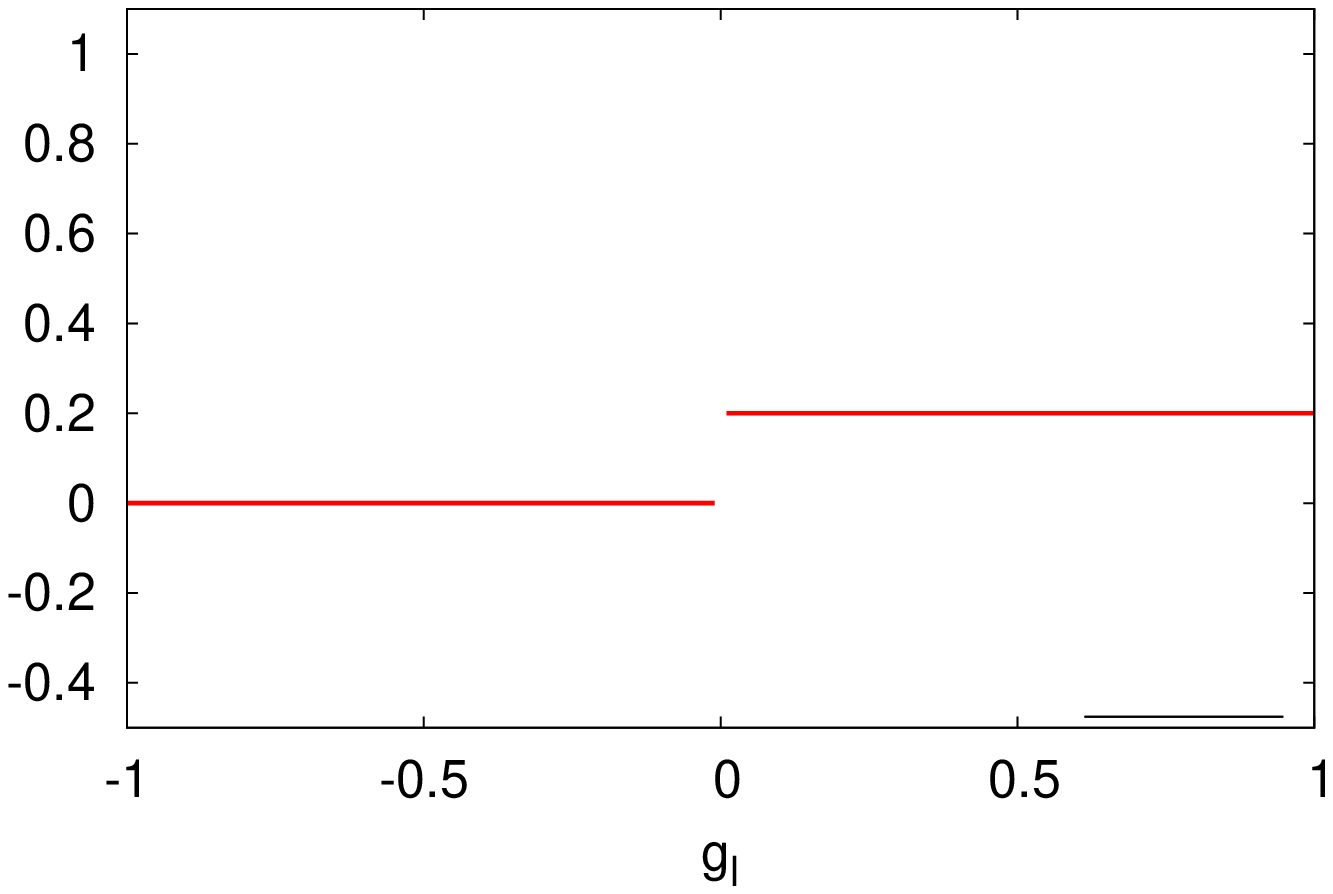}%
\caption{The left graph shows the variation of the first projection of the functor $F$ in the argument $g_l$ at the origin. The right graph shows the respective partial derivation in direction $g_l$ on for the values on this line. In the origin $\vec{0}$ itself, $F$ is clearly not differentiable.}%
\label{figure3}%
\end{figure}

Again, differentiating $F_{l_R}(f(t_1))$ in the direction $g_l$ provides a non-differentiable function. (In fact, a function similar to the function shown in Figure \ref{figure3}, but with adjusted $x$-axis.)

An analytical argument with $e$ functions is more involved than with the toy example from the previous subsection.
However, when the mesh length (or: interval size) goes towards $0$, then the 
ascent of the $e$ functions is almost constant throughout the mesh/interval.
In the limit, the effect is the same and the error in the order of $h^2$.

\end{document}